\title{Quantum Algorithms for Projection-Free Sparse Convex Optimization}
\author{Jianhao He\textsuperscript{ 1}\qquad John C.S. Lui\textsuperscript{ 1}\\
\vspace{-2mm} \\
\normalsize{\textsuperscript{1} The Chinese University of Hong Kong
}\vspace{-2mm}\\
\\
\normalsize{\texttt{jianhaohe9@cuhk.edu.hk\qquad cslui@cse.cuhk.edu.hk}}}
\definecolor{red}{HTML}{E51400}  
\definecolor{blue}{HTML}{0050EF} 
\definecolor{green}{HTML}{008A00} 
\definecolor{purple}{HTML}{AA00FF} 
\date{}
\begin{document}

\maketitle

\begin{abstract}
    This paper considers the projection-free sparse convex optimization problem for the vector domain and the matrix domain, which covers a large number of important applications in machine learning and data science. 
    For the vector domain $\cD \subset \R^d$, we propose two quantum algorithms for sparse constraints that finds a $\varepsilon$-optimal solution with the query complexity of $O(\sqrt{d}/\varepsilon)$ and $O(1/\varepsilon)$ by using the function value oracle, reducing a factor of $O(\sqrt{d})$ and $O(d)$ over the best classical algorithm, respectively, where $d$ is the dimension.
    For the matrix domain $\cD \subset \R^{d\times d}$, we propose two quantum algorithms for nuclear norm constraints that improve the time complexity to $\tilde{O}(rd/\varepsilon^2)$ and $\tilde{O}(\sqrt{r}d/\varepsilon^3)$ for computing the update step, reducing at least a factor of $O(\sqrt{d})$ over the best classical algorithm, where $r$ is the rank of the gradient matrix.
    Our algorithms show quantum advantages in projection-free sparse convex optimization problems as they outperform the optimal classical methods in dependence on the dimension $d$.
\end{abstract}

\section{Introduction}

In this paper, we consider the following \textit{constrained} optimization problem of the form 
\begin{equation}\label{eq:general_constraint_convex}
    \min_{\bx \in \cD}f(\bx),
\end{equation}
such objective covers many important application in operations research and machine learning.
We are interested in the case where 1) the objective function $f$ is convex and continuously differentiable, and 2) the domain $\cD \subset \R^d$ is a feasible set that is convex, and the dimension $d$ is high. 
Typical instances of such high-dimensional optimization problems include multiclass classification, multitask learning, matrix learning, network systems and many more \cite{garber2016linearly,hazan2012projection,hazan2012near,jaggi2013revisiting,dudik2012lifted,zhang2012accelerated,harchaoui2015conditional,hazan2016variance}.
As an example, for the matrix completion, the optimization problem to be solved is as follows:

\begin{align}\label{eq: matrix_completion_example}
    \min_{X \in \R^{m \times n}, \norm{X}_{\tr} \le r}\sum_{(i,j)\in \Omega}(X_{i,j}-Y_{i,j})^2,
\end{align}
where $X$ is the matrix to be recovered, $\Omega$ denotes the observed elements, $Y_{i,j}$ is the observed known value at position $(i,j)$, and $\norm{X}_{\tr}\le r$ represents the trace norm (nuclear norm) constraint.

Compared with unconstrained convex optimization problems, optimizing \cref{eq:general_constraint_convex} involves handling constraints, which introduces new challenges. 
A straightforward method for optimizing \cref{eq:general_constraint_convex} is the projected gradient descent approach \cite{levitin1966constrained}. 
This method first takes a step in the gradient direction and then performs the projection to satisfy the constraint. 
However, in practice, the dimensions of the feasible set can be very large, leading to prohibitively high computational complexity.
For example, when solving \cref{eq: matrix_completion_example}, the projection step involves performing a singular value decomposition (SVD), whose time complexity is $O(mn \min\{m,n\})$ ($O(d^3)$ for $X\in \R^{d \times d}$). 
Compared to the projected gradient descent approach, the Frank-Wolfe (FW) method (also known as the conditional gradient method) is more efficient when dealing with structured constrainted optimization problems.
Rather than performing projections, it solves a computationally efficient linear sub-problem to ensure that the solution lies within the feasible set $\cD$. 
When solving \cref{eq: matrix_completion_example}, the time complexity of the Frank-Wolfe method is $O(mn)$  ($O(d^2)$ for $X\in \R^{d \times d}$), which is significantly lower than the complexity of SVD-based projections. Since the Frank-Wolfe method is efficient for optimizing many difficult machine learning problems, such as low-rank constrained problems and sparsity-constrained problems, it has attracted significant attention and has been applied to solving \cref{eq:opt1} and many of its variants.

Despite the efficiency of FW in handling structured constraints, it still incurs significant computational overhead when dealing with high-dimensional problems. 
The bottleneck of the computation is the linear subproblem over $\cD$, which is either assumed to have efficient implementation or simply follows existing classical oracles, such as \cite{dunn1978conditional,jaggi2013revisiting,garber2016linearly}. 
The overhead of these oracles, however, grows linearly or superlinearly in terms of dimension $d$. 

Recently, quantum computing has emerged as a promising new paradigm to accelerate a large number of important optimization problems, e.g.,  
combinatorial optimization \cite{Grover1996,Ambainis2006,Durr2006,durr1996quantum,Mizel2009,Yoder2014,Sadowski2015,He2020}, 
linear programming \cite{KP18,Li2019,Van2019,apers2023quantum},
second-order cone programming \cite{Kerenidis2019s,Kerenidis2019Svm,Kerenidis2019P}, 
quadratic programming \cite{kerenidis2020quantum}, 
polynomial optimization \cite{Rebentrost2019},
semi-definite optimization \cite{KP18,Joran19,Fernando17,Fernando19,Joran17}, 
convex optimization \cite{van2020convex, chakrabarti2020quantum,zhang2024quantum}, 
nonconvex optimization \cite{zhang2023quantum,chen2025quantum}, 
stochastic optimization \cite{sidford2023quantum}
online optimization \cite{he2022quantum,he2024quantum,lim2022quantum}, 
multi-arm bandit \cite{casale2020,wang2020,li2022quantum,wan2023quantum}.
We aim to take a thorough investigation on whether quantum computing can accelerate FW algorithms, and in particular the linear sub-problem over structured constraints regarding dimension $d$. We aim to answer the following question:

\textit{Can one utilize quantum techniques to accelerate Frank-Wolfe algorithms in terms of dimension $d$?}

Chen et al. gave an initial answer to this question \cite{chen2021quantum}. 
They considered the linear regression problem with explicit functional form where the closed form of gradient is provided.
Given the precomputed matrix factors of the closed-form objective function stored in specific data structures, they
leveraged HHL-based algorithms to accelerate matrix multiplications in calculating the closed-form gradient, leading to a upper bound of $O\left(\sqrt{d}/\varepsilon^2\right)$.
In this work, we consider a more general problem where the objective function is a smooth convex function accessible only through a function value oracle, and then we consider a more general constraint conditions (\textit{the latent group norm ball})  to enhance the theoretical framework's applicability.
Besides, we also consider the case of matrix feasible set, under different assumptions. 
To our best knowledge, we are the first one to consider accelerating the matrix case of the FW algorithm by quantum computing.

\textbf{Contributions.} We give a systematic study on how to accelerate FW algorithms when $\cD$ is either a vector domain $\R^d$, or a matrix domain $\R^{d \times d}$ subject to various structured constraints.
Note that our findings can be applied to non-square matrices, we express our results using square matrices for simplicity of presentation (Remark \ref{remark:square}).
We summarize our contributions as follows.

For the vector domain $\cD \subset \R^d$:

\begin{itemize}
    \item We propose the quantum Frank-Wolfe algorithm for the projection-free sparse convex optimization problem under $\ell_1$ norm constraints (Theorem \ref{Theorem:QFWSC}) and the $d$-dimensional simplex $\Delta_d$ (Theorem \ref{Theorem:QFWS}). We achieve a query complexity of $\widetilde{O}(\sqrt{d}/\varepsilon)$ in finding an $\varepsilon$-optimal solution using the function value oracle, reducing a factor of $O(\sqrt{d})$ over the optimal classical algorithm. Furthermore, if the objective function is a Lipschitz continuous function, we prove that the query complexity can be reduced to $O(1/\epsilon)$ by employing the bounded-error Jordan quantum gradient estimation algorithm, at the cost of more qubits and additional gates (Theorem \ref{Theorem:QFWJ}). In addition, we consider the generalization to latent group norm constraints (Theorem \ref{Theorem:QFWLG}) and achieve a query complexity of $\widetilde{O}\left(\sqrt{\abs{\+G}}\abs{\*g}_{\max}\right)$, representing an $O\left(\sqrt{\abs{\+G}}\right)$ speedup over the classical algorithm. These results are presented in Section \ref{sec:QFW}, Appendix \ref{sec:QFWJ} and \ref{sec:extension}. The comparison with the classical methods is shown in Table \ref{table:resultV}.
    \item Specifically, we construct efficient quantum subroutines to find the maximum gradient component for different constraints and show that it is robust to errors: even if we fail to select the greatest gradient direction, as long as the gradient component of the chosen direction has a bounded error relative to the largest gradient component, convergence can still be ensured by selecting appropriate parameters. 
\end{itemize}

For the matrix domain $\cD \subset \R^{d\times d}$:

\begin{itemize}
    \item We propose two quantum Frank-Wolfe algorithms for the projection-free sparse problem under nuclear norm constraints.
    For finding an $\varepsilon$-optimal solution, we achieve a time complexity of $\tilde{O}(rd/\varepsilon^2)$ (Theorem \ref{theorem:qfwqsvd}) and $\tilde{O}(\sqrt{r}d/\varepsilon^3)$ (Theorem \ref{theorem:qfwqpm}) in computing the update direction, representing an at least $O(\sqrt{d})$ speedup over state-of-the-art classical algorithm, where $r$ is the rank of the gradient matrix.
    Notably, this analysis focuses on the update direction computation and assumes that the gradient has been pre-computed and stored in the memory (Remark \ref{remark:noG}), following the classical convention of excluding gradient evaluation time \cite{jaggi2013revisiting}.
    These results are presented in Section \ref{Sec:QFWM} and the comparison with the classical methods is shown in Table \ref{table:resultM}.
    \item Specifically, in the first algorithm, we simplify the top-$k$ singular vectors extraction method \cite{bellante2022quantum} by utilizing the quantum maximum finding algorithm, which avoids the overheads of repeated sampling to estimate the factor score ratio, and avoids the overheads of searching the threshold value.
    In the second algorithm, we introduce the quantum power method to extract the top singular vectors, which reduces the dependence on the rank of the gradient matrix, at the cost of higher sensitivity on solution precision.
\end{itemize}

We notice an independent simultaneous work on the quantum power method \cite{chen2025quantumQPM}, whose second algorithm shares a conceptual similarity with our second approach: both iteratively apply quantum matrix-vector multiplication. 
However, there are several technical distinctions between the algorithmic frameworks. 
For instance: \cite{chen2025quantumQPM} requires an additional sparse-query access to the matrix as input; 
the complexity dependence in \cite{chen2025quantumQPM} involves the eigenvalue gap, whereas our method relies on some distinct factors; 
the error forms of the outputs are different. 
We believe that the second algorithm in \cite{chen2025quantumQPM} can be adapted to accelerate the Frank-Wolfe algorithm through appropriate modifications, though we reserve this extension for future investigation.

The remainder of this paper is organized as follows.
Section \ref{sec:pre} introduces the basic concept of constrained optimization and the classical Frank-Wolfe algorithm. Appendix \ref{sec:quantumBase} introduces the notations and assumptions of quantum computing.
Section \ref{sec:QFW} and \ref{Sec:QFWM} presents our quantum FW methods for vector domain and matrix domain, respectively.
Extension for the vector cases are presented in Appendix \ref{sec:QFWJ} and \ref{sec:extension}.
Extended related works are presented in Appendix \ref{sec:relatedWorks}, and we conclude with a discussion about the future work in Section \ref{sec:conclude}.
Proof details are given in Appendix \ref{lab:SecProof}.

\begin{table*}[t]
        \setlength{\abovecaptionskip}{0.1cm}
        \setlength{\belowcaptionskip}{-0.1cm}
	\centering
    \caption{Classical algorithms V.S. quantum algorithms of the vector case, where $C_f$ is the curvature of the objective function $f$, $\varepsilon$ is the precision of the solution, $d$ is the dimension of the domain, $G$ is the Lipschitz parameter of the objective function, $p$ is the failure probability.}	
	\resizebox{\linewidth}{!}{
	\centering
	\begin{threeparttable}
	\begin{tabular}{|cc|c|c|c|c|c|c|}
 \hline
		\textbf{Optimization Domain}& \textbf{Constraints}&\textbf{Algorithm}& \textbf{Iteration} & \textbf{Query complexity} & \textbf{Qubits} & \textbf{Gates}\\
\hline
     Sparse Vectors   & $\norm{\cdot}_1$-ball &  FW \cite{jaggi2013revisiting}    & $O(C_f/\varepsilon)$ & $O(d)$ &  & \\

          && \textbf{QFW (Theorem \ref{Theorem:QFWSC})} & $O(C_f/\varepsilon)$ & $O(\sqrt{d} \log{(C_f/p\varepsilon)} )$ & $O\left(d+\log{\frac{1}{\varepsilon}}\right)$ & $O(\sqrt{d})$\\

          && \textbf{QFW (Theorem \ref{Theorem:QFWJ})} & $O(C_f/\varepsilon)$ & $O(1)$ & $O\left(d \log{\frac{Gd}{\rho\varepsilon}}\right)$ & $O(d\log{d})$\\
          \hline
     Sparse non-neg. vectors & Simplex $\Delta_d$  &   Frank-Wolfe \cite{jaggi2013revisiting} & $O(C_f/\varepsilon)$ & $O(d)$ &  &\\
          & & \textbf{QFW (Theorem \ref{Theorem:QFWS})}  &   $O(C_f/\varepsilon)$ & $O(\sqrt{d} \log{(C_f/p\varepsilon)})$ & $O\left(d+\log{\frac{1}{\varepsilon}}\right)$ & $O(\sqrt{d})$\\
    
          && \textbf{QFW (Theorem \ref{Theorem:QFWJ})} & $O(C_f/\varepsilon)$ & $O(1)$ & $O\left(d \log{\frac{C_fGd}{p\varepsilon}}\right)$ & $O(d\log{d})$\\
      
           \hline
      Latent group sparse vectors    &   $\norm{\cdot}_{\cG}$-ball & FW \cite{jaggi2013revisiting} & $O(C_f/\varepsilon)$ & $O(\sum_{g \in \cG}|g|)$ & &\\
      &   & \textbf{QFW (Theorem \ref{Theorem:QFWLG})} & $O(C_f/\varepsilon)$ & $O\left(\sqrt{\abs{\+G}}\abs{\*g}_{\max}\log{(C_f/p\varepsilon)}\right)$ & & \\
     
      \hline
	\end{tabular}
	  \begin{tablenotes}[para, online,flushleft]
	\end{tablenotes}
			\end{threeparttable}}

 \label{table:resultV}
\end{table*}

\begin{table*}[ht]
	\centering
     \caption{Classical algorithms V.S. quantum algorithms of the matrix case, where $C_f$ is the curvature of the objective function $f$, $\varepsilon$ is the precision of the solution, $d$ is the dimension of the domain, $T_{\nabla}$ is the times required to evaluate $\nabla f$; $\sigma_1(M)$ and  is the largest and the second largest singular value, respectively; $r$ is the rank of the gradient matrix; ${\gamma'}_{\min}$ is a factor which depends on the relation of the singular value distribution of the gradient matrix and the direction of the initial vector.}	
	\resizebox{\linewidth}{!}{
	\centering
	\begin{threeparttable}
	\begin{tabular}{|cc|c|c|c|c|}
 \hline
		\textbf{Domain}& \textbf{Constraints}&\textbf{Algorithm}& \textbf{Iteration} & \textbf{Complexity of the Update Computing} \\
\hline
     Sparse Matrices    & $\norm{\cdot}_{tr}$-ball & FW with Power Method \cite{jaggi2013revisiting} & $O(C_f/\varepsilon)$ & $O\left(\frac{\sigma_1(M)d^2}{\varepsilon}+T_{\nabla}\right)$ \\
    &  & FW with Lanczos Method \cite{jaggi2013revisiting}  & $O(C_f/\varepsilon)$ & $O\left(\frac{\sqrt{\sigma_1(M)}d^2}{\sqrt{\varepsilon}}+T_{\nabla}\right)$\\
    &  & \textbf{FW with QTSVE (Theorem \ref{theorem:qfwqsvd})} & $O(C_f/\varepsilon)$ & $\tilde{O}\left( \frac{r \sigma_1^3(M) d}{ (\sigma_1(M)-\sigma_2(M)) \varepsilon^2}+ T_{\nabla}\right)$\\
        &  & \textbf{FW with QPM (Theorem \ref{theorem:qfwqpm})} & $O(C_f/\varepsilon)$ & $\tilde{O}\left(\frac{\sqrt{r}\sigma_1^4(M)d}{(1-\sigma_1(M)){\gamma'}_{\min}^3\varepsilon^3}+T_{\nabla}\right)$\\
      \hline
	\end{tabular}
	  \begin{tablenotes}[para, online,flushleft]
	\end{tablenotes}
			\end{threeparttable}}
 \label{table:resultM}
\end{table*}

\section{Preliminaries}
\label{sec:pre}

\subsection{Notations and Assumptions for Constrained Optimization Problem}

We consider constrained convex optimization problems of the form
\begin{align}\label{eq:opt1}
    \min_{\bx \in \cD} f(\bx),
\end{align}
where $\bx \in \R^d$, $f: \R^d \rightarrow \R$, and $\cD \subseteq \R^d$ is the constraint set. In addition, as usually in constrained convex optimization, we also make the following assumptions:
\begin{assumption}\label{asmp:func}
$f$ is convex and $L$-smooth, i.e., the gradient of $f$ satisfies $\norm{\nabla f(\bx)-\nabla f(\by)}_2 \le L\norm{\bx-\by}_2$ for any $\bx, \by \in \R^d$.

\end{assumption}

\begin{assumption}\label{asmp:constraint}
    $\cD$ is compact and convex, and the diameter of $\cD$ has an upper bound $D$, i.e., $\forall{x,y \in \mathcal{K}}, \|x-y\|_2 \leq D$. 
\end{assumption}

Typically, solving $\argmin_{\bx \in \cD} \bx^{\top} \by$ for any $\by \in \R^d$, is much faster than the projection operation onto $\cD$ (i.e., solving $\argmin_{\bx \in \cD}\norm{\bx-\by}$).
Examples of such domains include the set of sparse vectors, bounded norm matrices, flow polytope and many more \cite{hazan2012projection}.
Therefore, for such domains, the basic idea of the Frank-Wolfe algorithm is to replace the projection operation with a linear optimization problem.

In the design and analysis of the Frank-Wolfe algorithm, one important quantity is the curvature $C_f$, which measures the ``non-linearity" of $f$ and is defined as follows,
\begin{align}
    C_f =  \sup_{\bx, \bs \in \cD,\beta \in [0,1], \by=\bx+\beta(\bs-\bx) } \frac{2}{\beta^2} \times 
    \left(f(\by)-f(\bx)  -\langle \by- \bx, \nabla f(\bx)\rangle\right).
\end{align}
By Lemma 7 of \cite{jaggi2013revisiting}, the curvature can be bounded as $C_f \le L D^2$.

\subsection{Classical Frank-Wolfe Algorithm}

The classical Frank-Wolfe algorithm is given in \Cref{alg:cls_fw}.                                                               
The key step is the linear subproblem of \cref{eq:classic_linear program} which seeks an approximate minimizer in $\cD$ of $\inner{\bs, \nabla f(\bx^{(t)})}$. 
Classically, the per-step cost is $O(N)$ where $N$ is the number of elements that need to be searched which introduces a large $O(N)$ cost. In this work, we will show that only $O(\sqrt{N})$ quantum queries to solve this linear subproblem. 

\begin{algorithm}[t]
	\caption{Classical Frank-Wolfe Algorithm with Approximate Linear Subproblems}\label{alg:cls_fw}
			\resizebox{1.0\columnwidth}{!}{
\begin{minipage}{\columnwidth}
	\begin{algorithmic}[1]
	       \State {\textbf{Input:}} Solution precision $\varepsilon$, iterations $T$. 
            \State {\textbf{Output:}} $\bx^{(T)}$ such that $f(\bx^{T})-f(\bx^*)\le\varepsilon$.

	   \State \textbf{Initialize:} Let $\bx^{(1)} \in \cD$. 
	   \For{$t=1, ...,T$ }
	   \State Let $\gamma_t=\frac{2}{t+2}$.
	   \State Find direction $\bs \in \cD$ such that
	   \begin{align}
	       \langle{\bs, \nabla f(\bx^{(t)})}\rangle \le \min_{\hat{\bs} \in \cD}\langle{\hat{\bs}, \nabla f(\bx^{(t)})\rangle}+\frac{\delta}{2}\gamma_t C_f.\label{eq:classic_linear program}
	   \end{align}
	   \State Update $\bx^{(t+1)}=(1-\gamma_t) \bx^{(t)}+\gamma_t \bs$.
	    	   \EndFor
		\end{algorithmic}
		\end{minipage}}
\end{algorithm}

\begin{algorithm}[t]
	\caption{Quantum Frank-Wolfe Algorithm for Sparsity/Simplex Constraint}\label{alg:q_fw}
			\resizebox{1.0\columnwidth}{!}{
\begin{minipage}{\columnwidth}
	\begin{algorithmic}[1]
	    \State {\textbf{Input:}} Solution precision $\varepsilon$, gradient precision $\{\sigma_t\}_{t=1}^{T}$.
            \State {\textbf{Output:}} $\bx^{(T)}$ such that $f(\bx^{T})-f(\bx^*)\le\varepsilon$.
	   \State \textbf{Initialize:} Let $\bx^{(1)} \in \cD$. 
       \State Let $T=\frac{4C_f}{\varepsilon}-2$.
	   \For{$t=1, ...,T$ }
    \State Let $\gamma_t=\frac{2}{t+2}$.
    \State Prepare quantum state $\sum_{i=0}^{d-1}\ket{i}\ket{\bx^{(t)}}\ket{0}$.
    \State Perform quantum gradient circuit (Lemma \ref{Lemma:QGC}) to get $\sum_{i=0}^{d-1}\ket{i}\ket{\bx^{(t)}}\ket{\frac{f(\bx^{(t)}+\sigma_t \be_i)-f(\bx^{(t)})}{\sigma_t}}$.
    \State Apply quantum maximum finding to the absolute value of the third register (to the third register directly for the simplex constraint, respectively) (Lemma \ref{Lemma:QMF}), and then measure the first register to obtain measurement result $i_t$.
    \State Set $\bs=-\be_{i_t}$. Update $\bx^{(t+1)}=(1-\gamma_t) \bx^{(t)}+\gamma_t \bs$.
	    	   \EndFor
		\end{algorithmic}
		\end{minipage}}
\end{algorithm}

\begin{lemma}\label{lem:conv_rate}[\cite{jaggi2013revisiting}, Theorem 1] For each $t\ge 1$, the iterates of \Cref{alg:cls_fw} satisfy
\begin{align}
    f(\bx^{(t)})-f(\bx^*) \le \frac{2C_f}{t+2} (1+\delta),
\end{align}
where $x^*$ is the optimal solution to \Cref{eq:opt1}, and $\delta$ is the solution quality to which the internal linear subproblems are solved. In other words, we can use $O(\frac{(1+\delta)C_f}{\varepsilon})$ iterations to have a $\varepsilon$-opt solution.
\end{lemma}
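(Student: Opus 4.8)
The plan is to prove a one-step improvement bound and then close the argument by induction on $t$. First I would invoke the definition of the curvature $C_f$ with the choices $\bx = \bx^{(t)}$, $\beta = \gamma_t$, and $\by = \bx^{(t+1)} = \bx^{(t)} + \gamma_t(\bs - \bx^{(t)})$, which directly yields the quadratic upper model
\begin{align}
    f(\bx^{(t+1)}) \le f(\bx^{(t)}) + \gamma_t \langle \bs - \bx^{(t)}, \nabla f(\bx^{(t)})\rangle + \frac{\gamma_t^2}{2} C_f. \notag
\end{align}
This is precisely why the curvature is defined the way it is, so no smoothness estimate beyond the definition is required at this step.

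Next I would control the linear term using the approximate subproblem guarantee and convexity. Since $\bx^* \in \cD$ is feasible, the exact minimum of $\langle \hat{\bs}, \nabla f(\bx^{(t)})\rangle$ over $\cD$ is at most $\langle \bx^*, \nabla f(\bx^{(t)})\rangle$; combined with the $\tfrac{\delta}{2}\gamma_t C_f$-approximate guarantee of the linear subproblem, this gives $\langle \bs, \nabla f(\bx^{(t)})\rangle \le \langle \bx^*, \nabla f(\bx^{(t)})\rangle + \frac{\delta}{2}\gamma_t C_f$. Feeding this into the improvement bound and then using convexity in the form $\langle \bx^* - \bx^{(t)}, \nabla f(\bx^{(t)})\rangle \le f(\bx^*) - f(\bx^{(t)})$ turns the linear term into the optimality gap. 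Writing $h_t := f(\bx^{(t)}) - f(\bx^*)$, this yields the recursion
\begin{align}
    h_{t+1} \le (1-\gamma_t) h_t + \frac{\gamma_t^2}{2} C_f (1+\delta). \notag
\end{align}

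Finally I would run an induction to establish $h_t \le \frac{2C_f(1+\delta)}{t+2}$. Setting $C := C_f(1+\delta)$ and substituting $\gamma_t = \frac{2}{t+2}$, the inductive step collapses to checking $\frac{2C(t+1)}{(t+2)^2} \le \frac{2C}{t+3}$, i.e.\ the elementary inequality $(t+1)(t+3) \le (t+2)^2$, which holds with slack $1$. The iteration-count claim then follows by forcing $\frac{2C_f(1+\delta)}{t+2} \le \varepsilon$, giving $t = O\!\left(\frac{(1+\delta)C_f}{\varepsilon}\right)$.

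The step I expect to be the real obstacle is the base case of the induction rather than the inductive step, which is routine algebra. The stated bound cannot hold at an arbitrary initial iterate, so the argument hinges on the convention that the very first Frank-Wolfe step is a full step (step size $1$, landing on a vertex of $\cD$): the recursion at that step reads $h \le \frac{1}{2}C \le \frac{2}{3}C$, which is safely within the claimed rate. I would therefore fix the indexing convention explicitly (initial point $\bx^{(0)}$ with $\gamma_t = 2/(t+2)$, so that $\gamma_0 = 1$) before launching the induction, so that the base case is consistent with the recursion. With the curvature definition and convexity in hand, everything else reduces to the two displayed estimates above.
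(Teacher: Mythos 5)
Your proposal is correct and follows essentially the same route the paper relies on: the paper cites this result from Jaggi (Theorem 1) without reproving it, but its own proofs of the analogous quantum statements (e.g.\ Theorem \ref{Theorem:QFWSC} together with Lemma \ref{Lemma:induction}) use exactly your argument --- the curvature-based one-step bound, the approximate-subproblem guarantee plus convexity to convert the linear term into $-h_t$, and the induction with $\gamma_t = 2/(t+2)$ reducing to $(t+1)(t+3)\le(t+2)^2$. Your observation about the base case (the first step being a full step with $\gamma_0=1$) is the right way to reconcile the arbitrary initialization with the claimed rate, and matches how the paper's Lemma \ref{Lemma:induction} anchors its induction at $t=0$.
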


\section{Quantum Frank-Wolfe Algorithms over Vectors}\label{sec:QFW}
\subsection{Quantum Frank-Wolfe with Sparsity Constraints}
\label{subsec:QFWSC}
We first consider the optimization problem
\begin{align}\label{eq:l1_constraint_convex}
    \min f(\bx), \text{ s.t. } \bx \in \R^d, \norm{\bx}\le 1,
\end{align}
where the sparsity constraint $\cD=\{\bx\in \R^d: \norm{\bx}_1\le 1\}$. 

For the $\ell_1$ norm problem,  the exact minimizer (i.e, corresponding to $\delta = 0$) of  \cref{eq:classic_linear program} is $\hat{\bs}=-\be_{i_t}$ with 
\begin{align}
    i_{t}&\in \argmax_{i \in [d]}|\nabla_if(\bx^{(t)})|,
\end{align}
i.e., it is a coordinate corresponding to the largest absolute value of the gradient component.

Our approach will be to construct an approximate quantum maximum gradient component finding algorithm to find such an $i_t$.

\textbf{Quantum access model $\bU_f$.}
In this subsection, 
we assume that the value of the loss function is accessed via a function value oracle as shown in Assumption \ref{asmp:oracle}.

\begin{assumption}\label{asmp:oracle}
There is a unitary $\bU_f$ that, in time $T_f$, returns the function value, i.e.,  $\bU_f:\ket{\bx}\ket{a}\rightarrow \ket{\bx}\ket{a + f(\bx)}$, for any $a$, where $\ket{\bx}\defeq \ket{x_1}\ket{x_2}...\ket{x_d}$.
\end{assumption}

\textbf{Quantum gradient circuit.}
Next, we present a general unitary $U_g$ to approximate the gradient $\nabla f(\bx_t)$. Specifically, we use the forward difference $g_i(\bx_t)=\frac{f(\bx_t+\sigma \be_i)-f(\bx_t)}{\sigma}$ to approximate each item of $\nabla_i f(\bx_t)$ with $\ell_{\infty}$ error $\varepsilon_{g}$, i.e., $\norm{\nabla f(\bx_t)-g(\bx_t)}_{\infty} \le \varepsilon_{g}$, where $\sigma$ is the tunable parameter for the desired accuracy.

\begin{lemma}[Theorem 3.1 \cite{berahas2022theoretical}]
\label{Lemma:FDM}
Under \Cref{asmp:func}, let $g_i(\bx)=\frac{f(\bx+\sigma \be_i)-f(\bx)}{\sigma}$, then for all $\bx \in \R^d$,
\begin{align}
    \norm{g(\bx)-\nabla f(\bx)}_2 \le \frac{\sqrt{d}L\sigma}{2}.
\end{align}
\end{lemma}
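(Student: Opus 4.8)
The plan is to establish the bound coordinate-by-coordinate and then assemble the $d$ per-component estimates into the claimed $\ell_2$ bound. The key observation is that the forward difference quotient $g_i(\bx)$ is precisely the average of the partial derivative $\nabla_i f$ along the segment from $\bx$ to $\bx + \sigma\be_i$, so the error $g_i(\bx) - \nabla_i f(\bx)$ measures how much this derivative varies over a displacement of length $\sigma$ — exactly the quantity controlled by $L$-smoothness in \Cref{asmp:func}.

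First I would rewrite the finite difference via the fundamental theorem of calculus applied to the one-dimensional map $t \mapsto f(\bx + t\sigma\be_i)$, giving
\begin{align}
    f(\bx + \sigma\be_i) - f(\bx) = \sigma \int_0^1 \nabla_i f(\bx + t\sigma\be_i)\, dt,
\end{align}
so that dividing by $\sigma$ and subtracting $\nabla_i f(\bx)$ yields $g_i(\bx) - \nabla_i f(\bx) = \int_0^1 \left( \nabla_i f(\bx + t\sigma\be_i) - \nabla_i f(\bx) \right) dt$.

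Next I would bound the integrand. Since any single partial derivative is dominated by the full gradient norm, we have $|\nabla_i f(\bx + t\sigma\be_i) - \nabla_i f(\bx)| \le \norm{\nabla f(\bx + t\sigma\be_i) - \nabla f(\bx)}_2$, and $L$-smoothness bounds the right-hand side by $L\norm{t\sigma\be_i}_2 = Lt\sigma$. Integrating over $t \in [0,1]$ then gives $|g_i(\bx) - \nabla_i f(\bx)| \le L\sigma \int_0^1 t\, dt = \tfrac{L\sigma}{2}$ for each coordinate $i$. Finally I would sum the squared per-coordinate errors, $\norm{g(\bx) - \nabla f(\bx)}_2^2 = \sum_{i=1}^d |g_i(\bx) - \nabla_i f(\bx)|^2 \le d\left(\tfrac{L\sigma}{2}\right)^2$, and take square roots to reach $\tfrac{\sqrt{d}L\sigma}{2}$.

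The computation is almost entirely routine, so I do not anticipate a genuine obstacle. What little care is needed sits in the first step: justifying the integral representation requires $\nabla f$ to be continuous, which is guaranteed here because $L$-smoothness makes the gradient Lipschitz. It is also worth emphasizing that the $\sqrt{d}$ factor is an artifact of aggregating $d$ identical per-coordinate bounds through the $\ell_2$ norm, not of any single directional estimate — which is exactly why the later quantum maximum-finding subroutine, operating on the $\ell_\infty$ error, can sidestep this dimensional penalty.
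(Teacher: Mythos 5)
The paper imports this lemma from the cited reference without reproducing a proof, so there is no in-paper argument to compare against; your derivation is correct and follows the standard route used there. The per-coordinate estimate $|g_i(\bx)-\nabla_i f(\bx)|\le \tfrac{L\sigma}{2}$ — which the cited source obtains from the quadratic upper bound $|f(\bx+\sigma\be_i)-f(\bx)-\sigma\nabla_i f(\bx)|\le \tfrac{L}{2}\sigma^2$ and which your integral-remainder computation simply re-derives from first principles — followed by $\ell_2$ aggregation over the $d$ coordinates is exactly the intended argument.
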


\begin{restatable}{lemma}{LemmaQGC}
\label{Lemma:QGC}
    Given access to the quantum function value oracle $\bU_f$, there exists a quantum circuit to construct a quantum error bounded gradient oracle $\bU_g: \ket{i}\ket{\bx}\ket{0}\rightarrow \ket{i}\ket{\bx}\ket{g_i(\bx)}$, where $g_i(\bx)=\frac{f(\bx+\sigma \be_i)-f(\bx)}{\sigma}$ is the $i$-th component of the gradient and $\sigma$ is the tunable parameter, with two queries to the quantum function value oracle.
\end{restatable}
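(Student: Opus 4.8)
The plan is to realize $\bU_g$ by sandwiching two calls to the value oracle around reversible arithmetic that (i) selects the shifted evaluation point $\bx + \sigma\be_i$ and (ii) combines the two function values into the forward difference $g_i$. The key observation is that the only steps consulting $\bU_f$ are the two function evaluations: the coordinate shift and the final rescaling are in-place classical-reversible arithmetic requiring no oracle access, so the query count will be exactly two.

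Concretely, I would work with three registers $\ket{i}\ket{\bx}\ket{a}$, where the last register is initialized to $\ket{0}$ and will accumulate the output. First I apply $\bU_f^\dagger$ to the pair $(\ket{\bx},\ket{a})$, producing $\ket{i}\ket{\bx}\ket{-f(\bx)}$; this is the first (inverse) query. Next, controlled on the index register $\ket{i}$, I add the constant $\sigma$ to the $i$-th coordinate of the point register, implementing the reversible map $\ket{i}\ket{\bx}\mapsto\ket{i}\ket{\bx+\sigma\be_i}$ via a controlled fixed-point adder that consults no oracle. I then apply $\bU_f$ to $(\ket{\bx+\sigma\be_i},\ket{a})$, which is the second query and leaves the last register holding $f(\bx+\sigma\be_i)-f(\bx)$. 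Uncomputing the coordinate shift by the controlled inverse adder restores the point register to $\ket{\bx}$ while leaving the accumulated difference untouched. Finally, multiplying the last register by the nonzero constant $1/\sigma$ --- again a reversible arithmetic operation --- yields $\ket{i}\ket{\bx}\ket{g_i(\bx)}$ with $g_i(\bx)=(f(\bx+\sigma\be_i)-f(\bx))/\sigma$, as required. Since $\bU_f$ and $\bU_f^\dagger$ are each applied exactly once, the total is two oracle queries, and by linearity the same circuit acts correctly on arbitrary superpositions over $i$ and $\bx$.

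The main obstacle I anticipate is not the query count but the careful bookkeeping of the reversible arithmetic: decoding the index $i$ to steer the adder to the correct coordinate of the fixed-point--encoded vector register, ensuring the coordinate shift is exactly invertible so that $\ket{\bx}$ is perfectly restored (any leftover garbage would entangle with and spoil the output), and verifying that multiplication by $1/\sigma$ is a clean in-place unitary on the finite-precision register. These are standard reversible-computing primitives, so I expect the argument to reduce to checking that each non-oracle step is unitary and uncomputed cleanly, isolating the two oracle calls as the only queries to $\bU_f$.
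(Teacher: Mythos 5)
Your proposal is correct, and it takes a slightly different (and in one respect tighter) route than the paper. The paper's circuit keeps $\ket{\bx}$ intact, writes the shifted point $\ket{\bx+\sigma\be_i}$ into a fresh ancilla register, queries $\bU_f$ twice to load $f(\bx+\sigma\be_i)$ and $f(\bx)$ into two further ancilla registers, combines them arithmetically into the output register, and then uncomputes the three ancillas. Your circuit instead uses the compute--shift--compute pattern: apply $\bU_f^\dagger$ to deposit $-f(\bx)$ into the output register, shift the point register in place to $\bx+\sigma\be_i$, apply $\bU_f$ so the register accumulates $f(\bx+\sigma\be_i)-f(\bx)$ directly, undo the shift, and rescale by $1/\sigma$. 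The practical difference is in the accounting of the uncomputation: in the paper's version, erasing the two registers holding $f(\bx+\sigma\be_i)$ and $f(\bx)$ would itself require further (inverse) oracle calls, so the stated count of two queries is really two forward queries plus the cost of uncomputation, whereas your version genuinely uses exactly one call to $\bU_f$ and one to $\bU_f^\dagger$ with no function values left to erase. Both versions share the same mild finite-precision caveats you flag (exact reversibility of the coordinate adder and of the in-place division by $\sigma$), and both give the $O(1)$ query count the rest of the paper relies on, so the distinction is cosmetic for the downstream results --- but your bookkeeping is the cleaner of the two.
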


The proof is given in Appendix \ref{proof:Lemma:QGC}.

\textbf{Quantum maximum finding circuit.} 
Based on $\bU_g$, leveraging the generalized quantum min-finding algorithm with an approximate unitary (Theorem 2.4 in \cite{chen2021quantum}), we can obtain an approximate search $A_{\max}$ of the maximum gradient component as shown in Lemma \ref{Lemma:QMF}, with a simplified proof given in Appendix \ref{proof:Lemma:QMF}.
\begin{restatable}{lemma}{LemmaQMF}(Approximate maximum gradient component finding) 
    \label{Lemma:QMF}
    Given access to the quantum error bounded gradient oracle $\bU_g: \ket{i}\ket{\bx}\ket{0}\rightarrow \ket{i}\ket{\bx}\ket{g_i(\bx)}$ s.t. for each $i \in [d]$, after measuring $\ket{g_i(\bx)}$, the measured outcome $g_i(\bx)$ satisfies $|g_i(\bx)-\nabla f_i(\bx)|\le \epsilon$.
    There exists a quantum circuit $\cA_{\max}$ that finds the index $i^*$ that satisfies $\nabla f_{i^*}(\bx)\ge \max_{j \in [d]}\nabla f_j(\bx)-2\epsilon$ or $\abs{\nabla f_{i^*}(\bx)}\ge \max_{j \in [d]}\abs{\nabla f_j(\bx)}-2\epsilon$, using $O(\sqrt{d}\log(\frac{1}{\delta}))$ applications of $\bU_g$, $\bU_g^{\dagger}$ and $O(\sqrt{d})$ elementary gates, with probability $1-\delta$.
\end{restatable}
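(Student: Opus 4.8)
The plan is to instantiate the generalized quantum minimum-finding routine (Theorem 2.4 of \cite{chen2021quantum}) with $\bU_g$ playing the role of the approximate value oracle, and then to track how the per-component error $\epsilon$ propagates into a $2\epsilon$ slack in the optimality of the returned index. First I would treat the two cases uniformly. For the signed statement (used for the simplex), the routine acts directly on the value register produced by $\bU_g$; for the absolute-value statement (used for the $\ell_1$ ball), I precompose $\bU_g$ with an absolute-value gate on the value register, an $O(1)$ overhead per query, so that the effective oracle returns $\abs{g_i(\bx)}$ with $\bigl|\,\abs{g_i(\bx)}-\abs{\nabla f_i(\bx)}\,\bigr|\le\epsilon$. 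In both cases we are maximizing a family of values, each of which is $\epsilon$-close to its true counterpart.

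Next I would recall the D\"urr--H\o yer structure underlying the routine: starting from a random threshold index $y$, one repeatedly runs a BBHT-style exponential Grover search for an index $i$ whose (approximate) value beats that of $y$, updating $y\leftarrow i$ whenever such an $i$ is found. The marking oracle is built from $\bU_g$, $\bU_g^{\dagger}$ and a comparator against the current threshold, so each Grover iteration costs $O(1)$ applications of $\bU_g$ and $\bU_g^{\dagger}$. A standard analysis gives that $O(\sqrt{d})$ total iterations suffice to locate the extremal index with constant probability, and $O(\log(1/\delta))$ independent repetitions (keeping the best candidate) boost this to $1-\delta$, yielding the claimed $O(\sqrt{d}\log(1/\delta))$ queries and $O(\sqrt{d})$ additional elementary gates from the diffusion operators.

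The heart of the argument is the error accounting. Let $i^*$ be the index returned by the routine, i.e.\ (up to its success guarantee) a maximizer of the approximate values, and let $j^*\in\argmax_{j}\nabla f_j(\bx)$. Chaining the two instances of the $\epsilon$-bound gives
\begin{align}
\nabla f_{i^*}(\bx)\;\ge\; g_{i^*}(\bx)-\epsilon\;\ge\; g_{j^*}(\bx)-\epsilon\;\ge\;\nabla f_{j^*}(\bx)-2\epsilon\;=\;\max_{j\in[d]}\nabla f_j(\bx)-2\epsilon,
\end{align}
where the middle inequality uses that $i^*$ maximizes $g$. The identical computation with $\abs{\cdot}$ in place of the signed values handles the $\ell_1$ case.

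I expect the main obstacle to be making this clean chaining rigorous \emph{inside} the Grover search rather than only at the level of the final output. The comparator evaluates ``is $g_i(\bx)$ above the current threshold?'' using only $\epsilon$-accurate values, so any index whose true value lies in a $2\epsilon$ band around the threshold may be marked or unmarked inconsistently across iterations. The crux is therefore to argue that, despite this ambiguity, the threshold still converges to within $2\epsilon$ of the true extremum with the stated probability; this is precisely the guarantee supplied by the generalized min-finding theorem, and the role of the proof is to verify that our specific oracle meets its hypotheses and that the slack collapses to the two-step $\epsilon+\epsilon$ accounting above.
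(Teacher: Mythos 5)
Your proposal is correct and follows essentially the same route as the paper: both instantiate the D\"urr--H\o yer / generalized min-finding routine with $\bU_g$ as the (consistent, deterministic-outcome) value oracle, handle the absolute-value case by an $O(1)$ modification of the comparator, boost to success probability $1-\delta$ with $O(\log(1/\delta))$ repetitions, and obtain the $2\epsilon$ slack by the same two-step chaining $\nabla f_{i^*}\ge g_{i^*}-\epsilon\ge g_{j^*}-\epsilon\ge \max_j\nabla f_j-2\epsilon$. The consistency concern you raise at the end is resolved exactly as you suggest: the lemma's hypothesis fixes a single deterministic value $g_i(\bx)$ per index, so the comparator is consistent across iterations and the standard guarantee applies.
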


\textbf{Convergence Analysis.}
Now we can conduct the convergence analysis with the help of approximate maximum finding sub-routine and show how to choose appropriate parameters, which gives Theorem \ref{Theorem:QFWSC}, with proof given in Appendix \ref{proof:Theorem:QFWSC}.

\begin{restatable}{theorem}{TheoremQFWSC}(Quantum FW over the sparsity constraint) 
\label{Theorem:QFWSC}
By setting $\sigma_t=\frac{C_f}{\sqrt{d}L(t+2)}$ for $t\in[T]$,  the quantum algorithm (Algorithm \ref{alg:q_fw}) solves the sparsity constraint optimization problem for any precision $\varepsilon$ such that $f(\bx^{T})-f(\bx^*)\le\varepsilon$ in $T=\frac{4C_f}{\varepsilon}-2$ rounds, succeed with probability $1-p$, with $O\left(\sqrt{d} \log{\frac{C_f}{p\varepsilon}} \right)$ calls to the function value oracle $\bU_f$ per round.
\end{restatable}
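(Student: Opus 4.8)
The plan is to reduce the quantum algorithm to the classical convergence guarantee of Lemma~\ref{lem:conv_rate} by showing that the approximate maximum-finding subroutine solves the linear subproblem \cref{eq:classic_linear program} to a \emph{constant} quality $\delta=O(1)$ at every round. The whole argument is a chain linking three error sources: the finite-difference bias controlled by $\sigma_t$, the $2\epsilon$ slack of approximate max-finding from Lemma~\ref{Lemma:QMF}, and the per-round failure probability, which I would then aggregate by a union bound.

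First I would bound the per-coordinate error of the quantum gradient oracle $\bU_g$. By Lemma~\ref{Lemma:FDM}, $\|g(\bx^{(t)})-\nabla f(\bx^{(t)})\|_2 \le \frac{\sqrt{d}L\sigma_t}{2}$, and since $\|\cdot\|_\infty \le \|\cdot\|_2$ the measured component satisfies $|g_i(\bx^{(t)})-\nabla_i f(\bx^{(t)})| \le \epsilon_t := \frac{\sqrt{d}L\sigma_t}{2}$ for every $i$. Substituting the prescribed $\sigma_t=\frac{C_f}{\sqrt{d}L(t+2)}$ collapses this to $\epsilon_t=\frac{C_f}{2(t+2)}=\frac{\gamma_t C_f}{4}$, where $\gamma_t=\frac{2}{t+2}$. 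The point of this step is that the finite-difference bias is tied to the step size $\gamma_t$, which is precisely what preserves the convergence rate up to a constant.

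Next I would convert the guarantee of Lemma~\ref{Lemma:QMF} into a subproblem-quality statement. Running $\cA_{\max}$ on $|g_i|$ returns an index $i_t$ with $|\nabla_{i_t} f(\bx^{(t)})| \ge \max_j |\nabla_j f(\bx^{(t)})| - 2\epsilon_t$. Selecting the matching $\ell_1$ vertex $\bs=-\operatorname{sign}(\nabla_{i_t} f(\bx^{(t)}))\,\be_{i_t}$ gives $\langle \bs, \nabla f(\bx^{(t)})\rangle=-|\nabla_{i_t} f(\bx^{(t)})|$, while the exact minimum over $\cD$ equals $-\max_j |\nabla_j f(\bx^{(t)})|$. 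Hence $\langle \bs, \nabla f(\bx^{(t)})\rangle - \min_{\hat{\bs}\in\cD}\langle \hat{\bs}, \nabla f(\bx^{(t)})\rangle \le 2\epsilon_t = \frac{\gamma_t C_f}{2}$, which is exactly \cref{eq:classic_linear program} with $\delta=1$. Plugging $\delta=1$ into Lemma~\ref{lem:conv_rate} yields $f(\bx^{(t)})-f(\bx^*)\le \frac{4C_f}{t+2}$, so $T=\frac{4C_f}{\varepsilon}-2$ rounds suffice for an $\varepsilon$-optimal point.

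Finally I would handle the probabilistic and cost accounting. I would run each call of $\cA_{\max}$ at failure probability $p/T$, so a union bound over the $T$ rounds gives overall success probability $1-p$. By Lemma~\ref{Lemma:QMF} this costs $O(\sqrt{d}\log(T/p))$ applications of $\bU_g$ per round, and since $T=\Theta(C_f/\varepsilon)$ this is $O\!\left(\sqrt{d}\log{\frac{C_f}{p\varepsilon}}\right)$; each $\bU_g$ uses only two queries to $\bU_f$ by Lemma~\ref{Lemma:QGC}, which is absorbed into the constant. The step I expect to be the main obstacle is the first coupling: one must check that the $t$-dependent schedule of $\sigma_t$ keeps the finite-difference bias at a fixed fraction of $\gamma_t C_f$ for \emph{all} rounds, since a constant $\sigma$ would let the bias dominate the shrinking step size in late iterations and break the $O(C_f/\varepsilon)$ iteration count. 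The sign bookkeeping for choosing the correct $\ell_1$ vertex, and the use of the absolute value of the gradient estimate inside $\cA_{\max}$, are minor points but must be stated carefully.
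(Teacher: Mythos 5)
Your proof is correct and follows essentially the same route as the paper: the identical chain of Lemma~\ref{Lemma:FDM} giving a per-coordinate error $\epsilon_t=\frac{\sqrt{d}L\sigma_t}{2}$, Lemma~\ref{Lemma:QMF} giving a $2\epsilon_t=\frac{1}{2}\gamma_t C_f$ slack on the linear subproblem, and a union bound with per-round failure probability $p/T$ yielding the $O(\sqrt{d}\log\frac{C_f}{p\varepsilon})$ query count. The only differences are presentational: you invoke Lemma~\ref{lem:conv_rate} with $\delta=1$ as a black box where the paper re-derives the $h(\bx^{(t)})\le\frac{4C_f}{t+2}$ induction explicitly, and your vertex choice $\bs=-\mathsf{sgn}(\nabla_{i_t}f(\bx^{(t)}))\,\be_{i_t}$ is in fact more careful than the unconditional $\bs=-\be_{i_t}$ written in Algorithm~\ref{alg:q_fw} and its proof, which implicitly assumes the selected gradient component is nonnegative.
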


If the objective function is a $G$-Lipschitz continues function (i.e. $|f(\bx)-f(\by)| \leqslant G \|\by-\bx\|,\quad \forall \bx, \by \in \cD$), an alternative approach for estimating the gradient of the objective function involves employing the bounded-error Jordan algorithm to improve the query complexity of each iteration to $O(1)$, at the cost of additional space complexity and extra gate operations.
This result is given in Appendix \ref{sec:QFWJ}.

\subsection{Extensions: Quantum Frank-Wolfe for Atomic Sets}

Classically, the Frank-Wolfe algorithm has been shown to be well-suited to atomic sets~\cite{jaggi2013revisiting}, i.e.\ where the constraint set is expressed as the convex hull of another (not-necessarily finite) set $\+A$: 
$ \+D = \mathsf{conv}(\+A) $
In this case, the Frank-Wolfe update calculation requires a minimization only over $\+A$: 
$\min_{\hat{\bs} \in \+A}\langle \hat{\bs}, \nabla f(x^{(t)})\rangle. $

The optimization over the $\ell_1$ ball as studied above is a special case of this, since 
\begin{align}
\{ \bx\in\mb{R}^d : \norm{\bx}_1\} = \mathsf{conv}\{\pm\be_1, \pm\be_2, \ldots, \pm\be_d\}.
\end{align}
Note also that quantum optimization over the simplex $\Delta_d=\mathsf{conv}\{\be_1,\ldots, \be_d\}$ can be done by almost exactly the same method as for the $\ell_1$ case, with the only modification to account for the fact that only the unit vectors need to be optimized over, which gives Theorem \ref{Theorem:QFWS}.

\begin{theorem}(Quantum FW over the simplex) 
\label{Theorem:QFWS}
By setting $\sigma_t=\frac{C_f}{\sqrt{d}L(t+2)}$ for $t\in[T]$, the quantum algorithm (Algorithm \ref{alg:q_fw}) solves the simplex constraint optimization problem for any precision $\varepsilon$ such that $f(\bx^{T})-f(\bx^*)\le\varepsilon$ in $T=\frac{4C_f}{\varepsilon}-2$ rounds, succeed with probability $1-p$, with $O\left(\sqrt{d} \log{\frac{C_f}{p\varepsilon}} \right)$ calls to the function value oracle $\bU_f$ per round.
\end{theorem}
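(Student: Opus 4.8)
The plan is to reuse the proof of Theorem~\ref{Theorem:QFWSC} almost verbatim, since the simplex $\Delta_d=\mathsf{conv}\{\be_1,\dots,\be_d\}$ differs from the $\ell_1$ ball only in its atomic set: its vertices are the $d$ unit vectors $\be_i$ \emph{without} their sign-flipped copies. Consequently the Frank--Wolfe linear subproblem $\min_{\hat\bs\in\Delta_d}\langle\hat\bs,\nabla f(\bx^{(t)})\rangle$ is attained at a vertex $\be_{i_t}$ with $i_t\in\argmin_{i}\nabla_i f(\bx^{(t)})$, so I would apply the approximate extreme-component finding subroutine (Lemma~\ref{Lemma:QMF}) directly to the gradient estimates $g_i(\bx^{(t)})$ (negating them to locate the minimal rather than maximal component) instead of to their absolute values $|g_i(\bx^{(t)})|$; every other part of the argument is structurally unchanged.

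First I would bound the per-coordinate gradient error introduced by the quantum gradient circuit. By Lemma~\ref{Lemma:QGC} the oracle $\bU_g$ returns the forward difference $g_i(\bx^{(t)})$ with two calls to $\bU_f$, and by Lemma~\ref{Lemma:FDM} the choice $\sigma_t=\frac{C_f}{\sqrt{d}L(t+2)}$ yields, for every coordinate $i$, the bound $|g_i(\bx^{(t)})-\nabla_i f(\bx^{(t)})|\le\norm{g(\bx^{(t)})-\nabla f(\bx^{(t)})}_2\le\frac{\sqrt{d}L\sigma_t}{2}=\frac{C_f}{2(t+2)}=:\epsilon_t$. Feeding this error $\epsilon_t$ into Lemma~\ref{Lemma:QMF} (in its minimization form) produces, with probability $1-\delta'$, an index $i_t$ whose \emph{true} gradient component satisfies $\nabla_{i_t}f(\bx^{(t)})\le\min_{j}\nabla_j f(\bx^{(t)})+2\epsilon_t=\min_{\hat\bs\in\Delta_d}\langle\hat\bs,\nabla f(\bx^{(t)})\rangle+\frac{C_f}{t+2}$, so the chosen simplex vertex $\bs=\be_{i_t}$ solves the subproblem up to additive slack $\frac{C_f}{t+2}$.

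Next I would translate this additive error into the subproblem quality $\delta$ demanded by Lemma~\ref{lem:conv_rate}. With $\gamma_t=\frac{2}{t+2}$, the tolerated slack in \cref{eq:classic_linear program} equals $\frac{\delta}{2}\gamma_t C_f=\frac{\delta C_f}{t+2}$, and matching this against the realized error $\frac{C_f}{t+2}$ gives $\delta=1$. Lemma~\ref{lem:conv_rate} then yields $f(\bx^{(t)})-f(\bx^*)\le\frac{2C_f}{t+2}(1+\delta)=\frac{4C_f}{t+2}$, so $T=\frac{4C_f}{\varepsilon}-2$ iterations suffice for an $\varepsilon$-optimal point, exactly as claimed. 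For the cost, each invocation of Lemma~\ref{Lemma:QMF} uses $O(\sqrt{d}\log(1/\delta'))$ applications of $\bU_g$, hence $O(\sqrt{d}\log(1/\delta'))$ calls to $\bU_f$; setting $\delta'=p/T$ and taking a union bound over the $T$ rounds guarantees overall success probability $1-p$ while $\log(1/\delta')=\log(T/p)=O\!\left(\log\frac{C_f}{p\varepsilon}\right)$ after substituting $T=\frac{4C_f}{\varepsilon}-2$, giving the stated $O\!\left(\sqrt{d}\log\frac{C_f}{p\varepsilon}\right)$ queries per round.

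The step I expect to require the most care is the probabilistic-to-deterministic bridge: Lemma~\ref{lem:conv_rate} is a deterministic guarantee that presumes the subproblem is solved to quality $\delta$ in \emph{every} round, whereas the quantum subroutine only achieves this with high probability per round. I would therefore condition on the event that all $T$ approximate-finding calls succeed, control its complement by the union bound with per-round failure $p/T$, and verify that the resulting $\log(T/p)$ factor indeed collapses to $O(\log\frac{C_f}{p\varepsilon})$. A minor secondary check is the sign and direction bookkeeping --- using $\bs=\be_{i_t}$ for the simplex (rather than the signed vertex that the $\ell_1$ case requires) and confirming that this does not alter the error accounting that produces $\delta=1$.
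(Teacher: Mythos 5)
Your proposal is correct and follows essentially the same route as the paper, which proves the $\ell_1$ case (Theorem \ref{Theorem:QFWSC}) via the identical forward-difference error bound, the $2\epsilon_t$ slack from Lemma \ref{Lemma:QMF}, and the union bound with per-round failure probability $p/T$, and then obtains the simplex case by exactly the one-line modification you describe; your invocation of Lemma \ref{lem:conv_rate} with $\delta=1$ is just the packaged form of the induction the paper re-derives explicitly. Your sign bookkeeping (minimum-finding on the signed components and $\bs=\be_{i_t}$) is in fact cleaner than Algorithm \ref{alg:q_fw} as written, which sets $\bs=-\be_{i_t}$ for both constraints even though $-\be_{i_t}\notin\Delta_d$.
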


Two more extensions for atomic sets will be given in Appendix \ref{sec:extension}.

\section{Quantum Frank-Wolfe Algorithms over Matrices}
\label{Sec:QFWM}

In this section, we consider the matrix version of the constrained optimization problem in \cref{eq:general_constraint_convex}, specifically,
\begin{align}\label{eq:l2_constraint_convex}
    \min f(X), \text{ s.t. } X \in \R^{d\times d}, \norm{X}_{\tr}\le 1,
\end{align}
where the sparsity constraint is $\cD=\{X\in \R^{d\times d}: \norm{X}_{\tr}\le 1\}$. 
For simplicity of presentation, we first focus on square matrices, i.e., $X\in \R^{d\times d}$ (Remark \ref{remark:square}).

\textbf{Schatten matrix norm.} In contrast to the vector norm $\norm{\cdot}$ on $\R^d$, the corresponding Schatten matrix norm $\norm{X}$ is defined as $\norm{(\sigma_1, ..., \sigma_d)}$, where $\sigma_1, ..., \sigma_d$ are singular values of $X$. 
It is known that the dual of the Schatten $\ell_p$ norm is the Schatten $\ell_q$ norm with $1/p+1/q=1$. 
The most prominent example is the trace norm $\norm{\cdot}_{\tr}$, also referred to as the nuclear norm or Schatten $\ell_1$ norm, defined as the sum of the singular values $\norm{X}_{\tr}=\sum_{i=1}^{d} \sigma_i$.

\textbf{Linear subproblem solver.} Following the classical Frank-Wolfe iteration framework, we aim to solve the linear optimization subproblem $\min_{S \in \cD}\inner{S,\nabla f(X_t)}$ where $X_t$ denotes the iterate matrix at step $t$, and $\langle X, Y\rangle =\tr X^\top Y$ represents the Hilbert-Schmidt inner product. 
For convenience, let $M=\nabla f(X_t)$ in the rest of this section.  
To solve this subproblem, one can compute the singular value decomposition (SVD) $M=U\text{diag}(\boldsymbol{\sigma})V^{\top}$, where $\boldsymbol{\sigma}$ are singular values of $M$ and $U,V\in \R^{d\times d}$ are orthogonal matrices. 
Since Schatten norms are invariant under orthogonal transformations, the optimal solution $S \in \cD$ for the minimization problem $\min_{S \in \cD}\inner{S,M}$ takes the forms of $S=U\text{diag}(\bs)V^{\top}$, where $\inner{\bs,\boldsymbol{\sigma}}=\norm{\boldsymbol{\sigma}}_q$ with $\norm{\bs}_p\le 1$ and $1/p+1/q=1$. 
For the nuclear norm (i.e., $\ell_1$ Schatten norm), this reduces to $S=\bu \bv^{\top}$ where $\bu,\bv$ are the left and right top singular vectors of $M$, corresponding to its largest singular value $\sigma_1(M)$.
Thus, the core computational task is to efficiently approximate the top singular vectors $\bu,\bv \in R^d$, ensuring $|\bu^\top M \bv - \sigma_1(M)| \le \varepsilon $.

\textbf{Power method and Lanczos method.} Compared with the SVD that requires $O(d^3)$ computational cost per iteration to compute all $d$ singular vectors, extracting only the top singular vector is much easier. 
Specifically, \cite{kuczynski1992estimating} considers two iterative methods: the power method and the Lanczos method. 
The power method achieves $|\bu^\top M \bv - \sigma_1(M)| \le \varepsilon'$ with the worst-case computation complexity of $O\left(\frac{\sigma_1(M) d^2 \ln d}{\varepsilon'}\right)$, while the Lanczos method achieves $|\bu^\top M \bv - \sigma_1(M)| \le \varepsilon'$ with the worst-case computation complexity of $O\left(\frac{\sqrt{\sigma_1(M)}  d^2 \ln d}{\sqrt{\varepsilon'}}\right)$, where $\varepsilon'$ is the additive error.
Similar to the convergence analysis in \cref{subsec:QFWSC}, setting $\varepsilon'=O(\varepsilon)$, the complexity of update computing are $O\left(\frac{\sigma_1(M) d^2 \ln d}{\varepsilon}\right)$ and $O\left(\frac{\sqrt{\sigma_1(M)}  d^2 \ln d}{\sqrt{\varepsilon}}\right)$, respectively. 

\textbf{Quantum enhancement.} In the following, we propose two quantum subroutines to compute the top singular vector: the quantum top singular vector extraction method and the quantum power method.
First, we assume the following gradient access model for matrix data.
A detailed description of this data structure can be found in Section 1.A of \cite{kerenidis2020quantum}.

\begin{assumption}[Quantum access to a matrix]\label{asmp:QMA}
We assume that we have efficient quantum access to the matrix $M\in R^{d \times d}$. 
That is, there exists a data structure that allows performing the mapping 
$\ket{i}\ket{0}\to \ket{i}\ket{M_{i,\cdot}}=\ket{i} \frac{1}{\norm{M_{i,\cdot}}}\sum_j^d M_{ij}\ket{j}$ 
for all $i$, 
and $\ket{0}\to \frac{1}{\norm{M}_F} \sum_i^d \norm{M_{i,\cdot}}\ket{i}$ 
in time $\widetilde{O}(1)$.
\end{assumption}

\subsection{Quantum Frank-Wolfe with Quantum Top Singular Vector Extraction}
\label{subsec:qfwQSVE}

Leveraging the quantum access defined in Assumption \ref{asmp:QMA}, quantum singular value estimation can be performed efficiently.

\begin{lemma}[Singular value estimation (Theorem 3 \cite{bellante2022quantum}), \cite{kerenidis2020quantum}]
    \label{Lemma:QSVE}
    Let there be quantum access to $M\in R^{d \times d}$, with singular value decomposition $M=\sum_{i}^{d}\sigma_i \bu_i \bv_i^T$. Let $\epsilon>0$ be a precision parameter. There exists a quantum circuit for performing the mapping $\frac{1}{\norm{M}_F}\sum_i^d\sum_j^d M_{ij} \ket{i}\ket{j}\ket{0} \to \frac{1}{\norm{M}_F}\sum_i^k \sigma_i \ket{\bu_i} \ket{\bv_i}\ket{\overline{\sigma}_i}$ such that $\abs{\sigma_i-\overline{\sigma}_i}\leq \epsilon$ with probability at least $1-1/\text{poly}(d)$ in time $O\left(\frac{\norm{M}_F \text{poly}\log{d}}{\epsilon}\right)$.
\end{lemma}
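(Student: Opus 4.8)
The plan is to establish \Cref{Lemma:QSVE} by composing the quantum access data structure of \Cref{asmp:QMA} with a phase-estimation-based singular value estimation routine, essentially importing the machinery of quantum singular value estimation (QSVE) due to Kerenidis--Prakash. First I would use the two mappings guaranteed by \Cref{asmp:QMA} to prepare the normalized ``matrix state'' $\frac{1}{\norm{M}_F}\sum_{i,j}M_{ij}\ket{i}\ket{j}$ in time $\widetilde{O}(1)$: applying $\ket{0}\to\frac{1}{\norm{M}_F}\sum_i\norm{M_{i,\cdot}}\ket{i}$ on the first register and then the row-loading map $\ket{i}\ket{0}\to\ket{i}\ket{M_{i,\cdot}}$ on the second register yields exactly this state, since $\norm{M_{i,\cdot}}\cdot\frac{M_{ij}}{\norm{M_{i,\cdot}}}=M_{ij}$. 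The key observation is that the data structure also lets us implement (block-encode) a unitary whose eigenphases encode the singular values of $M/\norm{M}_F$; concretely, one constructs two isometries from the row and column norm trees and forms a reflection/Szegedy-type walk operator $W$ whose eigenvalues are $e^{\pm i\theta_i}$ with $\cos\theta_i=\sigma_i/\norm{M}_F$.

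Next I would run phase estimation on $W$ applied to the prepared state. Expanding the matrix state in the singular basis gives $\frac{1}{\norm{M}_F}\sum_i\sigma_i\ket{\bu_i}\ket{\bv_i}$ (in the appropriate encoding), so phase estimation coherently attaches an estimate $\ket{\overline{\theta}_i}$, from which $\overline{\sigma}_i=\norm{M}_F\cos\overline{\theta}_i$ is recovered into the final register, producing the claimed state $\frac{1}{\norm{M}_F}\sum_i\sigma_i\ket{\bu_i}\ket{\bv_i}\ket{\overline{\sigma}_i}$. The accuracy and runtime then follow from the standard phase-estimation tradeoff: to guarantee $\abs{\sigma_i-\overline{\sigma}_i}\le\epsilon$ one needs the phase resolved to additive error $O(\epsilon/\norm{M}_F)$, which costs $O(\norm{M}_F/\epsilon)$ applications of $W$, and each application of $W$ (together with state preparation) costs $\widetilde{O}(1)=\text{poly}\log d$ using the data structure. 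Multiplying gives the stated time $O\!\left(\frac{\norm{M}_F\,\text{poly}\log d}{\epsilon}\right)$.

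The main obstacle I anticipate is the error analysis, rather than the circuit construction. Phase estimation does not yield an exact eigenphase but a distribution concentrated around the true value, so I must argue that with probability $1-1/\text{poly}(d)$ every singular value in the support is estimated to within $\epsilon$; this requires boosting the naive phase-estimation success probability by taking $O(\log d)$ extra ancilla qubits (or median amplification) and carefully controlling the coherent error so that the garbage from imperfect phase estimation does not corrupt the registers $\ket{\bu_i}\ket{\bv_i}$ beyond the stated tolerance. A secondary subtlety is the nonlinearity of $\sigma_i=\norm{M}_F\cos\theta_i$: a uniform additive error on $\theta_i$ translates to a nonuniform error on $\sigma_i$, so I would either restrict to the regime where $\cos$ is well-conditioned or absorb the Lipschitz factor of $\cos$ (bounded by $1$) into the constant, which is benign here. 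Since this is exactly the content of Theorem~3 of \cite{bellante2022quantum} combined with the QSVE guarantee of \cite{kerenidis2020quantum}, I would ultimately cite those results for the detailed bounds and present the above only as the conceptual pipeline.
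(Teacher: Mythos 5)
The paper does not prove this lemma at all: it is imported verbatim as Theorem~3 of \cite{bellante2022quantum} (building on \cite{kerenidis2020quantum}), so there is no in-paper argument to compare against. Your sketch is a faithful reconstruction of how that cited result is actually established (data-structure isometries, a walk operator with eigenphases $\cos\theta_i=\sigma_i/\norm{M}_F$, phase estimation at resolution $O(\epsilon/\norm{M}_F)$), and your decision to defer the detailed error bounds to the citations is exactly what the authors do. One small remark: the paper later relies (in the proofs of Lemma~\ref{Lemma:QGSVE} and Theorem~\ref{theorem:qfwqsvd}) on the QSVE using a \emph{consistent} version of phase estimation, so that the estimates $\overline{\sigma}_i$ are reproducible across repeated oracle calls inside quantum maximum finding; your "standard phase estimation plus median amplification" description would need that strengthening for the downstream applications, though not for the lemma statement itself.
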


To extract classical singular vectors corresponding to the largest singular value from a quantum state, $\ell_2$ norm quantum state tomography is required.

\begin{lemma}[$\ell_2$ state-vector tomography \cite{kerenidis2020quantum2,kerenidis2019quantum}]
    \label{Lemma:L2T}
    Given a unitary mapping $U_x:\ket{0}\to \ket{\bx}$ in time $T(U_{\bx})$ and $\delta>0$, there is an algorithm that produces an estimate $\overline{\bx}\in R^d$ with $\norm{\overline{\bx}}_2=1$ such that $\norm{\bx-\overline{\bx}}_2\leq \delta$ with probability at least $1-1/\text{poly}(d)$ in time $O\left(T(U_{\bx}) \frac{d \log{d}}{\delta^2}\right)$.
\end{lemma}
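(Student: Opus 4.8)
The plan is to reconstruct the classical description of $\ket{\bx}=\sum_{i=1}^{d}x_i\ket{i}$ in two stages---first the magnitudes $|x_i|$, then the signs---and to conclude by bounding the resulting $\ell_2$ error. Since $\bx\in\R^d$ is real, only signs (not general phases) must be recovered, which simplifies the second stage considerably.

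\textbf{Stage 1 (magnitudes).} First I would apply $U_{\bx}$ to $\ket{0}$ and measure in the computational basis $N=O(d\log d/\delta^2)$ times. Writing $n_i$ for the number of times outcome $i$ occurs, the empirical frequency $\tilde p_i=n_i/N$ is an unbiased estimator of $p_i=|x_i|^2$, so $\sqrt{\tilde p_i}$ estimates $|x_i|$. The quantitative claim I need is that, for this $N$, $\sum_i\left(\sqrt{\tilde p_i}-|x_i|\right)^2\le \delta^2/2$ with probability at least $1-1/\mathrm{poly}(d)$. This follows from the per-coordinate bound $\mathbb{E}\!\left[(\sqrt{\tilde p_i}-\sqrt{p_i})^2\right]=O(1/N)$, summed over the $d$ coordinates, together with a concentration-and-boosting argument (Bernstein per coordinate plus a union bound over the $d$ coordinates) that supplies both the extra $\log d$ factor and the high success probability.

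\textbf{Stage 2 (signs).} To recover $\operatorname{sign}(x_i)$ I would interfere $\ket{\bx}$ with a known reference. Using one ancilla qubit, prepare $\tfrac{1}{\sqrt2}\left(\ket{0}\ket{\bx}+\ket{1}\ket{\phi}\right)$ with $\ket{\phi}=\tfrac{1}{\sqrt d}\sum_i\ket{i}$ the uniform state (one layer of Hadamards), apply a Hadamard to the ancilla, and measure both registers. A direct computation gives $\Pr[\mathrm{anc}{=}0,i]-\Pr[\mathrm{anc}{=}1,i]=x_i/\sqrt d$, so comparing the two empirical frequencies reveals $\operatorname{sign}(x_i)$ whenever $|x_i|$ is not too small. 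Setting $\overline x_i=\widehat{\operatorname{sign}}(x_i)\sqrt{\tilde p_i}$ and renormalizing yields the output $\overline{\bx}$ with $\norm{\overline{\bx}}_2=1$. Each of the $N$ samples here costs one (controlled) application of $U_{\bx}$, so both stages together use $O(N)$ applications of $U_{\bx}$ and hence time $O\!\left(T(U_{\bx})\,d\log d/\delta^2\right)$, matching the claimed bound.

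\textbf{Error analysis and main obstacle.} The final bound $\norm{\bx-\overline{\bx}}_2\le\delta$ splits into a magnitude term, already controlled in Stage 1, and a sign term, which is the crux. The difficulty is that a coordinate whose sign is misidentified contributes $(x_i-\overline x_i)^2=(|x_i|+\sqrt{\tilde p_i})^2\approx 4x_i^2$ to $\norm{\bx-\overline{\bx}}_2^2$, which is large unless $|x_i|$ is small. The key step is therefore to show that the interference estimator resolves the sign correctly for every coordinate except those with $|x_i|=O\!\big(\tfrac{1}{\sqrt d}\cdot\mathrm{polylog}\big)$ at resolution set by $N$, so that each misidentified coordinate is automatically small and the misidentified coordinates together contribute at most $O(\delta^2)$. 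Balancing this sign budget against the Stage 1 magnitude budget, and carrying the union bound over all $d$ coordinates through both stages, is precisely what pins down $N=O(d\log d/\delta^2)$ and the $1-1/\mathrm{poly}(d)$ success probability; I expect this coupling between sign resolution and magnitude smallness to be the main technical obstacle.
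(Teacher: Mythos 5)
This lemma is stated in the paper as an imported result from Kerenidis--Prakash \cite{kerenidis2020quantum2,kerenidis2019quantum}; the paper contains no proof of it, so the only meaningful comparison is against the proof in those references. Your reconstruction has the same two-stage structure as the original (computational-basis sampling for magnitudes, an interference/Hadamard-test stage for signs, and an error budget that charges misidentified signs only to coordinates that are provably small), and the sample count $N=O(d\log d/\delta^2)$ and the union-bound bookkeeping match. The one genuine deviation is the reference state in Stage 2: Kerenidis--Prakash interfere $\ket{\bx}$ with the \emph{reconstructed-magnitude} state $\sum_i\sqrt{\tilde{p}_i}\ket{i}$, so that the sign statistic for coordinate $i$ has signal $\Theta(x_i^2)$ and is compared against the threshold $0.4\,N\tilde{p}_i$, whereas you interfere with the uniform state, giving signal $x_i/\sqrt{d}$ against a noise floor of order $\sqrt{(x_i^2+1/d)/N}$ per coordinate. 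Your variant still works with the same asymptotic complexity (for $|x_i|\gtrsim\delta/\sqrt{d}$ the sign is resolved with high probability, and the remaining coordinates contribute at most $O(\delta^2)$ in total), and it avoids having to prepare the data-dependent reference state; the price is that the sign-error budget comes with a constant of roughly $4$, so the constants in $N$ and in the small-coordinate threshold must be tuned to land at $\norm{\bx-\overline{\bx}}_2\le\delta$ rather than $O(\delta)$ --- routine, but worth stating. Two minor points: preparing $\tfrac{1}{\sqrt2}(\ket{0}\ket{\bx}+\ket{1}\ket{\phi})$ needs a \emph{controlled} $U_{\bx}$, which is a (standard but) strictly stronger assumption than the bare unitary in the lemma statement; and since $\sum_i\tilde{p}_i=1$ exactly, the final renormalization you mention is unnecessary.
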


\textbf{Quantum top singular vector extraction (QTSVE).} The goal of the quantum subroutine in each iteration is to find the top right / left singular vectors of the gradient matrix. 
First, we prepare the gradient matrix state using the quantum access as stated in Assumption \ref{asmp:QMA}, then we perform QSVE to this state. 
The quantum maximum finding is applied to obtain the quantum state corresponding to the largest singular value. 
Prepare sufficient quantum states corresponding to the largest singular value until satisfying the requirement of tomography, then perform quantum state tomography to extract the corresponding right / left classical singular vectors. 
This procedure is shown in Lemma \ref{Lemma:QGSVE}, with the proof given in Appendix \ref{proof:Lemma:QGSVE}.
Note that the success probability of QTSVE can be improved by repeating it logarithmic times and then taking the average. 

\begin{restatable}{lemma}{LemmaQGSVE}(Quantum top singular vector extraction)
    \label{Lemma:QGSVE}
    Let there be efficient quantum access to a matrix $M\in R^{d \times d}$, with singular value decomposition $M=\sum_i^d \sigma_i \bu_i \bv_i^T$. 
    Define $p=\frac{\sigma^2_1(M)}{\sum_{i=1}^d \sigma_i^2}$.
    There exist quantum algorithms that with time complexity $O\left( \frac{\norm{M}_F d  \text{poly}\log{d}}{\sqrt{p} \epsilon \delta^2} \right)$, give the estimated top singular value $\overline{\sigma}_1$ of $M$ to precision $\epsilon$ and the corresponding unit estimated singular vectors $\bu, \bv$ to precision $\delta$ such that $\norm{\bu-\bu_{top}}\leq \delta$, $\norm{\bv-\bv_{top}}\leq \delta$ with probability at least $1-1/\text{poly}(d)$.
\end{restatable}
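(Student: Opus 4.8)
The plan is to chain together the three quantum primitives already in hand—quantum singular value estimation (Lemma \ref{Lemma:QSVE}), quantum maximum finding, and $\ell_2$ state tomography (Lemma \ref{Lemma:L2T})—into a three-stage pipeline: prepare the QSVE superposition over all singular triples, isolate the component carrying the largest singular value, and read out the corresponding left and right singular vectors by tomography. First I would invoke Lemma \ref{Lemma:QSVE} to build a unitary $U_{\mathrm{QSVE}}$ that prepares
\[
\ket{\psi} = \frac{1}{\norm{M}_F}\sum_{i=1}^{d}\sigma_i \ket{\bu_i}\ket{\bv_i}\ket{\overline{\sigma}_i},
\qquad |\sigma_i-\overline{\sigma}_i|\le\epsilon,
\]
in time $O(\norm{M}_F\,\mathrm{poly}\log d/\epsilon)$. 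The key observation is that the amplitude on the top triple is $\sigma_1/\norm{M}_F$, so measuring the singular-value register returns the top component with probability exactly $p=\sigma_1^2/\norm{M}_F^2$; this is what sets the $1/\sqrt{p}$ scaling.

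Next, applying quantum maximum finding to the third register of $U_{\mathrm{QSVE}}$—in the generalized, amplitude-weighted form underlying Lemma \ref{Lemma:QMF}—locates the largest estimate $\overline{\sigma}_1$ (this is the returned singular-value estimate, accurate to $\epsilon$) while spending $O(1/\sqrt{p})$ calls to $U_{\mathrm{QSVE}}$ up to polylogarithmic factors. Knowing $\overline{\sigma}_1$, I would define a marking oracle that flags the basis states whose singular-value register equals $\overline{\sigma}_1$ and run amplitude amplification to obtain a preparation unitary $U_{\mathrm{prep}}$ that outputs the product state $\ket{\bu_1}\ket{\bv_1}$ with $\Omega(1)$ probability; since the marked weight is $p$, this again costs $O(1/\sqrt{p})$ applications of $U_{\mathrm{QSVE}}$, giving $T_{\mathrm{prep}} = O\!\left(\frac{\norm{M}_F\,\mathrm{poly}\log d}{\sqrt{p}\,\epsilon}\right)$.

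Finally, because $\ket{\bu_1}\ket{\bv_1}$ is a product state, each register is unentangled, so I would feed $U_{\mathrm{prep}}$ (restricted first to the $\bu$-register, then to the $\bv$-register) into the $\ell_2$ tomography of Lemma \ref{Lemma:L2T}, which multiplies the preparation cost by $O(d\log d/\delta^2)$ and returns unit estimates with $\norm{\bu-\bu_1}\le\delta$ and $\norm{\bv-\bv_1}\le\delta$. Combining the stages yields the claimed $O\!\left(\frac{\norm{M}_F\,d\,\mathrm{poly}\log d}{\sqrt{p}\,\epsilon\,\delta^2}\right)$, and a union bound over the $1-1/\mathrm{poly}(d)$ guarantees of QSVE and tomography—with maximum finding and amplitude amplification boosted by $O(\log d)$ repetitions—gives overall success probability $1-1/\mathrm{poly}(d)$.

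The main obstacle I anticipate is the clean isolation of a single top component. The maximum-finding and marking steps act on the estimated register, so if $\sigma_1$ is not separated from $\sigma_2$ by more than the estimation resolution $\epsilon$, the marking oracle captures a near-degenerate subspace and $U_{\mathrm{prep}}$ produces a blend of singular vectors rather than $\bu_1,\bv_1$, breaking the tomography guarantee. Making this rigorous requires either assuming a spectral gap $\sigma_1-\sigma_2=\Omega(\epsilon)$ (the source of the gap factor that later appears in Theorem \ref{theorem:qfwqsvd}) or otherwise certifying that the returned triple is $\epsilon$-close to the true top one; carefully tracking how the $\epsilon$-level error in the singular-value register propagates into the $\delta$-accuracy of the extracted vectors is the delicate part of the argument.
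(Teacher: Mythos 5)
Your proposal follows essentially the same route as the paper's proof: prepare the QSVE superposition (Lemma \ref{Lemma:QSVE}), isolate the component carrying the largest singular value via quantum maximum finding at cost $O(1/\sqrt{p})$, and feed repeated preparations into the $\ell_2$ tomography of Lemma \ref{Lemma:L2T}, with the spectral-gap condition you anticipate ($\epsilon \le (\sigma_1-\sigma_2)/2$) being exactly how the paper guarantees the top triple is cleanly isolated. The one point the paper makes explicit that you leave implicit is that your marking/comparison on the singular-value register only works if the estimates $\overline{\sigma}_i$ are reproducible across repeated oracle calls, which is ensured because the QSVE of Lemma \ref{Lemma:QSVE} uses a consistent version of phase estimation.
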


\textbf{Convergence Analysis.} 
Our quantum Frank-Wolfe algorithm for nuclear norm constraint (Algorithm \ref{alg:qfwm1}) then follows, with the analysis given in Appendix \ref{proof:Theorem:QFWQSVD}.

\begin{restatable}{theorem}{theoremqfwqsvd}(Quantum FW with QTSVE)\label{theorem:qfwqsvd}
    By setting $\delta_t=\frac{C_f}{2(t+2)\sigma_1(M_t)}$ and $\epsilon_t\leq (\sigma_1(M_t)-\sigma_2(M_t))/2$ for $t\in[T]$, the quantum algorithm (Algorithm \ref{alg:qfwm1}) solves the nuclear norm constraint optimization problem for any precision $\varepsilon$ such that $f(X^{T})-f(X^*)\le\varepsilon$ in $T=\frac{4C_f}{\varepsilon}-2$ rounds, with time complexity $\tilde{O}\left( \frac{r \sigma_1^3(M) d}{(\sigma_1(M)-\sigma_2(M)) \varepsilon^2} \right)$ for computing the update direction per round, where $r$ is the rank of the gradient matrix.
\end{restatable}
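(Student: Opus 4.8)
The plan is to reduce everything to the approximate-subproblem guarantee of Lemma~\ref{lem:conv_rate}, by showing that the update direction produced by QTSVE (Lemma~\ref{Lemma:QGSVE}) solves the linear subproblem \cref{eq:classic_linear program} to solution quality $\delta = 1$ at every iteration. Concretely, at step $t$ I set $M_t = \nabla f(X_t)$ and take the update direction $S_t = -\hat{\bu}_t\hat{\bv}_t^\top$, where $\hat{\bu}_t, \hat{\bv}_t$ are the unit estimated top singular vectors returned by QTSVE. Since Schatten-norm invariance gives the exact minimizer value $\min_{S \in \cD}\inner{S, M_t} = -\sigma_1(M_t)$, the quantity to control is the excess $\inner{S_t, M_t} - (-\sigma_1(M_t)) = \sigma_1(M_t) - \hat{\bu}_t^\top M_t \hat{\bv}_t$, which must be made at most the tolerance $\frac{\delta}{2}\gamma_t C_f$ of \cref{eq:classic_linear program}.

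First I would bound this excess in terms of the tomography precision $\delta_t$. Writing $\hat{\bu}_t = \bu_{top} + \Delta\bu$ and $\hat{\bv}_t = \bv_{top} + \Delta\bv$ with $\norm{\Delta\bu}_2, \norm{\Delta\bv}_2 \le \delta_t$, and using $\bu_{top}^\top M_t \bv_{top} = \sigma_1(M_t)$ together with $\norm{M_t}_2 = \sigma_1(M_t)$, bilinearity and the triangle inequality give
\begin{align}
\sigma_1(M_t) - \hat{\bu}_t^\top M_t \hat{\bv}_t \le \sigma_1(M_t)\left(2\delta_t + \delta_t^2\right).
\end{align}
With the chosen $\delta_t = \frac{C_f}{2(t+2)\sigma_1(M_t)}$ the leading term equals $\frac{C_f}{t+2} = \frac{\delta}{2}\gamma_t C_f$ for $\delta = 1$ (recall $\gamma_t = \frac{2}{t+2}$), and the $\delta_t^2$ contribution is a lower-order correction absorbed into constants. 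A separate point to verify is that QTSVE returns vectors genuinely associated with $\sigma_1(M_t)$ rather than a smaller singular value: this is exactly where $\epsilon_t \le (\sigma_1(M_t) - \sigma_2(M_t))/2$ enters, since it forces the estimated singular values $\overline{\sigma}_i$ to separate the top singular pair from the rest, so the quantum maximum-finding step inside QTSVE selects the correct subspace. Feeding solution quality $\delta = 1$ into Lemma~\ref{lem:conv_rate} then yields $T = \frac{4C_f}{\varepsilon} - 2$ iterations for $\varepsilon$-optimality.

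For the per-round cost I would substitute the chosen parameters into the QTSVE complexity $O\!\left(\frac{\norm{M}_F d\,\text{poly}\log d}{\sqrt{p}\,\epsilon\,\delta^2}\right)$ of Lemma~\ref{Lemma:QGSVE}. Using $p = \sigma_1^2(M)/\norm{M}_F^2$ gives $1/\sqrt{p} = \norm{M}_F/\sigma_1(M)$, and the rank bound $\norm{M}_F^2 = \sum_i \sigma_i^2 \le r\,\sigma_1^2(M)$ collapses the two Frobenius factors to $O(r\,\sigma_1(M))$. Plugging in $\delta = \delta_t$ (so $1/\delta^2 = 4(t+2)^2\sigma_1^2(M)/C_f^2$), $\epsilon = \epsilon_t$ (so $1/\epsilon = 2/(\sigma_1(M)-\sigma_2(M))$), and the final-round value $t+2 = 4C_f/\varepsilon$ makes the $C_f$ factors cancel and produces the claimed $\tilde{O}\!\left(\frac{r\,\sigma_1^3(M)\,d}{(\sigma_1(M)-\sigma_2(M))\,\varepsilon^2}\right)$.

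I expect the main obstacle to be twofold. First, the error-propagation bound must be carried cleanly through the bilinear form $\hat{\bu}_t^\top M_t \hat{\bv}_t$: both singular-vector estimates carry independent $\delta_t$-errors, and one must confirm that the cross terms and the quadratic term do not spoil the target tolerance while the correct singular subspace is identified only under the $\epsilon_t$ separation condition. Second, one must establish the overall success probability: each QTSVE call succeeds only with probability $1 - 1/\text{poly}(d)$, so a union bound over the $T = O(C_f/\varepsilon)$ iterations is required to guarantee all rounds succeed simultaneously, using the logarithmic-repetition boosting noted after Lemma~\ref{Lemma:QGSVE} to keep the per-round cost unchanged up to polylogarithmic factors.
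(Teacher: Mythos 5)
Your proposal is correct and follows essentially the same route as the paper's proof: bound the bilinear-form error $\abs{\hat{\bu}^\top M_t\hat{\bv}-\bu_{top}^\top M_t\bv_{top}}$ by $O(\sigma_1(M_t)\delta_t)$, choose $\delta_t$ so this matches the $\frac{\gamma_t C_f}{2}$ tolerance of the standard FW recursion (the paper redoes the induction explicitly rather than citing Lemma~\ref{lem:conv_rate} with $\delta=1$, but the arithmetic is identical), use the $\epsilon_t$ gap condition to guarantee maximum-finding selects the top singular pair, and substitute $\norm{M}_F^2\le r\sigma_1^2(M)$ and $p\ge 1/r$ into the QTSVE cost. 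The only cosmetic difference is that the paper's Lemma~\ref{lemma:technical1} decomposes the error as $(\bu-\bu_{top})^\top M\bv_{top}+\bu^\top M(\bv-\bv_{top})$ to get exactly $2\sigma_1(M_t)\delta_t$ with no quadratic remainder, whereas your expansion leaves a harmless $\sigma_1(M_t)\delta_t^2$ term.
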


\begin{algorithm}[t]
	\caption{Quantum Frank-Wolfe Algorithm for Nuclear Norm Constraint with QTSVE}\label{alg:qfwm1}
			\resizebox{1.0\columnwidth}{!}{
\begin{minipage}{\columnwidth}
	\begin{algorithmic}[1]
            \State {\textbf{Input:}} Solution precision $\varepsilon$, singular value precision $\{\epsilon_t\}_{t=1}^{T}$, tomography precision $\{\delta_t\}_{t=1}^{T}$.
            \State {\textbf{Output:}} $X^{(T)}$ such that $f(X^{T})-f(X^*)\le\varepsilon$.
	    \State \textbf{Initialize:} Let $X^{(1)} \in \cD$. 
        \State Let $T=\frac{4C_f}{\varepsilon}-2$.
            \For{$t=1, ...,T$ }
                \State Let $\gamma_t=\frac{2}{t+2}$.
                \State Prepare $\frac{1}{\norm{M}_F}\sum_i^d\sum_j^d M_{ij} \ket{i}\ket{j}\ket{0}$.
                \State Perform QSVE (Lemma \ref{Lemma:QSVE}) to get $\frac{1}{\sqrt{\sum_i^r}\sigma_i^2} \sum_i^r \sigma_i \ket{\bu_i}\ket{\bv_i} \ket{\overline{\sigma}_i}, where |\sigma_i-\overline{\sigma}_i|\leq \epsilon_t$.
                \State Apply quantum maximum finding (Lemma \ref{Lemma:QMF}) to the third register to get $\ket{\bu_{top}}\ket{\bv_{top}}\ket{\overline{\sigma}_1}$.
                \State Perform $\ell_2$-norm tomography (Lemma \ref{Lemma:L2T}), to obtain $\bu$, $\bv$, where $\norm{\bu-\bu_{top}}\leq\delta_t$, $\norm{\bv-\bv_{top}}\leq\delta_t$.
                \State Set $S=\bu\bv^{\top}$. Update $X^{(t+1)}=(1-\gamma_t) X^{(t)}+\gamma_t S$.
            \EndFor
	\end{algorithmic}
	\end{minipage}}
\end{algorithm}

In computing the update direction, Algorithm \ref{alg:qfwm1} reduces a $O(d\varepsilon(\sigma_1(M)-\sigma_2(M))/r \sigma^2_1(M))$ factor to the power method and $O(d{\varepsilon}^{1.5}(\sigma_1(M)-\sigma_2(M))/r\sigma^{2.5}_1(M))$ to the Lanczos method, respectively.

\subsection{Quantum Frank-Wolfe with Quantum Power Method}
\label{subsec:qfwQPM}
The second framework is to accelerate the power method directly with quantum matrix-vector multiplication method and quantum tomography. 
The classical power method constructs a sequence $\bz_0, ..., \bz_k$, where $\bz_0=\boldsymbol{b}$ is drawn uniformly random over a unit sphere ${\boldsymbol{b}:\norm{\boldsymbol{b}}_2=1}$, and $\bz_{i+1}=M^{\top}M\bz_{i}$ for $i\ge 1$, ($\bz_{i+1}=MM^{\top}\bz_{i}$ for the left singular vector, respectively).
After $k=\frac{C_0\sigma_1(M)\ln d}{\varepsilon}$, we have $\abs{\frac{\bz_k^{\top}M\bz_k}{\norm{\bz_k}_2^2}-\sigma_1(M)}\le \varepsilon$, 
where $C_0$ is a constant. 

\textbf{Quantum power method (QPM).} Using the quantum access given in Assumption \ref{asmp:QMA}, the quantum matrix-vector multiplication can be performed efficiently:

\begin{lemma}(Quantum matrix-vector multiplication (Theorem 4 \cite{bellante2022quantum}), \cite{chakraborty2018power})\label{lem:q_multiply}
    Let there be quantum access to the matrix $M\in R^{d \times d}$ with $\sigma_{max}\leq 1$, and to a vector $\bz\in R^d$. Let $\norm{M\bz}\geq \gamma'$. There exists a quantum algorithm that creates a state $\ket{\by}$ such that $\norm{\ket{\by}-\ket{M\bz}}\leq \epsilon$ in time $\tilde{O}\left(\frac{1}{\gamma'} \norm{M}_F \log(1/\epsilon)\right)$, with probability at least $1-1/poly(d)$.
\end{lemma}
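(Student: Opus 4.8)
The plan is to realize the matrix--vector product $M\bz$ as a single application of a block-encoding of $M$ to the state $\ket{\bz}$, followed by amplitude amplification to renormalize, with the subnormalization factor $\norm{M}_F$ tracked throughout. First I would use the quantum access of Assumption \ref{asmp:QMA} to build a block-encoding of $M$: the two data-structure mappings there (row loading $\ket{i}\ket{0}\to\ket{i}\ket{M_{i,\cdot}}$ and row-norm loading $\ket{0}\to\frac{1}{\norm{M}_F}\sum_i\norm{M_{i,\cdot}}\ket{i}$) are exactly the primitives that compose into a unitary $U_M$ on $\lceil\log d\rceil + O(1)$ qubits whose top-left block equals $M/\norm{M}_F$, with each use of $U_M$ costing $\widetilde{O}(1)$ time. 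The hypothesis $\sigma_{\max}\le 1$ (indeed $\sigma_{\max}\le\norm{M}_F$ always) ensures $\norm{M/\norm{M}_F}\le 1$, so $U_M$ is a valid block-encoding.

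Next I would prepare $\ket{\bz}$ via the vector access and apply $U_M$, giving
\begin{align}
U_M\bigl(\ket{0}\otimes\ket{\bz}\bigr) = \ket{0}\otimes\frac{M\bz}{\norm{M}_F} + \ket{\perp},
\end{align}
where $\ket{\perp}$ carries no component with the ancilla register in $\ket{0}$. Hence a measurement of the ancilla returns $\ket{0}$, together with the normalized target state $\ket{M\bz}$, with probability $\norm{M\bz}^2/\norm{M}_F^2 \ge (\gamma'/\norm{M}_F)^2$. The ``good'' amplitude is therefore at least $\gamma'/\norm{M}_F$, so amplitude amplification distills $\ket{M\bz}$ using $O(\norm{M}_F/\gamma')$ applications of $U_M$ and $U_M^{\dagger}$.

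To obtain the stated $\epsilon$-accuracy and $\log(1/\epsilon)$ factor, I would run a fixed-point variant of amplitude amplification, which drives the output to within $\epsilon$ of $\ket{M\bz}$ in $O\bigl(\frac{\norm{M}_F}{\gamma'}\log(1/\epsilon)\bigr)$ rounds without needing to know the success amplitude exactly; combined with the $\widetilde{O}(1)$ cost per call to $U_M$ this yields the claimed $\widetilde{O}\bigl(\frac{1}{\gamma'}\norm{M}_F\log(1/\epsilon)\bigr)$ running time, and the $1-1/\mathrm{poly}(d)$ success probability follows from the standard amplitude-amplification and data-structure guarantees. The main obstacle I anticipate is the careful bookkeeping of the subnormalization $\norm{M}_F$: one must confirm that the quantity governing the iteration count is the amplitude $\gamma'/\norm{M}_F$ rather than its square, and that the fixed-point scheme controls the error in the \emph{normalized} state $\ket{M\bz}$ even though the raw success probability is only bounded from below. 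Using the fixed-point (amplitude-oblivious) scheme is precisely what keeps this overhead at $\log(1/\epsilon)$ rather than $\mathrm{poly}(1/\epsilon)$.
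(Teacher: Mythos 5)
The paper does not prove this lemma at all: it is imported verbatim as Theorem~4 of \cite{bellante2022quantum} (building on \cite{chakraborty2018power}), so there is no internal proof to compare against. Your sketch is a sound reconstruction of how that cited result is established: the data-structure access of Assumption~\ref{asmp:QMA} does yield a block-encoding of $M/\norm{M}_F$ at $\widetilde{O}(1)$ cost per use, applying it to $\ket{\bz}$ leaves the good branch $\ket{0}\otimes M\bz/\norm{M}_F$ with amplitude $\norm{M\bz}/\norm{M}_F\ge\gamma'/\norm{M}_F$, and amplifying costs $O(\norm{M}_F/\gamma')$ rounds. Your bookkeeping is also right on the one point you flag: the iteration count is governed by the amplitude $\gamma'/\norm{M}_F$, not its square. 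The only place where your account diverges from the cited works is the provenance of the $\log(1/\epsilon)$ factor. You attribute it to fixed-point (amplitude-oblivious) amplification, which is a legitimate route — driving the fidelity with the normalized projection onto the ancilla-$\ket{0}$ subspace to $1-\epsilon^2$ indeed yields $\norm{\ket{\by}-\ket{M\bz}}\le O(\epsilon)$ — but in \cite{chakraborty2018power} and \cite{bellante2022quantum} that factor instead arises from the precision of implementing the (non-unitary) matrix application via singular-value transformation, i.e.\ the block-encoding must be built to accuracy $O(\epsilon\gamma'/\norm{M}_F)$, which costs $\mathrm{polylog}(1/\epsilon)$. Both mechanisms land on the same $\widetilde{O}\bigl(\frac{\norm{M}_F}{\gamma'}\log(1/\epsilon)\bigr)$ bound; if you pursue your version you should additionally verify that the block-encoding itself can be realized to sufficient precision without dominating the cost, since in practice both sources of error are present and must be budgeted together.
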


Apply $2k$ times of quantum matrix-vector multiplication, we can get a quantum state corresponding to $\bz_k$, as shown in Lemma \ref{lemma:qpm}, with proof given in Appendix \ref{proof:lemma:qpm}.
A similar process can be constructed to compute $(MM^{\top})^k\bb$ (corresponding to the left singular vector) simultaneously. 

\begin{restatable}{lemma}{lemmaqpm}(Quantum power method)\label{lemma:qpm}
    Let there be quantum access to the matrix $M\in R^{d \times d}$ with $\sigma_{max}\leq 1$, and to a vector $\bz\in R^d$. Let ${\gamma'}_{\min}$ be the lower bound of $\norm{(M^{\top}M)^i \bz)}$ for all $i\in [k]$. There exists a quantum algorithm that creates a state $\ket{\by}$ such that $\norm{\ket{\by}-\ket{(M^{\top}M)^k\bz}}\leq \delta$ in time $\tilde{O}(\frac{k}{{\gamma'}_{\min}} \norm{M}_F \log(1/\delta))$, with probability at least $1-O(k/poly(d))$.
\end{restatable}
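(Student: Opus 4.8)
The plan is to realize the classical power-method recursion $\bz_{i+1}=M^{\top}M\bz_i$ directly on normalized quantum states, alternately multiplying by $M$ and by $M^{\top}$, so that reaching $\ket{(M^{\top}M)^k\bz}$ costs exactly $2k$ invocations of the quantum matrix-vector multiplication of Lemma \ref{lem:q_multiply}. Writing $\ket{\psi_0}=\ket{\bz}$ and letting $\ket{\psi_j}$ denote the \emph{exact} normalized state after $j$ multiplications (so $\ket{\psi_{2i}}=\ket{(M^{\top}M)^i\bz}$), the $j$-th step feeds the circuit preparing the current approximation $\ket{\tilde\psi_{j-1}}$ into Lemma \ref{lem:q_multiply} as the required quantum access to the input vector, and returns an approximation $\ket{\tilde\psi_j}$ of $\ket{M\tilde\psi_{j-1}}$ on odd steps and of $\ket{M^{\top}\tilde\psi_{j-1}}$ on even steps. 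The proof then consists of (i) checking the norm precondition of Lemma \ref{lem:q_multiply} at every step, (ii) bounding the accumulated error, (iii) a union bound on the success probability, and (iv) summing the per-step times.

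For (i), I would verify that ${\gamma'}_{\min}$ is a valid choice of $\gamma'$ at every one of the $2k$ steps. Since $\sigma_{max}\le 1$ we have $\norm{M^{\top}\boldsymbol{w}}\le\norm{\boldsymbol{w}}$, hence $\norm{M(M^{\top}M)^i\bz}\ge\norm{(M^{\top}M)^{i+1}\bz}\ge{\gamma'}_{\min}$; dividing by the at-most-unit norm of the normalized input shows that the norm produced by \emph{each} normalized multiplication (both the $M$ and the $M^{\top}$ steps) is bounded below by ${\gamma'}_{\min}$. Consequently each step runs in time $\tilde{O}\!\left(\frac{1}{{\gamma'}_{\min}}\norm{M}_F\log(1/\epsilon)\right)$ for a per-step precision $\epsilon$, and a union bound over the $2k$ steps, each succeeding with probability $1-1/\text{poly}(d)$, gives the claimed overall success probability $1-O(k/\text{poly}(d))$.

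For the error accumulation I would track $\eta_j=\norm{\ket{\tilde\psi_j}-\ket{\psi_j}}$ and split the $j$-th step into its intrinsic multiplication error $\epsilon$ (promised by Lemma \ref{lem:q_multiply}) and the propagation of the incoming error $\eta_{j-1}$ through the map $\ket{\psi}\mapsto\ket{M\psi}$. The numerator perturbation is at most $\sigma_{max}\eta_{j-1}\le\eta_{j-1}$, and writing $\widehat{\boldsymbol{a}}=\boldsymbol{a}/\norm{\boldsymbol{a}}$, the normalization inequality $\norm{\widehat{\boldsymbol{a}}-\widehat{\boldsymbol{b}}}\le 2\norm{\boldsymbol{a}-\boldsymbol{b}}/\max(\norm{\boldsymbol{a}},\norm{\boldsymbol{b}})$ together with $\norm{M\ket{\psi_{j-1}}}\ge{\gamma'}_{\min}$ controls the propagated part. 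The target is to argue that over $2k$ steps the errors accumulate essentially linearly, so that setting $\epsilon=\delta/(2k)$ yields $\eta_{2k}\le\delta$ at a cost of only $\log(1/\epsilon)=\tilde{O}(\log(1/\delta))$; summing the $2k$ per-step times then produces the advertised $\tilde{O}\!\left(\frac{k}{{\gamma'}_{\min}}\norm{M}_F\log(1/\delta)\right)$.

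The hard part will be precisely this error propagation, because the renormalization after each multiplication is nonlinear and its worst-case amplification factor is of order $1/{\gamma'}_{\min}\ge 1$; a naive geometric unrolling would inflate the required per-step precision to $\epsilon\sim\delta\,({\gamma'}_{\min}/2)^{2k}$ and thereby introduce an extra factor of $k$ into the running time. I expect the resolution to be arguing that the effective per-step amplification is bounded by a constant rather than by $1/{\gamma'}_{\min}$ — exploiting that the iterates concentrate on the top singular direction so that perturbations are contracted at the rate dictated by the spectral gap — which keeps the accumulation linear and lets the remaining $\log(1/{\gamma'}_{\min})$ overhead be absorbed into the $\tilde{O}$. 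Formalizing this tightening, rather than the bookkeeping of the lower bounds and the union bound, is where the genuine work lies.
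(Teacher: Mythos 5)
Your setup matches the paper's: chain $2k$ applications of Lemma \ref{lem:q_multiply}, take a union bound over the $2k$ steps for the $1-O(k/\mathrm{poly}(d))$ success probability, and pay $\log(1/\epsilon)$ per step for a per-step precision $\epsilon$. Your precondition check in (i) (that $\norm{M(M^{\top}M)^i\bz}\ge\norm{(M^{\top}M)^{i+1}\bz}\ge{\gamma'}_{\min}$, so ${\gamma'}_{\min}$ is a valid $\gamma'$ at every step) is a detail the paper does not spell out and is worth having.

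However, the heart of the lemma — the error accumulation — is exactly the step you leave open, and the resolution you sketch is not the one that works. You correctly observe that unrolling the recursion on \emph{normalized} states multiplies the incoming error by roughly $2/{\gamma'}_{\min}$ at each renormalization, giving a $({1/{\gamma'}_{\min}})^{\Theta(k)}$ blowup; but your proposed fix, that perturbations are contracted at the rate of the spectral gap because the iterates concentrate on the top singular direction, cannot be carried out from the lemma's hypotheses (no gap assumption is made, and a perturbation component along the second singular direction is attenuated only by the ratio $\sigma_2/\sigma_1$, which may be $1$). The paper instead runs the recursion on the \emph{unnormalized} iterates: writing $\norm{z_l-M^l x}\le\epsilon+\sigma_{\max}\norm{z_{l-1}-M^{l-1}x}$ and using $\sigma_{\max}\le 1$, the accumulated error is the geometric sum $\sum_{i}\sigma_{\max}^{i-1}\epsilon\le\epsilon/(1-\sigma_1(M))$, so the total error is controlled with only a $1/(1-\sigma_1(M))$ overhead — which is why that factor appears in the runtime of Theorem \ref{theorem:qfwqpm} and in Table \ref{table:resultM} (the proof of the lemma in fact concludes with $\tilde{O}\bigl(\tfrac{k\norm{M}_F}{(1-\sigma_1(M)){\gamma'}_{\min}}\log(1/\delta)\bigr)$, not the cleaner bound in the lemma statement). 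So the missing idea is to exploit that multiplication by $M$ is a contraction on unnormalized errors, not a spectral-gap concentration argument; without it (or some substitute), your per-step precision would have to be taken exponentially small in $k$ and the claimed runtime would not follow. (Both your write-up and the paper's still owe a careful reconciliation of the fact that Lemma \ref{lem:q_multiply} guarantees error on normalized states while this recursion is most naturally run on unnormalized vectors, but at least the paper's contraction route avoids the exponential blowup.)
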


\begin{algorithm}[t]
	\caption{Quantum Frank-Wolfe Algorithm for Nuclear Norm Constraint with QPM}\label{alg:qfwm2}
			\resizebox{1.0\columnwidth}{!}{
\begin{minipage}{\columnwidth}
	\begin{algorithmic}[1]
            \State {\textbf{Input:}} Solution precision $\varepsilon$, multiplication times $\{k_t\}_{t=1}^T$, multiplication precision $\{\delta_t\}_{t=1}^T$, tomography precision $\{\delta'_t\}_{t=1}^T$.
            \State {\textbf{Output:}} $X^{(T)}$ such that $f(X^{T})-f(X^*)\le\varepsilon$.
	    \State \textbf{Initialize:} Let $X^{(1)} \in \cD$. 
            \State Let $T=\frac{4C_f}{\varepsilon}-2$.
            \For{$t=1, ...,T$ }
                \State Let $\gamma_t=\frac{2}{t+2}$.
                \State Prepare $\frac{1}{\norm{M}_F}\sum_i^d\sum_j^d M_{ij} \ket{i}\ket{j}\ket{\bb}\ket{\bb}$, where $\bb$ is the uniform superposition state.
                \State Apply quantum power method (Lemma \ref{lemma:qpm}) to get $\frac{1}{\norm{M}_F}\sum_i^d\sum_j^d M_{ij} \ket{i}\ket{j}\ket{\overline{\bz}_u}\ket{\overline{\bz}_v}$, where $\norm{\overline{\bz}_u-(MM^{\top})^k \bb}\le \delta_t, \norm{\overline{\bz}_v-(M^{\top}M)^k \bb} \le \delta_t$.
                \State Perform $\ell_2$-norm tomography (Lemma \ref{Lemma:L2T}) to obtain $\bu,\bv$, where $\norm{\bu-\overline{\bz}_u}\le\delta'_t,\norm{\bv-\overline{\bz}_v}\le\delta'_t$.
                \State Set $S=\bu\bv^{\top}$. Update $X^{(t+1)}=(1-\gamma_t) X^{(t)}+\gamma_t S$.
            \EndFor
	\end{algorithmic}
	\end{minipage}}
\end{algorithm}

\textbf{Convergence Analysis.} After quantum state tomography, we can extract the classical top singular vectors.
Note that the success probability of QPM and tomography can be improved by repeating the whole procedure logarithmic times and then taking the average.
Our quantum Frank-Wolfe algorithm (Algorithm \ref{alg:qfwm2}) for nuclear norm constraint then follows, and the parameters choosing and convergence analysis are given in Theorem \ref{theorem:qfwqpm}, with the proof given in Appendix \ref{proof:theoremqfwqpm}.

\begin{restatable}{theorem}{theoremqfwqpm}(Quantum FW with QPM)\label{theorem:qfwqpm}
    By setting $k_t=\frac{2C_0\sigma_1(M_t)\ln d}{\varepsilon},\delta_t=\delta'_t=\frac{\varepsilon {\gamma'}_{\min}}{16 \sigma_1(M)}$ for $t\in[T]$, the quantum algorithm (Algorithm \ref{alg:qfwm2}) solves the nuclear norm constraint optimization problem for any precision $\varepsilon$ such that $f(X^{T})-f(X^*)\le\varepsilon$ in $T=\frac{4C_f}{\varepsilon}-2$ rounds, with time complexity $\tilde{O}\left(\frac{\sqrt{r}\sigma_1^4(M)d}{(1-\sigma_1(M)){\gamma'}_{\min}^3\varepsilon^3}\right)$ for computing the update direction per round, where $r$ is the rank of the gradient matrix, $C_0$ is a constant and ${\gamma'}_{\min}$ is the lower bound of $\norm{(M^{\top}M)^i \bb)}$ for all $i\in [k]$.
\end{restatable}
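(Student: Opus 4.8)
The plan is to separate the argument into a convergence part and a complexity part, paralleling the QTSVE proof of Theorem~\ref{theorem:qfwqsvd}. For convergence, I would first reduce everything to controlling the additive error with which the linear subproblem is solved at step $t$, namely $\eta_t \defeq |\bu^\top M_t \bv - \sigma_1(M_t)|$, where $S=\pm\bu\bv^\top$ is the direction produced by the algorithm, $M_t=\nabla f(X^{(t)})$, and the subproblem optimum $\min_{\norm{S}_{\tr}\le 1}\inner{S,M_t}=-\sigma_1(M_t)$ has magnitude $\sigma_1(M_t)$. Plugging an $\eta_t$-approximate minimizer into the Frank--Wolfe update and using the curvature bound gives the one-step recursion $h_{t+1}\le(1-\gamma_t)h_t+\tfrac{\gamma_t^2}{2}C_f+\gamma_t\eta_t$ for $h_t\defeq f(X^{(t)})-f(X^*)$, where the cross term is controlled via convexity and $X^*\in\cD$. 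With $\gamma_t=\tfrac{2}{t+2}$, a short induction shows $h_t\le\tfrac{2C_f}{t+2}+\max_{s\le t}\eta_s$, the additive term being preserved exactly because $(1-\gamma_t)+\gamma_t=1$. Hence $T=\tfrac{4C_f}{\varepsilon}-2$ rounds together with a uniform bound $\eta_t\le\varepsilon/2$ yield $h_T\le\varepsilon$. Unlike Lemma~\ref{lem:conv_rate}, here it is cleaner to carry a uniform additive budget than a multiplicative-in-$\gamma_t$ quality $\delta$, which is why the $\delta_t,\delta'_t$ in the statement do not depend on $t$.

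The second step is to certify $\eta_t\le\varepsilon/2$ from the chosen parameters, by splitting $\eta_t$ with the triangle inequality into a power-method bias and a quantum approximation error. Writing $\hat\bu,\hat\bv$ for the normalized exact power iterates $(MM^\top)^k\bb$ and $(M^\top M)^k\bb$, the bias $|\hat\bu^\top M\hat\bv-\sigma_1(M)|$ is controlled by the classical power-method guarantee (of the form stated above) that $k_t=\tfrac{2C_0\sigma_1(M_t)\ln d}{\varepsilon}$ iterations drive the iterate directions close enough to the true top singular vectors to make this bias $O(\varepsilon)$. For the approximation error, Lemma~\ref{lemma:qpm} gives $\norm{\overline{\bz}_u-\hat\bu}\le\delta_t$ and tomography (Lemma~\ref{Lemma:L2T}) gives $\norm{\bu-\overline{\bz}_u}\le\delta'_t$, so $\norm{\bu-\hat\bu}\le\delta_t+\delta'_t$ and likewise for $\bv$; a Cauchy--Schwarz expansion of $\bu^\top M\bv-\hat\bu^\top M\hat\bv$ then yields $|\bu^\top M\bv-\hat\bu^\top M\hat\bv|\le 2\sigma_1(M)(\delta_t+\delta'_t)$. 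Substituting $\delta_t=\delta'_t=\tfrac{\varepsilon{\gamma'}_{\min}}{16\sigma_1(M)}$ and using ${\gamma'}_{\min}\le 1$ makes this piece $\le\varepsilon/4$, and fixing the constant $C_0$ so the bias is $\le\varepsilon/4$ gives $\eta_t\le\varepsilon/2$.

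For the complexity I would account for the dominant cost, the tomography of Lemma~\ref{Lemma:L2T}, whose running time is its state-preparation cost times $\tfrac{d\log d}{{\delta'_t}^2}$, where state preparation is one invocation of the quantum power method of Lemma~\ref{lemma:qpm} at cost $\tilde O\!\big(\tfrac{k_t}{{\gamma'}_{\min}}\norm{M}_F\log(1/\delta_t)\big)$. Using $k_t=\tilde O(\sigma_1(M)/\varepsilon)$, the rank bound $\norm{M}_F\le\sqrt{r}\,\sigma_1(M)$, and $\delta'_t=\tilde O({\gamma'}_{\min}\varepsilon/\sigma_1(M))$, multiplying the two factors and collecting powers of $\sigma_1(M)$, $\varepsilon$, and ${\gamma'}_{\min}$ gives the stated $\tilde O\!\big(\tfrac{\sqrt{r}\,\sigma_1^4(M)\,d}{(1-\sigma_1(M)){\gamma'}_{\min}^3\varepsilon^3}\big)$ per round. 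The extra $(1-\sigma_1(M))$ factor enters when bounding the error accumulated across the $2k$ matrix--vector multiplications inside Lemma~\ref{lemma:qpm}: it arises from a geometric series in $\sigma_1(M)$ tied to the $\sigma_{\max}\le 1$ normalization required by Lemma~\ref{lem:q_multiply}. Finally I would handle the success probability by a union bound over the $T$ rounds and the $2k$ multiplications, each failing with probability $1/\text{poly}(d)$, amplified by the logarithmic-repetition-and-average step noted before the theorem.

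I expect the main obstacle to be the error composition across the $2k$ sequential quantum matrix--vector multiplications: each produces only a normalized approximate state with its own error and constant success probability, and one must show these errors combine additively (not multiplicatively) into the single bound $\norm{\overline{\bz}_u-\hat\bu}\le\delta_t$, while the shrinking iterate norms, bounded below by ${\gamma'}_{\min}$, keep every intermediate multiplication within the hypotheses of Lemma~\ref{lem:q_multiply}. Tracking how ${\gamma'}_{\min}$ and the $(1-\sigma_1(M))$ geometric factor enter this accumulation, rather than the final substitution of parameters, is the delicate part.
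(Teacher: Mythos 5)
Your proposal is correct and follows essentially the same route as the paper's proof: split the subproblem error into the classical power-method bias (fixed by $k_t$) and the quantum approximation error from Lemma \ref{lemma:qpm} plus tomography, bound the latter via a Lemma \ref{lemma:technical1}-style expansion, feed the resulting additive error into the standard Frank--Wolfe recursion, and multiply the QPM state-preparation cost by the $d/(\delta')^2$ tomography overhead to get the stated complexity. The only differences are bookkeeping: you carry a uniform additive budget $\eta_t\le\varepsilon/2$ rather than the paper's $\gamma_t$-scaled quality, and you place the $1/{\gamma'}_{\min}$ factor in the renormalization slack rather than in the error bound itself, both of which only move constants and still land on the paper's parameter choices and final complexity.
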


In computing the update direction, Algorithm \ref{alg:qfwm2} reduces a $O(d\varepsilon^2{\gamma'}^3_{\min}/\sqrt{r} \sigma^3_1(M))$ factor to the power method and $O(d{\varepsilon}^{2.5}{\gamma'}^3_{\min}/\sqrt{r}\sigma^{3.5}_1(M))$ to the Lanczos method. 
A discussion of this section is given in Appendix \ref{sec:discussion}.

\section{Conclusion and Future Work}
\label{sec:conclude}
This paper addresses the projection-free sparse convex optimization problem. 
We propose several quantum Frank-Wolfe algorithms for both vector and matrix domains, demonstrating the quantum speedup over the classical methods with respect to the dimension of the feasible set.

For future work, we aim to extend quantum Frank-Wolfe methods to stochastic and online optimization frameworks, with a focus on characterizing quantum advantages in projection-free regret minimization.
Meanwhile, \cite{jaggi2013revisiting} highlights several interesting cases involving matrix norms, where classical approaches often rely on computationally expensive singular value decomposition. 
A potential avenue of interest is determining whether quantum singular value estimation can yield greater speedups in such settings.
These investigations would collectively advance the understanding of quantum-enhanced projection-free optimization in high-dimensional spaces.

\section*{Acknowledgement}
The authors wish to thank Xutong Liu and Jonathan Allcock for valuable discussion.

\bibliographystyle{plain}
\bibliography{main}

\newpage
\appendix
\onecolumn

\textbf{Appendix}

\section{Extension and Discussion}

\subsection{Quantum Frank-Wolfe over Vectors with Bounded-error Jordan Algorithm}
\label{sec:QFWJ}

The quantum Frank-Wolfe Algorithm with Bounded-error Jordan’s Algorithm is shown in Algorithm \ref{alg:q_fwJ}.
We reformulate the results of the bounded-error Jordan algorithm from \cite{he2024quantum} in terms of infinity norm error, with the proof detailed in the Appendix \ref{proof:QGE}.

\begin{algorithm}[ht]
	\caption{Quantum Frank-Wolfe Algorithm with Bounded-error Jordan Algorithm}\label{alg:q_fwJ}
			\resizebox{1.0\columnwidth}{!}{
\begin{minipage}{\columnwidth}
	\begin{algorithmic}[1]
	    \State {\textbf{Input:}} Solution precision $\varepsilon$, gradient precision $\{\sigma_t\}_{t=1}^{T}$.
            \State {\textbf{Output:}} $\bx^{(T)}$ such that $f(\bx^{T})-f(\bx^*)\le\varepsilon$.
	   \State \textbf{Initialize:} Let $\bx^{(1)} \in \cD$. 
       \State Let $T=\frac{4C_f}{\varepsilon}-2$.
	   \For{$t=1, ...,T$ }
    \State Let $\gamma_t=\frac{2}{t+2}$.
    \State Using Algorithm \ref{Alg:JQGE} to get the whole vector of estimated gradient $\widetilde{\nabla} f_t(\bx_t)$.
    \State Scan all the component of $\widetilde{\nabla} f_t(\bx_t)$ to find the coordinate $i_t$ corresponding to the largest absolute value of the estimated gradient component.
    \State Set $\bs=-\be_{i_t}$. Update $\bx^{(t+1)}=(1-\gamma_t) \bx^{(t)}+\gamma_t \bs$.
	    	   \EndFor
		\end{algorithmic}
		\end{minipage}}
\end{algorithm}

\begin{restatable}{lemma}{QGB}(Lemma 1 \cite{he2024quantum})
\label{QGB}
    If $f$ is $G$-Lipschitz  continues and $L$-smooth convex function and can be accessed by a quantum function value oracle, then there exists an quantum algorithm that for any $r>0$ and $1\geq \rho >0$, gives the estimated gradient $g(x)$, which satisfies
    \begin{align}
         \Pr[\|{g(x)-\nabla f(x)}\|_{\infty} >8 \pi n^2 (n/\rho +1) L r/\rho] <  \rho, 
    \end{align}
    using $O(1)$ applications of $\bU_f$ and $O(d \log d)$ elementary gates. The space complexity is $O\left(d \log{\frac{G\rho}{4 \pi d^2 L r}}\right)$.
\end{restatable}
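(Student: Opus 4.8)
The plan is to establish the lemma as an $\ell_\infty$ reformulation of Jordan's quantum gradient estimation algorithm in the bounded-error (probabilistic oracle) regime. First I would recall the Jordan construction adapted to the function value oracle $\bU_f$ of Assumption \ref{asmp:oracle}: fix a small hypercube of side length proportional to $r$ centered at $\bx$, discretize each of the $d$ coordinate axes into $2^b$ grid points (so the register uses $d\cdot b$ qubits, which will yield the stated space complexity), and prepare the uniform superposition over all grid points. A single layer of phase kickback through $\bU_f$ encodes the function value at each grid point into a relative phase, and this is the step that costs only $O(1)$ applications of $\bU_f$. Applying the inverse quantum Fourier transform independently to each of the $d$ coordinate registers and then measuring returns, in the $i$-th register, an estimate of $\nabla_i f(\bx)$; the QFT layer is what contributes the $O(d\log d)$ elementary gates.

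Next I would carry out the error analysis, splitting it into a deterministic nonlinearity term and a probabilistic oracle/measurement term. By $L$-smoothness (Assumption \ref{asmp:func}), the phase actually implemented deviates from the ideal affine phase $\langle \nabla f(\bx),\boldsymbol{\delta}\rangle$ by at most $\tfrac{L}{2}\|\boldsymbol{\delta}\|_2^2$, and on the grid the constraint $\|\boldsymbol{\delta}\|_\infty \le r/2$ forces $\|\boldsymbol{\delta}\|_2^2 = O(dr^2)$; this curvature term is what makes the final additive error scale with $r$ and with a polynomial in the dimension. Propagating this perturbed phase through the inverse QFT bounds the probability that any single register's readout is off by more than the target additive error. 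The $G$-Lipschitz hypothesis caps $\|\nabla f(\bx)\|_\infty \le G$, which fixes the dynamic range each register must represent and hence pins down the qubit count $b$; this is the source of the $\log\frac{G\rho}{4\pi d^2 L r}$ factor in the space bound.

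The third step is the reformulation into $\ell_\infty$. Because the $d$ registers are measured jointly, a union bound over the $d$ coordinates turns the per-coordinate success guarantee into a simultaneous one, giving $\Pr[\|g(\bx)-\nabla f(\bx)\|_\infty > \epsilon] < \rho$ once the per-coordinate failure probability is driven below $\rho/d$; the inflation of the resolution/failure budget needed to do this is precisely what produces the polynomial-in-dimension and inverse-polynomial-in-$\rho$ factors. Substituting the chosen grid resolution and collecting constants then yields the displayed expression $8\pi n^2(n/\rho+1)Lr/\rho$ (with $n$ the dimension), matching \cite{he2024quantum} after passing from its native error measure to the $\ell_\infty$ norm.

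I expect the main obstacle to be the error analysis of the inverse QFT acting on a \emph{non-affine} phase: one must show that the curvature-induced phase deviation, together with the probabilistic error of the bounded-error oracle, perturbs the measured Fourier peak by no more than the claimed additive amount, and that this holds uniformly over all $d$ coordinates rather than merely in an $\ell_2$ or in-expectation sense. The delicate bookkeeping is reconciling the mutual dependence of $r$, $\rho$, $d$, and $L$ so that the discretization error does not swamp the curvature bound, the union bound keeps the total failure probability at $\rho$, and the qubit count stays at the claimed order; keeping the explicit constants consistent with \cite{he2024quantum} while converting the error guarantee to $\ell_\infty$ is the crux.
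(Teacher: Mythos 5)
Your proposal follows essentially the same route as the paper's proof: the Jordan construction with a single phase-kickback query to $\bU_f$, per-coordinate inverse QFTs accounting for the $O(d\log d)$ gates, an error analysis that separates the $L$-smoothness (curvature) perturbation of the phase from the intrinsic phase-estimation failure probability, and a union bound over the $d$ coordinates to convert the per-coordinate guarantee into the $\ell_\infty$ statement with total failure probability $\rho$. The paper makes the "propagating the perturbed phase" step precise by bounding the trace distance between the actual and idealized states via $2\lVert\ket{\phi}-\ket{\psi}\rVert$, which is exactly the mechanism you flag as the crux, so the proposal is correct and matches the paper's argument.
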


The next step is to determine the quantum gradient estimated parameters $r_t$ in each Frank-Wolfe iteration through convergence analysis.

\begin{restatable}{theorem}{TheoremQFWJ}(Quantum FW with bounded-error Jordan algorithm) 
\label{Theorem:QFWJ}
By setting $r_t=\frac{\rho  C_f}{16 \pi d^2 (d/\rho +1) L (t+2)}$ for $t\in[T]$,  the quantum algorithm (Algorithm \ref{alg:q_fwJ}) solves the sparsity constraint optimization problem for any precision $\varepsilon$ such that $f(\bx^{T})-f(\bx^*)\le\varepsilon$ in $T=\frac{4C_f}{\varepsilon}-2$ rounds, with $O\left(1\right)$ calls to the function value oracle $\bU_f$ per round.
\end{restatable}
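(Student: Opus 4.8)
The plan is to mirror the convergence analysis of Theorem \ref{Theorem:QFWSC}, replacing the finite-difference oracle plus quantum maximum finding with the bounded-error Jordan gradient estimation of Lemma \ref{QGB}, which returns the \emph{entire} gradient vector with a controlled $\ell_\infty$ error using only $O(1)$ queries to $\bU_f$. The whole argument then reduces to calibrating the Jordan parameter $r_t$ so that the per-iteration gradient error is small enough for the Frank-Wolfe convergence guarantee of Lemma \ref{lem:conv_rate} to apply at a constant subproblem quality $\delta$, after which the classical scan over the full estimated vector plays the role that quantum maximum finding played before.

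First I would invoke Lemma \ref{QGB} with the stated choice $r_t=\frac{\rho C_f}{16 \pi d^2 (d/\rho +1) L (t+2)}$. Substituting $r_t$ into the error bound $8\pi d^2 (d/\rho+1)Lr/\rho$ shows that the estimated gradient $\widetilde{\nabla} f(\bx^{(t)})$ satisfies $\norm{\widetilde{\nabla} f(\bx^{(t)})-\nabla f(\bx^{(t)})}_\infty \le \frac{C_f}{2(t+2)} =: \epsilon_g^{(t)}$ with probability at least $1-\rho$. Next comes the robustness step, which is the analogue of Lemma \ref{Lemma:QMF} carried out classically on the full vector. Let $i_t$ be the coordinate returned by the scan and $i^\star$ the true maximizer of $|\nabla_i f(\bx^{(t)})|$. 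A two-sided triangle inequality gives $|\nabla_{i_t}f(\bx^{(t)})|\ge |\widetilde{\nabla}_{i_t}f(\bx^{(t)})|-\epsilon_g^{(t)}\ge |\widetilde{\nabla}_{i^\star}f(\bx^{(t)})|-\epsilon_g^{(t)}\ge |\nabla_{i^\star}f(\bx^{(t)})|-2\epsilon_g^{(t)}$, so the selected direction $\bs=-\mathrm{sgn}(\widetilde{\nabla}_{i_t}f(\bx^{(t)}))\be_{i_t}$ attains $\inner{\bs,\nabla f(\bx^{(t)})}\le \min_{\hat{\bs}\in\cD}\inner{\hat{\bs},\nabla f(\bx^{(t)})}+2\epsilon_g^{(t)}$. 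Since $2\epsilon_g^{(t)}=\frac{C_f}{t+2}=\frac12\gamma_t C_f$, this is exactly the approximate-subproblem condition \cref{eq:classic_linear program} with $\delta=1$.

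With the subproblem quality fixed, I would apply Lemma \ref{lem:conv_rate} at $\delta=1$ to conclude $f(\bx^{(t)})-f(\bx^*)\le \frac{2C_f}{t+2}(1+1)=\frac{4C_f}{t+2}$, so $T=\frac{4C_f}{\varepsilon}-2$ iterations yield an $\varepsilon$-optimal point. A union bound over the $T$ rounds (taking $\rho$ of order $p/T$) controls the overall failure probability; crucially this $\rho$ enters only logarithmically through the Jordan algorithm's space complexity $O\!\left(d\log\frac{Gd}{\rho\varepsilon}\right)$ and never the query count. The per-round query cost is then immediate: each round calls Lemma \ref{QGB} once, costing $O(1)$ evaluations of $\bU_f$, which establishes the claim.

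The main obstacle is not conceptual but bookkeeping: correctly matching the Jordan error budget to the tolerance $\epsilon_g^{(t)}=\frac{C_f}{2(t+2)}$ that the Frank-Wolfe analysis demands at each step, and propagating the probabilistic guarantee across all $T$ iterations. The delicate point is the $d^2(d/\rho+1)/\rho$ blow-up in the Jordan error, which forces $r_t$ to be polynomially small in $d$ and $1/\rho$; the payoff is that this entire cost is absorbed into space and gate count rather than query count, which is precisely why the per-round query complexity collapses from the $\widetilde{O}(\sqrt{d})$ of Theorem \ref{Theorem:QFWSC} to $O(1)$ here.
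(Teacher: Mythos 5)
Your proposal is correct and follows essentially the same route as the paper: calibrate $r_t$ so that Lemma \ref{QGB} gives an $\ell_\infty$ gradient error of $\frac{C_f}{2(t+2)}$, use the two-sided triangle inequality to bound the loss from scanning the estimated vector by $2\epsilon_g^{(t)}=\frac{1}{2}\gamma_t C_f$, and conclude the $\frac{4C_f}{t+2}$ rate with $O(1)$ queries per round and $\rho=p/T$ for the union bound. The only cosmetic difference is that you invoke Lemma \ref{lem:conv_rate} at $\delta=1$ as a black box, whereas the paper re-derives the recursion $h(\bx^{(t+1)})\le(1-\gamma_t)h(\bx^{(t)})+\gamma_t^2 C_f$ and closes it by the induction of Lemma \ref{Lemma:induction}; both yield the same bound.
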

The proof is given in Appendix \ref{proof:Theorem:QFWJ}. Substituting the parameter $r_t$ into the space complexity yields the qubit requirement as $O\left(d \log{\frac{Gd}{\rho\varepsilon}}\right)$.
Since each gradient estimation succeeds with probability $1-\rho$, the probability that all $T$ iterations succeed is at least $1-T\rho$. By setting $\rho=p/T$, we ensure an overall success probability of at least $1-p$.

\subsection{More Extensions over Vectors for Atomic Sets}
\label{sec:extension}
In this appendix, we give two more extension for the vector case. The first extension is to consider $\abs{\+A} = N$, with each $a_j\in \+A$ being $\tau$-sparse with non-zero $(index, value)$ pairs $(i_k,(a_j)_k)$, i.e., each $a_j\in\mb{R}^d$, but has only $\tau$ non-zero elements.
Assume that the non-zero elements are accessed with a quantum oracle $V$ which implements the transformation $V\ket{j}\ket{k}\ket{0}\ket{0}\rightarrow \ket{j}\ket{k}\ket{i_k}\ket{(a_j)_k}$.
One can construct a coherent access to the non-zero elements
\eq{ 
V^{\otimes \tau}\ket{j}\bigotimes_{k=1}^\tau\ket{k}\ket{0}\ket{0} &=
\ket{j}\bigotimes_{k=1}^\tau\ket{k}\ket{i_k}\ket{(a_j)_k} 
} 
using $\tau$ calls of $V$.
Then, a slight modification of the method of Section~\ref{subsec:QFWSC} can compute the FW update using $O(\tau\sqrt{N}\log(1/\delta))$ queries to $V$ and $U_g$.

The second extension is to consider latent group norm constraints, which have found use in inducing sparsity in problems in machine learning~\cite{jenatton2011structured}. 
The $\ell_1$ norm, $d$-simplex, group lasso etc. are all special cases of this.

Following~\cite{jaggi2013revisiting} we let $\+G = \{\*g_1, \*g_2, \ldots,\*g_{\abs{\+G}}\}$, $\*g_i\subseteq[d]$, $\bigcup_{i}\*g_i = [d]$. 
Note that the $\*g_i$ need not be disjoint. 
For each $\*g\in\+G$, let $\norm{\cdot}_{\*g}$ be an arbitrary $\ell_p$ norm, and define the \textbf{latent group norm} 
\eq{
\norm{x}_{\+G}&:= \min_{v_{(\*g)}\in\mb{R}^{\abs{\*g}}}\sum_{\*g\in\+G}\norm{v_{(\*g)}}_{\*g} \\
\text{s.t.}\quad x&=\sum_{\*g\in\+G}v_{[\*g]}
}
where $v_{(\*g)}\in\mb{R}^{\*g}$ is the restriction of $v\in\mb{R}^d$ to coordinates in $\*g$, and $v_{[\*g]}\in\mb{R}^d$ has zeros outside the support of $\*g$.  In this case, the Frank-Wolfe update corresponds to finding the value $s : \norm{s}_\+G \le 1$ such that $s^\top \nabla f(x) = \norm{\nabla f(x)}_\+G^*$, where $\norm{\nabla f(x)}_{\+G}^*=\max_{\bs:\norm{\bs}_\+G \le 1} s^\top \nabla f(\bx)$. 

By Section 4.1 in \cite{jaggi2013revisiting}, this norm is an atomic norm, and the dual norm is given by
\begin{equation}
\norm{\nabla f(x)}_{\+G}^* = \max_{\*g\in\+G}\norm{\nabla f(x)_{(\*g)}}_{\*g}^*,
\end{equation}
which implies that
\begin{align}
    \max_{s : \norm{s}_\+G \le 1}(-s^\top \nabla f(x)) = \max_{\*g\in \+G} \max_{s : \norm{s}_{\*g}\le 1}(-s^\top \nabla f(x)).
\end{align}
Therefore, it suffices to consider each $\norm{\cdot}_{\*g}$ ball separately, and then do quantum maximizing over all the $\abs{\+G}$ balls to find the one that has the largest value of $\norm{-\nabla f(x)_{\*g_i}}_{p_i}^*$.
The quantum Frank-Wolfe algorithm over latent group norm ball is then given in Algorithm \ref{alg:q_fwlg}. 
Note that by the absolute homogeneity property of dual norms, 
\begin{align}
    \norm{\nabla f(x)_{(\*g)}}_{\*g}^*=\norm{-\nabla f(x)_{(\*g)}}_{\*g}^*,
\end{align}
certain negative signs have been omitted in the algorithmic formulation.

\begin{algorithm}[ht]
	\caption{Quantum Frank-Wolfe Algorithm over Latent Group Norm Ball}\label{alg:q_fwlg}
			\resizebox{1.0\columnwidth}{!}{
\begin{minipage}{\columnwidth}
    \begin{algorithmic}[1]
	    \State {\textbf{Input:}} Gap $\varepsilon$, accuracy $\{\sigma_t\}_{t=1}^{T}$, iterations $T$.
	   \State \textbf{Initialize:} Let $\bx^{(1)} \in \cD$. 
    \For{$t=1, ...,T$ }
        \State Let $\gamma_t=\frac{2}{t+2}$, $\bx=\bx^{(t)}$.
        \State Prepare state $\sum_{i=1}^n\ket{i}_A\ket{\bx}\bigotimes_{j=1}^{\abs{\*g_i}}\ket{\*g_{i,j}} \ket{0}\ket{0}\ket{0}\ket{0}$.
        \State Perform quantum gradient circuit to get $\sum_{i=1}^n\ket{i}_A\ket{\bx}\bigotimes_{j=1}^{\abs{\*g_i}}\ket{\*g_{i,j}}\ket{g_{\*g_{i,j}}(\bx)} \ket{0}\ket{0}\ket{0}$, where $g_{\*g_{i,j}(\bx)}=\frac{f(\bx+\sigma_t \be_{\*g_{i,j}})-f(\bx)}{\sigma_t}$
        \State Compute $\sum_{i=1}^n\ket{i}_A\ket{\bx} \lp \bigotimes_{j=1}^{\abs{\*g_i}}\ket{\*g_{i,j}}\ket{g_{\*g_{i,j}}(\bx)}\ket{\mathsf{sgn}(g_{\*g_{i,j}}(\bx))\abs{g_{\*g_{i,j}}(\bx)}^{q_i-1}}\rp $ $ \ket{\norm{g(\bx)_{(\*g_i)}}_{p_i}} \ket{\norm{g(\bx)_{(\*g_i)}}_{p_i}^*}$.
        \State Apply quantum maximum finding on the last register, and then measure the rest registers, denote the result of the first register as $i_t$.
        \State Initial $\bs=0$, set $\bs_{\*g_{i_t,j}}=\mathsf{sgn}(g_{\*g_{i_t,j}}(\bx))\abs{g_{\*g_{i_t,j}}(\bx)}^{q_{i_t}-1}$ for $j=1$ to $\abs{\*g_i}$, where $\frac{1}{p_{i_t}}+\frac{1}{q_{i_t}}=1$. Then normalize $\bs$.
        \State Update $\bx^{(t+1)}=(1-\gamma_t) \bx^{(t)}+\gamma_t \bs$.
    \EndFor
    \end{algorithmic}
\end{minipage}}
\end{algorithm}

To simplify the proof of the query complexity of the quantum FW update (Lemma \ref{lemma:QFWLG}), we first assume that the gradient estimation and the maximum-finding are exact, with proof given in Appendix \ref{proof:lemma:qfwlg}.
Then we give the error analysis and show how to choose the parameters $\sigma_t$ in Theorem \ref{Theorem:QFWLG}, with proof given in Appendix \ref{proof:theorem:qfwlg}.

\begin{restatable}{lemma}{lemmaQFWLG}[Quantum FW update over latent group norm ball]
\label{lemma:QFWLG}
Let $\norm{\cdot}_{\+G}$ be a latent group norm corresponding to $\+G = \{\*g_1, \*g_2, \ldots,\*g_{\abs{\+G}}\}$, and let $\abs{\*g}_{\max} = \max_j \abs{\*g_j}$. 
Then, there exists a quantum algorithm computing the Frank-Wolfe update $s^* := \argmax_{\hat s\in \norm{\cdot}_{\+G}\text{-Ball}} \langle \hat{s}^\top g(\bx) \rangle$ in $O(\sqrt{\abs{\+G}}\abs{\*g}_{\max})$ calls to $\bU_f$.
\end{restatable}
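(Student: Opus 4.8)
The plan is to establish the query complexity by carefully accounting for the cost of each stage in Algorithm \ref{alg:q_fwlg} under the simplifying assumption that gradient estimation and maximum-finding are exact. The key observation is that the latent group norm dual is computed by the decomposition $\norm{\nabla f(x)}_{\+G}^* = \max_{\*g\in\+G}\norm{\nabla f(x)_{(\*g)}}_{\*g}^*$, so the task reduces to a two-level maximization: first compute the dual norm $\norm{g(\bx)_{(\*g_i)}}_{p_i}^*$ for each group $\*g_i$, then maximize over the $\abs{\+G}$ groups. First I would analyze the cost of preparing, for a single group index $i$, the quantum state holding all gradient components $g_{\*g_{i,j}}(\bx)$ for $j=1,\ldots,\abs{\*g_i}$. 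By Lemma \ref{Lemma:QGC}, each component $g_{\*g_{i,j}}(\bx)$ is obtained with two calls to $\bU_f$, and since we tensor over the $\abs{\*g_i}\le\abs{\*g}_{\max}$ coordinates of the group, this coherent preparation costs $O(\abs{\*g}_{\max})$ queries to $\bU_f$.

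Next I would handle the dual-norm arithmetic. Once the gradient components of group $\*g_i$ are in superposition, the dual $\ell_{q_i}$ norm $\norm{g(\bx)_{(\*g_i)}}_{p_i}^* = \left(\sum_j \abs{g_{\*g_{i,j}}(\bx)}^{q_i}\right)^{1/q_i}$ can be computed coherently by arithmetic circuits acting on the registers already populated, including the $\mathsf{sgn}(\cdot)\abs{\cdot}^{q_i-1}$ entries that will later define $\bs$; crucially this arithmetic does not invoke $\bU_f$ again, so it contributes no additional query cost. The outer maximization over the group index $i$ is then a quantum maximum-finding (Lemma \ref{Lemma:QMF}) over $\abs{\+G}$ values stored in the last register, which costs $O(\sqrt{\abs{\+G}})$ applications of the state-preparation-plus-arithmetic unitary. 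Multiplying the $O(\sqrt{\abs{\+G}})$ maximum-finding iterations by the $O(\abs{\*g}_{\max})$ per-group preparation cost yields the claimed $O(\sqrt{\abs{\+G}}\abs{\*g}_{\max})$ queries to $\bU_f$, and reading off the minimizing $\bs$ via the recovered index $i_t$ completes the update.

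The main obstacle I anticipate is verifying that the maximum-finding subroutine can be applied cleanly to the inner-computed dual-norm register despite the entanglement left over from the group-component registers. Specifically, the state before maximum-finding is a superposition over group indices $i$ with the dual norm $\norm{g(\bx)_{(\*g_i)}}_{p_i}^*$ coherently attached, but it also carries the auxiliary registers holding each group's full component data; I must ensure these ancillary registers do not spoil the structure that Lemma \ref{Lemma:QMF} requires, namely that the first register indexes the objects and the last register holds the values to be maximized. The cleanest route is to treat the entire per-group preparation-and-arithmetic as a single unitary $\bU_g'$ of the form required by generalized quantum maximum finding, absorbing the auxiliary registers as part of the ``output'' that maximum-finding preserves coherently. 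One subtlety to check is that the different groups may use different $\ell_{p_i}$ norms and hence different exponents $q_i$, so the arithmetic circuit must be controlled on the group index; this is a routine but notationally heavy point that I would state carefully rather than belabor, since it affects only the gate count and not the query complexity to $\bU_f$.
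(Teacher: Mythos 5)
Your proposal is correct and follows essentially the same route as the paper's proof: prepare, for each group index in superposition, the $O(\abs{\*g}_{\max})$ gradient components via the forward-difference oracle (two $\bU_f$ calls each), compute the dual norm $\norm{g(\bx)_{(\*g_i)}}_{p_i}^*$ by query-free arithmetic, and run quantum maximum finding over the $\abs{\+G}$ groups at a cost of $O(\sqrt{\abs{\+G}})$ invocations of that preparation unitary, giving $O(\sqrt{\abs{\+G}}\abs{\*g}_{\max})$ queries in total. Your added attention to the entangled ancilla registers and to controlling the exponent $q_i$ on the group index is a reasonable refinement of details the paper leaves implicit, but it does not change the argument.
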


\begin{restatable}{theorem}{TheoremQFWLG}[Quantum FW over latent group norm ball]
\label{Theorem:QFWLG}
By setting $\sigma_t = \frac{C_f}{\sqrt{d}L (t+2) \max_{i\in [\abs{\+G}]}|\*g_i|^{1/p_i}}$ for $t\in[T]$,  the quantum algorithm (Algorithm \ref{alg:q_fwlg}) solves the latent group norm constraint optimization problem for any precision $\varepsilon$ such that $f(\bx^{T})-f(\bx^*)\le\varepsilon$ in $T=\frac{4C_f}{\varepsilon}-2$ rounds, succeed with probability $1-p$, with $O\left(\sqrt{\abs{\+G}}\abs{\*g}_{\max} \log{\frac{C_f}{p\varepsilon}} \right)$ calls to the function value oracle $\bU_f$ per round.
\end{restatable}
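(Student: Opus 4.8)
The plan is to reduce the statement to the classical approximate Frank-Wolfe guarantee (Lemma \ref{lem:conv_rate}) by certifying that, in each round, the direction $\bs$ returned by Algorithm \ref{alg:q_fwlg} solves the linear subproblem to solution quality $\delta=O(1)$, and then to read off the iteration count $T=4C_f/\varepsilon-2$ from $\frac{2C_f}{t+2}(1+\delta)\le\varepsilon$. The per-round query count will follow by combining the exact-arithmetic bound of Lemma \ref{lemma:QFWLG} with the standard amplification needed to make the quantum maximum finding reliable across all $T$ rounds. Thus there are really only two things to establish: an error-propagation bound fixing $\sigma_t$, and a success-probability accounting fixing the $\log$ overhead.

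For the error analysis I would write $g=g(\bx^{(t)})$ for the forward-difference gradient and recall the latent dual-norm decomposition $\norm{\nabla f(\bx)}_{\+G}^{*}=\max_{\*g\in\+G}\norm{\nabla f(\bx)_{(\*g)}}_{\*g}^{*}$ from the excerpt, so the exact FW optimum equals $\min_{\hat{\bs}}\langle\hat{\bs},\nabla f(\bx)\rangle=-\norm{\nabla f(\bx)}_{\+G}^{*}$. Since Algorithm \ref{alg:q_fwlg} computes each group's dual value $\norm{g_{(\*g_i)}}_{\*g_i}^{*}$ coherently from $g$ and then maximum-finds over the $\abs{\+G}$ groups, the selected group $i_t$ is (up to the maximum-finding failure probability) the exact maximizer of the \emph{noisy} values, and the direction it builds attains $\big|\langle\bs,g\rangle\big|=\norm{g_{(\*g_{i_t})}}_{\*g_{i_t}}^{*}$ exactly via the $\ell_{p}$–$\ell_{q}$ duality maximizer; the discrepancy between noisy and true dual values is precisely the $2\epsilon$-type robustness of Lemma \ref{Lemma:QMF}. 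A short Hölder step using $\norm{\bs}_{\+G}\le 1$ then gives
\begin{align}
\langle\bs,\nabla f(\bx)\rangle-\min_{\hat{\bs}}\langle\hat{\bs},\nabla f(\bx)\rangle\le 2\norm{\nabla f(\bx)-g}_{\+G}^{*}=2\max_{i}\norm{(\nabla f(\bx)-g)_{(\*g_i)}}_{\*g_i}^{*}.
\end{align}
It remains to bound each group's dual-norm error by the forward-difference error: combining Lemma \ref{Lemma:FDM} (which controls $\norm{g-\nabla f}_2\le\frac{\sqrt{d}L\sigma}{2}$) with a norm-equivalence inequality over the $\abs{\*g_i}$ coordinates of the group — relating the $\ell_{q_i}$ dual norm to the $\ell_2$/$\ell_\infty$ error — yields a bound of the form $\max_i\abs{\*g_i}^{\Theta(1)}\cdot\frac{\sqrt{d}L\sigma_t}{2}$. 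The value $\sigma_t=\frac{C_f}{\sqrt{d}L(t+2)\max_i\abs{\*g_i}^{1/p_i}}$ is calibrated precisely so that this group-size factor is absorbed and the suboptimality is at most $\tfrac12\gamma_t C_f$, i.e.\ $\delta=1$; feeding $\delta=1$ into Lemma \ref{lem:conv_rate} gives $f(\bx^{(t)})-f(\bx^*)\le 4C_f/(t+2)$ and hence the claimed $T$.

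For the query complexity, Lemma \ref{lemma:QFWLG} already supplies $O(\sqrt{\abs{\+G}}\abs{\*g}_{\max})$ calls to $\bU_f$ assuming exact evaluation and exact maximization; the evaluation error is absorbed into the $\delta$-analysis above, so only the reliability of maximum finding remains. Requiring each of the $T$ rounds to succeed with probability $1-p/T$ costs an $O(\log(T/p))=O(\log\frac{C_f}{p\varepsilon})$ overhead in the quantum maximum finding (Lemma \ref{Lemma:QMF}), giving the stated $O(\sqrt{\abs{\+G}}\abs{\*g}_{\max}\log\frac{C_f}{p\varepsilon})$ per round, and a union bound over rounds yields overall success $1-p$. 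I expect the main obstacle to be the error-propagation step: because the built direction $\bs_{\*g_{i,j}}=\mathsf{sgn}(g_{\*g_{i,j}}(\bx))\abs{g_{\*g_{i,j}}(\bx)}^{q_i-1}$ is a \emph{nonlinear} function of the noisy gradient, one must avoid tracking $\bs$ componentwise and instead route the suboptimality through the dual norm, and then pick the norm-equivalence inequality over each group so that the resulting $\abs{\*g_i}$-exponent is matched by the $\max_i\abs{\*g_i}^{1/p_i}$ factor in $\sigma_t$ \emph{simultaneously for every admissible $p_i$}. Getting this exponent tight (and uniform over the groups) is the delicate part, whereas the convergence and amplification steps become routine once $\delta=O(1)$ is secured.
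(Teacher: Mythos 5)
Your proposal follows essentially the same route as the paper's proof: per-coordinate forward-difference error from Lemma \ref{Lemma:FDM}, propagated into each group's dual-norm value, combined with the $2\epsilon$-robust maximum finding of Lemma \ref{Lemma:QMF} and the query count of Lemma \ref{lemma:QFWLG}, with $\sigma_t$ tuned so that the per-round suboptimality is $\tfrac12\gamma_t C_f$ (i.e.\ $\delta=1$ in Lemma \ref{lem:conv_rate}) and a $\log(T/p)$ amplification plus union bound handling the success probability. The one step you flag as delicate and leave open — the group-size exponent — is settled in the paper by a single H\"older inequality applied to the constant error vector $\delta_{(\*g_i)}$ with entries $\frac{\sqrt{d}L\sigma}{2}$ against the constraint $\norm{\bs}_{q_i}\le 1$, giving $\norm{\delta_{(\*g_i)}}_{p_i}=\frac{\sqrt{d}L\sigma}{2}\abs{\*g_i}^{1/p_i}$, which matches the $\max_{i}\abs{\*g_i}^{1/p_i}$ factor in $\sigma_t$ uniformly over the groups.
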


\subsection{Notations and Assumptions for Quantum Computation}
\label{sec:quantumBase}

\textbf{Basic Notions in Quantum Computing.} 
Quantum computing utilizes Dirac notation as its mathematical foundation. 
Let $\{\ket{i}\}_{i=0}^{d-1}$ denote the computational basis of $\mathbb{C}^d$ as $\{\ket{i}\}_{i=0}^{d-1}$, where $\ket{i}$ is a $d$-dimensional unit vector with $1$ at the $i^{th}$ position and $0$ elsewhere. 
A $d$-dimensional quantum state is represented as a unit vector $\ket{v}=(v_1,v_2,\dots,v_d)^T=\sum_i v_i\ket{i} \in \mathbb{C}^d$ with complex amplitudes $v_i$ satisfying  $\sum_i \abs{v_i}^2=1$. 

\textbf{Composite Systems.}
The joint state of two quantum systems $\ket{v}\in \mathbb{C}^{d_1}$ and $\ket{u} \in \mathbb{C}^{d_2}$ is described by the tensor product $\ket{v} \otimes \ket{u} = (v_1u_1,v_1u_2,\dots.v_2u_1,\dots,v_{d_1}u_{d_2})\in \mathbb{C}^{d_1\times d_2}$
The $\otimes$ symbol is omitted when context permits.

\textbf{Quantum Dynamics.}
Closed system evolution is described by unitary transformations.
Quantum measurement in the computational basis probabilistically projects the state onto a basis vector $\ket{i}$ with the probability of the square of the magnitude of its amplitude.
For example, measuring $\ket{v}=\sum_i v_i\ket{i}$ yields outcome $i$ with probability $\abs{v_i}^2$, followed by post-measurement state $\ket{i}$.

\textbf{Quantum Access Models.} In general, In quantum computing, access to the objective function is facilitated through quantum oracles $Q_f$, which is a unitary transformation that maps a quantum state $\ket{x}\ket{q}$ to the state $\ket{x}\ket{q+ f(x)}$, where $\ket{x}$, $\ket{q}$ and $\ket{q+ f(x)}$ are basis states corresponding to the floating-point representations of $x$, $q$ and $q+f(x)$. 
Moreover, given the superposition input $\sum_{x,q}\alpha_{x,q}\ket{x}\ket{q}$,  by linearity the quantum oracle will output the state  $\sum_{x,q}\alpha_{x,q}\ket{x}\ket{q+ f(x)}$.

\subsection{Extended Related Works}
\label{sec:relatedWorks}

The Frank-Wolfe (FW) algorithm, also known as the conditional gradient method, has evolved through several key theoretical and applied research phases. The original FW framework \cite{frank1956algorithm} established a projection-free method for quadratic programming with optimal convergence rates when solutions lie on the feasible set boundary, a property later rigorously proven by \cite{canon1968tight}. 
Wolfe's away-step modification \cite{wolfe1970convergence} addressed boundary solution limitations, while Dunn's extension \cite{dunn1978conditional} generalized FW to smooth optimization over Banach spaces using linear minimization oracles.

Modern convergence analyzes were unified by \cite{jaggi2013revisiting}, who introduced duality gap certificates for primal-dual convergence in constrained convex optimization. 
For strongly convex objectives, \cite{garber2016linearly} demonstrated accelerated linear convergence rates. 
Projection-free optimization on non-smooth objective functions was studied in \cite{lan2013complexity,argyriou2014hybrid,pierucci2014smoothing}.
Data-dependent convergence bounds on spectahedrons were improved by \cite{garber2016faster} and \cite{allen2017linear}.

Note that the framework was extended to online and stochastic optimizations, inspiring a series of seminal contributions \cite{hazan2012projection,garber2016linearly,levy2019projection,lan2016conditional,hazan2016variance,chen2018projection,hassani2020stochastic,xie2020efficient,yurtsever2019conditional,zhang2020one}.
Our future research will explore quantum-enhanced acceleration for these online/stochastic settings.
Meanwhile, in recent years, FW methods have gained attention for their effectiveness in dealing with structured constraint problem arising in machine learning and data science, such as LASSO, SVM training, matrix completion and clustering detection. Readers are referred to \cite{bomze2021frank,pokutta2023frank} for more information.


The algorithms we develop in the matrix domain belong to the quantum algorithmic family for linear systems. 
This family originated with the seminal HHL algorithm \cite{Harrow2009}, which solves quantum linear systems and achieves exponential speedups over classical methods for well-conditioned sparse matrices.  
Subsequent improvements reduced dependency on condition number and sparsity \cite{ambainis2012variable(b),childs2017quantum,wossnig2018quantum}.
The HHL framework has been successfully adapted to machine learning tasks including support vector machines \cite{rebentrost2014quantum}, supervised and unsupervised machine learning \cite{lloyd2013quantum}, principal component analysis \cite{lloyd2014quantum} and recommendation systems \cite{kerenidis2017quantum}. 
One can reduce the condition number by preprocessing the matrix itself, and QRAM can help to accelerate such preprocessing.
Based on this, the quantum singular value estimation method was developed in \cite{kerenidis2017quantum} and was generalized in \cite{kerenidis2020quantum}. 
Furthermore, recent work integrates QSVE with state-vector tomography, amplitude amplification/estimation, and spectral norm analysis to enable top-$k$ singular vector extraction \cite{bellante2022quantum}.

\subsection{Discussion of the two Quantum Frank-Wolfe Algorithms for the Matrix Case}
\label{sec:discussion}
Note that the rank of the gradient matrix satisfies $r\le d$, Specifically, 
if the gradient matrix is low-rank ($r \ll d$), then the quantum FW algorithm with quantum top singular vector extraction (described in Subsection \ref{subsec:qfwQSVE}) achieves greater quantum speedup than the quantum power method in Subsection \ref{subsec:qfwQPM}.
Conversely, for high-rank gradient matrices, the latter algorithm (QPM) is preferable, delivering a quantum speedup of at least $O(\sqrt{d})$.

Furthermore, the repetition steps required for quantum state tomography can be parallelized in the quantum computing cluster.
By utilizing $O(d)$ quantum computers simultaneously, the dependence of $d$ in time complexity can be eliminated, giving a parallel time complexity of $\tilde{O}\left( \frac{r \sigma_1^3(M)}{(\sigma_1(M)-\sigma_2(M)) \varepsilon^2} \right)$ and $\tilde{O}\left(\frac{\sqrt{r}\sigma_1^4(M)}{(1-\sigma_1(M)){\gamma'}_{\min}^3\varepsilon^3}\right)$.

\begin{remark}
    \label{remark:square}
    Note that in this section, for simplicity of presentation, we focus on square matrices. However, all of the quantum techniques mentioned above can also be applied to non-square matrices, since the quantum singular value estimation can be applied to non-square matrices\cite{kerenidis2020quantum}.
\end{remark}

\begin{remark}
\label{remark:noG}
Note that both the classical and quantum algorithms in this section assume that the gradients are pre-stored at the memory. 
In some applications, obtaining the gradients may not be easy, and even directly loading them into the memory would scale linearly with the size of the matrix. 
This work focuses only on the computation of the update direction, but the gradient calculation time, which is also ignored in classical algorithms \cite{jaggi2013revisiting}, is explicitly included in the result Table \ref{table:resultM}. 
This is because in quantum computing, there exist several well-established algorithms for gradient estimation \cite{Jordan05, Gilyen2019}, and the potential acceleration in the gradient calculation part is left for future exploration.
\end{remark}

\section{Proof Detail}
\label{lab:SecProof}

\subsection{Proof of Lemma \ref{Lemma:QGC}}
\label{proof:Lemma:QGC}
\LemmaQGC*

\begin{proof}

By choosing appropriate $\sigma$, we now construct a gradient unitary $\bU_g:\ket{i}\ket{\bx}\ket{0} \rightarrow \ket{i}\ket{\bx}\ket{g_i(\bx)}$ as follows:
\begin{align}
    &\ket{i}\ket{\bx}\ket{0}\ket{0}\ket{0}\ket{0} \notag\\
    &\rightarrow \ket{i}\ket{\bx}\ket{\bx+\sigma\be_i}\ket{0}\ket{0}\ket{0} \label{line:add_sigma}\\
    &\rightarrow \ket{i}\ket{\bx}\ket{\bx+\sigma\be_i}\ket{f(\bx+\sigma\be_i)}\ket{f(\bx)}\ket{0} \label{line:apply_f}\\\
    &\rightarrow \ket{i}\ket{\bx}\ket{\bx+\sigma\be_i}\ket{f(\bx+\sigma\be_i)}\ket{f(\bx)}\ket{\frac{f(\bx+\sigma \be_i)-f(\bx)}{\sigma}} \label{line:add_divide}\\
    &\rightarrow \ket{i}\ket{\bx}\ket{\frac{f(\bx+\sigma \be_i)-f(\bx)}{\sigma}} \label{line:uncompute}\\
    &=\ket{i}\ket{\bx}\ket{g_i(\bx)},
\end{align}
where \cref{line:add_sigma} is by adding $\sigma$ at the $i$-th entry of the third register, \cref{line:apply_f} is by applying $\bU_f$ based on the second and the third register, \cref{line:add_divide} is by applying addition and division based on the fourth and the fifth register, \cref{line:uncompute} is by uncomputing the third, fourth and fifth register. 
For the complexity, this $\bU_g$ takes two queries of $\bU_f$ and $O(1)$ elementary gates to get the approximate gradient.

\end{proof}

\subsection{Proof of Lemma \ref{Lemma:QMF}}
\label{proof:Lemma:QMF}
\LemmaQMF*
\begin{proof}
    We restate the quantum minimum finding algorithm here for reader benefits \cite{durr1996quantum}:
        Choose threshold index $0\leq j \leq d-1 $ uniformly at random.
        Repeat the following and return $j$ when the total running time is more than $22.5\sqrt{d}+1.4\log (d)$:
        \begin{itemize}
            \item[1.] Prepare the state $\sum_i^d \ket{i}\ket{x}\ket{g_i(x)}\ket{0}$.
            \item[2.] Set the third register to $\ket{1}$ conditioned on the value of the second register smaller than $g_j(x)$
            \item[3.] Apply the quantum exponential Grover search algorithm for the third register being $1$.
            \item[4.] Measure the first register in computation basis, if the measurement result is smaller than $g_j(x)$, set $j$ to be the measurement result.
        \end{itemize}

        By Theorem $1$ of \cite{durr1996quantum}, the algorithm finds the minimum $g_i(x)$ with probability $1/2$, $O(\sqrt{d})$ applications of $\bU_g$, $\bU_g^{\dagger}$ and $O(\sqrt{d})$ elementary gates. The probability can be boost to $1-\delta$ with $O(\log (1/\delta))$ repeats and taking the minimum of the outputs.
        
        This algorithm can be modified into the quantum maximum absolute value finding algorithm by setting the third register to $\ket{1}$ conditioned on the value of the second register greater than $\abs{g_j(x)}$ in Step $2$, and set $j$ to be the measurement result that is greater than $\abs{g_j(x)}$ in Step $4$.

        However, with the estimated error, the greatest estimated gradient component $g_{max}(\bx)$ may not have the same index of $\nabla f_{max}(\bx)$. 
        As $|g_i(\bx)-\nabla f_i(\bx)|\le \epsilon$ for each $i$, in the worst case, there exists $i$ such that $\abs{g_i(\bx)}=\abs{\nabla f_i(\bx)}+\epsilon \geq \abs{g_{i^*}(\bx)}= \max_{j \in [d]}\abs{\nabla f_j(\bx)}-\epsilon$, the maximum finding algorithm will give such $g_i(\bx)$ as outcome, which is greater than $\max_{j \in [d]}\abs{\nabla f_j(\bx)}-2\epsilon$.
        
        Similarly, As $|g_i(\bx)-\nabla f_i(\bx)|\le \epsilon$ for each $i$, in the worst case, there exists $i$ such that $\abs{g_i(\bx)}=\abs{\nabla f_i(\bx)}-\epsilon \le \abs{g_{i^*}(\bx)}= \min_{j \in [d]}\abs{\nabla f_j(\bx)}+\epsilon$, the minimum finding algorithm will give such $g_i(\bx)$ as outcome, which is less than $\min_{j \in [d]}\abs{\nabla f_j(\bx)}+2\epsilon$.
        Similar proof processes can be employed to derive the error bounds for the minimum/maximum search, which gives the lemma.
\end{proof}

\subsection{Proof of Theorem \ref{Theorem:QFWSC}}
\label{proof:Theorem:QFWSC}
\TheoremQFWSC*
\begin{proof}

    By Lemma \ref{Lemma:FDM} and the inequality between $\ell_2$ norm and $\ell_{\infty}$ norm, we have
    \begin{equation}
        |g_i(\bx)-\nabla f_i(\bx)|\le \norm{g(\bx)-\nabla f(\bx)}_{\infty} \le \norm{g(\bx)-\nabla f(\bx)}_2 \le \frac{\sqrt{d}L\sigma}{2}.
    \end{equation}

    By Lemma \ref{Lemma:QMF}, after the quantum approximate maximum absolute value finding, we have an estimated maximum gradient component which satisfied
    \begin{equation}
        \abs{\nabla f_{i^*}(\bx)}\ge \max_{j \in [d]}\abs{\nabla f_j(\bx)}-\sqrt{d}L\sigma
    \end{equation}

    Set $\bs=-\be_{i^*}$, we have
    \begin{align}
        \label{equ:1T1}
        \langle{\bs, \nabla f(\bx^{(t)})}\rangle & = -\abs{\nabla f_{i^*}(\bx^{(t)})} \nonumber \\ & \le -\max_{j \in [d]}\abs{\nabla f_j(\bx^{(t)})}+\sqrt{d}L\sigma_t \nonumber \\ & = -\langle{\be_{\argmax_{i \in [d]}|\nabla_if(\bx^{(t)})|}, \nabla f(\bx^{(t)})\rangle}+\sqrt{d}L\sigma_t \nonumber \\ & = \min_{\hat{\bs} \in \cD}\langle{\hat{\bs}, \nabla f(\bx^{(t)})\rangle}+\sqrt{d}L\sigma_t.
    \end{align}

    By the update rule and the definition of the curvature, we have
    \begin{equation}
        \label{equ:1T2}
        f(\bx^{(t+1)})=f((1-\gamma_t) \bx^{(t)}+\gamma_t \bs) \le f(\bx^{(t)})+\gamma_t \langle{\bs-\bx^{(t)}, \nabla f(\bx^{(t)})}\rangle + \frac{\gamma_t^2}{2}C_f
    \end{equation}

    Combining Inequality \ref{equ:1T1} and \ref{equ:1T2}, we have
    \begin{equation}
        f(\bx^{(t+1)}) \le f(\bx^{(t)})+\gamma_t (\min_{\hat{\bs} \in \cD}\langle{\hat{\bs}, \nabla f(\bx)\rangle}- \langle{\bx^{(t)}, \nabla f(\bx^{(t)})}\rangle) +\sqrt{d}\gamma_t L\sigma_t + \frac{\gamma_t^2}{2}C_f.
    \end{equation}

    Let $h(\bx^{(t)}):=f(\bx^{(t)})-f(x^*)$, we have
    \begin{align}
        h(\bx^{(t+1)}) & \le h(\bx^{(t)})+\gamma_t (\min_{\hat{\bs} \in \cD}\langle{\hat{\bs}, \nabla f(\bx)\rangle}- \langle{\bx^{(t)}, \nabla f(\bx^{(t)})}\rangle) +\sqrt{d}\gamma_t L\sigma_t + \frac{\gamma_t^2}{2}C_f \nonumber \\ & \le h(\bx^{(t)})-\gamma_t h(\bx^{(t)}) +\sqrt{d}\gamma_t L\sigma_t + \frac{\gamma_t^2}{2}C_f \nonumber \\ & =  (1-\gamma_t) h(\bx^{(t)}) +\sqrt{d}\gamma_t L\sigma_t + \frac{\gamma_t^2}{2}C_f.
    \end{align}

    Set $\gamma_{t}=\frac{2}{t+2},\sigma_t=\frac{\gamma_t C_f}{2\sqrt{d}L}$, we have

    \begin{equation}
        h(\bx^{(t+1)}) \le  \left(1-\frac{2}{t+2}\right) h(\bx^{(t)}) + \left(\frac{2}{t+2}\right)^2C_f.
    \end{equation}

    Using a similar induction as shown in \cite{jaggi2013revisiting} over $t$, we have
    \begin{equation}
        h(\bx^{(t)}) \le  \frac{4C_f}{t+2}.
    \end{equation}
    We will restate this induction in Lemma \ref{Lemma:induction} for reader benefit.

    Thus, set $\gamma_{t}=\frac{2}{t+2},\sigma_t=\frac{C_f}{\sqrt{d}L(t+2)}$ for all $t\in [T]$, after $T=\frac{4C_f}{\varepsilon}-2$ rounds, we have
    \begin{equation}
        f(\bx^{(T)})-f(x^*) \le  \varepsilon,
    \end{equation}
    for any $\varepsilon>0$.
    
    In each round, by Lemma \ref{Lemma:QGC}, two queries to the quantum function value oracle are needed to construct the quantum gradient oracle. 
    Then by lemma \ref{Lemma:QMF}, $O(\sqrt{d}\log{\frac{1}{\delta}})$ queries to the quantum gradient oracle are needed to find the index of the estimated maximum gradient component with successful probability of $1-\delta$.
    Since each maximum finding succeeds with probability $1-\delta$, the probability that all $T$ iterations succeed is at least $1-T\delta$. By setting $\delta=p/T$, we ensure an overall success probability of at least $1-p$.
    Therefore, $O\left(\sqrt{d} \log{\frac{C_f}{p\varepsilon}} \right)$ queries to the quantum function value oracle are needed in each iteration.
    Then the theorem follows.
\end{proof}

We restate the proof of the induction we use in Theorem \ref{Theorem:QFWSC} for reader benefit.
\begin{lemma}(\cite{jaggi2013revisiting})
    \label{Lemma:induction}
    If for any $t\in[N]$,
    \begin{equation}
        h(\bx^{(t+1)}) \le  \left(1-\frac{2}{t+2}\right) h(\bx^{(t)}) + \left(\frac{2}{t+2}\right)^2C_f,
    \end{equation}
    then
    \begin{equation}
        h(\bx^{(t)}) \le  \frac{4C_f}{t+2}.
    \end{equation}
\end{lemma}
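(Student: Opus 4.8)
The plan is to prove the bound $h(\bx^{(t)}) \le \frac{4C_f}{t+2}$ by induction on $t$, treating the stated recurrence as the only dynamical input. The inductive step is the heart of the argument and is a direct computation: assuming the bound holds at index $t$, I substitute $h(\bx^{(t)}) \le \frac{4C_f}{t+2}$ into the right-hand side of the recurrence to obtain
\begin{equation}
h(\bx^{(t+1)}) \le \left(1-\frac{2}{t+2}\right)\frac{4C_f}{t+2} + \left(\frac{2}{t+2}\right)^2 C_f = \frac{4C_f(t+1)}{(t+2)^2}.
\end{equation}
It then remains to check that $\frac{4C_f(t+1)}{(t+2)^2} \le \frac{4C_f}{t+3}$, which after clearing the (positive) denominators is exactly $(t+1)(t+3) \le (t+2)^2$, i.e.\ $t^2+4t+3 \le t^2+4t+4$. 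This is the trivially true inequality $3 \le 4$, so the desired bound propagates from $t$ to $t+1$ and the induction closes.

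For the base case I would establish the bound at the smallest index, i.e.\ $h(\bx^{(1)}) \le \frac{4C_f}{3}$. This is the one place that cannot be read off from the recurrence alone, since the recurrence only links consecutive iterates and therefore must be seeded by an independent bound on the initial optimality gap. The natural way to supply this is through the curvature: if one reindexes so that the first effective step uses $\gamma = 1$ (sending the iterate fully onto a vertex/atom), the curvature inequality gives $h(\bx^{(1)}) \le C_f \le \frac{4C_f}{3}$, which is exactly what the induction needs to start.

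I expect the main obstacle to be bookkeeping rather than anything conceptual: the recurrence in the main proof is indexed from $\bx^{(1)}$ with $\gamma_t = \frac{2}{t+2}$, so care is required to align the seeding bound with the claimed $\frac{4C_f}{t+2}$ at the starting index and to confirm that the slack in the inductive step never reverses. Once the base case is pinned down via the curvature, the inductive step is a routine algebraic manipulation, and the whole statement follows as a clean restatement of the convergence induction of \cite{jaggi2013revisiting}.
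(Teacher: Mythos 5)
Your proposal is correct and follows essentially the same route as the paper: the identical inductive step (the comparison $\frac{t+1}{(t+2)^2}\le\frac{1}{t+3}$, equivalently $(t+1)(t+3)\le(t+2)^2$), and the same seeding of the base case $h(\bx^{(1)})\le C_f\le\frac{4C_f}{3}$ via the $\gamma=1$ step — the paper obtains this by formally applying the recurrence at $t=0$, where the coefficient $1-\frac{2}{0+2}$ vanishes, which is the same curvature bound you invoke.
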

\begin{proof}
    For $t=0$, we have
    \begin{equation}
        h(\bx^{(1)}) \le  \left(1-\frac{2}{0+2}\right) h(\bx^{(0)}) + \left(\frac{2}{0+2}\right)^2C_f = C_f.
    \end{equation}
    Assume that $h(\bx^{(t)}) \le  \frac{4C_f}{t+2}$, we have
    \begin{align}
        h(\bx^{(t+1)}) & \le  \left(1-\frac{2}{t+2}\right) h(\bx^{(t)}) + \left(\frac{2}{t+2}\right)^2C_f \nonumber \\
        & \le \left(1-\frac{2}{t+2}\right) \frac{4C_f}{t+2} + \left(\frac{2}{t+2}\right)^2C_f \nonumber \\
        & = \left(1-\frac{1}{t+2}\right) \frac{4C_f}{t+2} + \left(\frac{2}{t+2}\right)^2C_f \nonumber \\
        & = \frac{t+1}{t+2} \frac{4C_f}{t+2} \le \frac{t+2}{t+3} \frac{4C_f}{t+2} =\frac{4C_f}{t+3},
    \end{align}
    which gives the lemma.
\end{proof}

\subsection{Proof of Lemma \ref{QGB}}
\label{proof:QGE}

The framework of quantum gradient estimator originates from Jordan quantum gradient estimation method \cite{Jordan05}, but Jordan algorithm did not give any error bound because the analysis of it was given by omitting the high-order terms of Taylor expansion of the function directly. 
In 2019, the quantum gradient estimation method with error analysis was given in \cite{Gilyen2019}, and was applied to the general convex optimization problem \cite{van2020convex,chakrabarti2020quantum}.
In those case, however, $O(\log{n})$ repetitions were needed to estimate the gradient within an acceptable error. 
The query complexity was then improved to $O(1)$ in \cite{he2022quantum, he2024quantum}. 
Here we use the version of \cite{he2024quantum} (Algorithm \ref{Alg:JQGE}). 

\begin{algorithm}[H]
    \caption{Bounded-error Jordan quantum gradient estimation \cite{he2024quantum}}
    \label{Alg:JQGE}
    \begin{algorithmic}[1]
        \State {\textbf{Input:}} point $x$, parameters $r, \rho, \epsilon$.
        \State {\textbf{Output:}} $g(x)$
            \State Prepare the initial state: $d$ $b$-qubit registers $\ket{0^{\otimes b},0^{\otimes b},\dots,0^{\otimes b}}$ where $b=\log_2 \cfrac{G\rho}{4\pi d^2 \beta r}$. 
            Prepare $1$ $c$-qubit register $\ket{0^{\otimes c}}$ where $c=\log_2{\cfrac{16\pi d}{\rho}}-1$.
            And prepare $\ket{y_0}=\cfrac{1}{\sqrt{2^d}}\sum_{a\in\{0,1,\dots,2^d-1\}}e^{\cfrac{2\pi i a}{2^d}}\ket{a}$.
            \State Apply Hadamard transform to the first $d$ registers.
            \State Perform the quantum query oracle $Q_{F}$ to the first $d+1$ registers, where  $F(u)=\cfrac{2^b}{2Gr} \left [f \left (x+\cfrac{r}{2^b} \left (u-\cfrac{2^b}{2}\mathbbm{1} \right ) \right )-f(x) \right ]$, and the result is stored in the $(d+1)$th register.
            \State Perform the addition modulo $2^c$ operation to the last two registers.
            \State Apply the inverse evaluating oracle $Q_{F}^{-1}$ to the first $d+1$ registers.
            \State Perform quantum inverse Fourier transformations to the first $d$ registers separately.
            \State Measure the first $d$ registers in computation bases respectively to get $m_1,m_2,\dots,m_n$.
            \State $g(x)=\widetilde{\nabla} f(x)=\cfrac{2G}{2^b} \left (m_1-\cfrac{2^b}{2},m_2-\cfrac{2^b}{2},\dots, m_n-\cfrac{2^b}{2} \right )^{\mathrm{T}}$.
    \end{algorithmic}
\end{algorithm}

\QGB*

\begin{proof}
    The primary additional gate overhead originates from the quantum Fourier transformation (QFT). Each QFT requires  $O(\log d)$ elementary gates, and for $d$ such operations, the total additional elementary gate overhead is $O(d\log d)$. Consequently, the additional elementary gate overhead is $O(d\log d)$.
    
    The states after Step $3$ will be:
    \begin{align}
        \frac{1}{\sqrt{2^n}}\sum_{a\in\{0,1,\dots,2^n-1\}}e^{\frac{2\pi i a}{2^n}} \ket{0^{\otimes b},0^{\otimes b},\dots,0^{\otimes b}} \ket{0^{\otimes c}} \ket{a}.
    \end{align}
    After Step $4$:
    \begin{align}
       \frac{1}{\sqrt{2^{bn+c}}}\sum_{u_1,u_2,\dots,u_n\in\{0,1,\dots,2^b-1\}} 
       \sum_{a\in\{0,1,\dots,2^c-1\}} 
       e^{\frac{2\pi i a}{2^n}} 
       \ket{u_1,u_2,\dots,u_n} 
       \ket{0^{\otimes c}} \ket{a}. 
    \end{align}
    After Step $5$:
    \begin{align}
       \frac{1}{\sqrt{2^{bn+c}}}\sum_{u_1,u_2,\dots,u_n\in\{0,1,\dots,2^b-1\}} 
       \sum_{a\in\{0,1,\dots,2^c-1\}} 
       e^{\frac{2\pi i a}{2^n}}   
       \ket{u_1,u_2,\dots,u_n} 
       \ket{F(u)} \ket{a}. 
    \end{align}
    After Step $6$:
    \begin{align}
       \frac{1}{\sqrt{2^{bn+c}}}\sum_{u_1,u_2,\dots,u_n\in\{0,1,\dots,2^b-1\}} 
       \sum_{a\in\{0,1,\dots,2^c-1\}} 
       e^{2\pi i F(u)}
       e^{\frac{2\pi i a}{2^n}} 
       \ket{u_1,u_2,\dots,u_n} 
       \ket{F(u)} \ket{a}. 
    \end{align}
    After Step $7$:
    \begin{align}
       \frac{1}{\sqrt{2^{bn+c}}}\sum_{u_1,u_2,\dots,u_n\in\{0,1,\dots,2^b-1\}} 
       \sum_{a\in\{0,1,\dots,2^c-1\}} 
       e^{2\pi i F(u)}
       e^{\frac{2\pi i a}{2^n}} 
       \ket{u_1,u_2,\dots,u_n} 
       \ket{0^{\otimes c}} \ket{a}. 
    \end{align}
    In the following, the last two registers will be omitted:
    \begin{align}
       \frac{1}{\sqrt{2^{bn}}}\sum_{u_1,u_2,\dots,u_n\in\{0,1,\dots,2^b-1\}} 
       e^{2\pi i F(u)}
       \ket{u_1,u_2,\dots,u_n} .
    \end{align}
    And then we simply relabel the state by changing $u \to v=u-\frac{2^b}{2}$:
    \begin{align}
        \label{phi}
        \frac{1}{\sqrt{2^{bn}}}\sum_{v_1,v_2,\dots,v_n\in\{-2^{b-1},-2^{b-1}+1,\dots,2^{b-1}\}} 
        e^{2\pi i F(v)}
        \ket{v}. 
    \end{align}
    We denote Formula (\ref{phi}) as $\ket{\phi}$. Let $g=\nabla f(x)$, and consider the idealized state
    \begin{align}
        \label{psi}
        \ket{\psi}=\frac{1}{\sqrt{2^{bn}}}\sum_{v_1,v_2,\dots,v_n\in\{-2^{b-1},-2^{b-1}+1,\dots,2^{b-1}\}} 
        e^{\frac{2\pi i g \cdot v}{2G}}
        \ket{v}.
    \end{align}
    After Step $9$, from the analysis of phase estimation \cite{Brassard00}:
    \begin{align}
        \Pr[\abs{\frac{Ng_i}{2G}-m_i}>e]<\frac{1}{2(e-1)}, \forall{i\in [n]}.
    \end{align}
    Let $e=n/\rho +1$, where $1\geq \rho>0$. We have
    \begin{align}
        \Pr[\abs{\frac{Ng_i}{2G}-m_i}>n/\rho +1]<\frac{\rho}{2n}, \forall{i\in [n]}.
    \end{align}
    Note that the difference in the probabilities of measurement on $\ket{\phi}$ and $\ket{\psi}$ can be bounded by the trace distance between the two density matrices:
    \begin{align}
        \|\dyad{\phi}{\phi}-\dyad{\psi}{\psi}\|_1
        =2\sqrt{1-|\braket{\phi}{\psi}|^2}
        \leq 2\|\ket{\phi}-\ket{\psi}\|.
    \end{align}
    Since $f$ is $L$-smooth, we have
    \begin{align}
        F(v) & \leq \frac{2^b}{2Gr}[f(x+\frac{r v}{N})-f(x)]+\frac{1}{2^{c+1}} \nonumber\\
        & \leq \frac{2^b}{2Gr}[\frac{r}{2^b} g \cdot v +\frac{L (r v)^2}{2^{2b}}]
        +\frac{1}{2^{c+1}} \nonumber\\
        & \leq \frac{g \cdot v}{2G}+\frac{2^b L r n}{4G}+\frac{1}{2^{c+1}}.
    \end{align}
    Then,
    \begin{align}
        \|\ket{\phi}-\ket{\psi}\|^2 & = \frac{1}{2^{bn}} \sum_v | e^{2 \pi i F(v)} - e^{\frac{2 \pi i g \cdot v}{2G}} |^2 \nonumber \\
        & \leq \frac{1}{2^{bn}} \sum_v | 2 \pi i F(v) - \frac{2 \pi i g \cdot v}{2G} |^2 \nonumber \\
        & \leq \frac{1}{2^{bn}} \sum_v 4 \pi^2 (\frac{2^b L r n}{4G}+\frac{1}{2^{c+1}})^2.
    \end{align}
    Set $b=\log_2 \frac{G\rho}{4\pi n^2 L r}$, $c=\log_2{\frac{4G}{2^b n L r }}-1$.  We have
    \begin{align}
        \|\ket{\phi}-\ket{\psi}\|^2 \leq \frac{\rho^2}{16n^2},
    \end{align}
    which implies $\|\dyad{\phi}{\phi}-\dyad{\psi}{\psi}\|_1 \leq \frac{\rho}{2n}$. Therefore, by the union bound,
    \begin{align}
        \Pr[\abs{\frac{2^b g_i}{2G}-m_i}>n/\rho +1]<\frac{\rho}{n}, \forall{i\in [n]}.
    \end{align}
    Furthermore, there is 
    \begin{align}
        \Pr[\abs{g_i-\widetilde{\nabla}_i f(x)}>\frac{2G(n/\rho +1)}{2^b}]<\frac{\rho}{n}, \forall{i\in [n]},
    \end{align}
    as $b=\log_2 \frac{G\rho}{4\pi n^2 L r}$, we have
    \begin{align}
        \Pr[\abs{g_i-\widetilde{\nabla}_i f(x)}>8 \pi n^2 (n/\rho +1) L r/\rho]<\frac{\rho}{n}, \forall{i\in [n]}.
    \end{align}
    By the union bound, we have
    \begin{align}
        \Pr[\|{g-\widetilde{\nabla} f(x)}\|_{\infty}>8 \pi n^2 (n/\rho +1) L r/\rho]<\rho,
    \end{align}
    which gives the lemma.
\end{proof}

\subsection{Proof of Theorem \ref{Theorem:QFWJ}}
\label{proof:Theorem:QFWJ}
\TheoremQFWJ*

\begin{proof}
    By Lemma \ref{QGB}, with probability greater than $\rho$, we have
    \begin{equation}
        |g_i(\bx)-\nabla f_i(\bx)|\le \norm{g(\bx)-\nabla f(\bx)}_{\infty} \le 8 \pi d^2 (d/\rho +1) L r/\rho.
    \end{equation}

    Then the maximum component's coordinate of the estimated gradient  $i^*= \argmax_{i \in [d]}|g_i(\bx^{(t)})|$ satisfies
    \begin{equation}
        \abs{\nabla f_{i^*}(\bx)}\ge \max_{j \in [d]}\abs{\nabla f_j(\bx)}-16 \pi d^2 (d/\rho +1) L r/\rho
    \end{equation}

     Set $\bs=-\be_{i^*}$, we have
    \begin{align}
        \label{equ:3T1}
        \langle{\bs, \nabla f(\bx^{(t)})}\rangle & = -\abs{\nabla f_{i^*}(\bx^{(t)})} \nonumber \\ & \le -\max_{j \in [d]}\abs{\nabla f_j(\bx^{(t)})}+16 \pi d^2 (d/\rho +1) L r/\rho \nonumber \\ & = -\langle{\be_{\argmax_{i \in [d]}|\nabla_if(\bx^{(t)})|}, \nabla f(\bx^{(t)})\rangle}+16 \pi d^2 (d/\rho +1) L r/\rho \nonumber \\ & = \min_{\hat{\bs} \in \cD}\langle{\hat{\bs}, \nabla f(\bx^{(t)})\rangle}+16 \pi d^2 (d/\rho +1) L r/\rho.
    \end{align}

    By the update rule and the definition of the curvature, we have
    \begin{equation}
        \label{equ:3T2}
        f(\bx^{(t+1)})=f((1-\gamma_t) \bx^{(t)}+\gamma_t \bs) \le f(\bx^{(t)})+\gamma_t \langle{\bs-\bx^{(t)}, \nabla f(\bx^{(t)})}\rangle + \frac{\gamma_t^2}{2}C_f
    \end{equation}

    Combining Inequality \ref{equ:3T1} and \ref{equ:3T2}, we have
    \begin{equation}
        f(\bx^{(t+1)}) \le f(\bx^{(t)})+\gamma_t (\min_{\hat{\bs} \in \cD}\langle{\hat{\bs}, \nabla f(\bx)\rangle}- \langle{\bx^{(t)}, \nabla f(\bx^{(t)})}\rangle) + 16 \pi d^2 (d/\rho +1) L \gamma_t r/\rho + \frac{\gamma_t^2}{2}C_f.
    \end{equation}

    Let $h(\bx^{(t)}):=f(\bx^{(t)})-f(x^*)$, we have
    \begin{align}
        h(\bx^{(t+1)}) & \le h(\bx^{(t)})+\gamma_t (\min_{\hat{\bs} \in \cD}\langle{\hat{\bs}, \nabla f(\bx)\rangle}- \langle{\bx^{(t)}, \nabla f(\bx^{(t)})}\rangle) +16 \pi d^2 (d/\rho +1) L \gamma_t r/\rho + \frac{\gamma_t^2}{2}C_f \nonumber \\
        & \le h(\bx^{(t)})-\gamma_t h(\bx^{(t)}) +16 \pi d^2 (d/\rho +1) L \gamma_t r/\rho + \frac{\gamma_t^2}{2}C_f \nonumber \\
        & =  (1-\gamma_t) h(\bx^{(t)}) +16 \pi d^2 (d/\rho +1) L \gamma_t r/\rho + \frac{\gamma_t^2}{2}C_f.
    \end{align}

    Set $\gamma_{t}=\frac{2}{t+2},r_t=\frac{\rho \gamma_t C_f}{32 \pi d^2 (d/\rho +1) L}$, we have

    \begin{equation}
        h(\bx^{(t+1)}) \le  \left(1-\frac{2}{t+2}\right) h(\bx^{(t)}) + \left(\frac{2}{t+2}\right)^2C_f.
    \end{equation}

    Using a similar induction as shown in \cite{jaggi2013revisiting} over $t$, we have
    \begin{equation}
        h(\bx^{(t)}) \le  \frac{4C_f}{t+2}.
    \end{equation}

    Thus, set $\gamma_{t}=\frac{2}{t+2},r_t=\frac{\rho  C_f}{16 \pi d^2 (d/\rho +1) L (t+2)}$ for all $t\in [T]$, after $T=\frac{4C_f}{\varepsilon}-2$ rounds, we have
    \begin{equation}
        f(\bx^{(T)})-f(x^*) \le  \varepsilon,
    \end{equation}
    for any $\varepsilon>0$.
    
    In each round, by Lemma \ref{QGB}, $O(1)$ queries to the quantum function value oracle are needed to get the estimated gradient vector. 
    Subsequent steps no longer require queries to the oracle.
    Therefore, in each round, $O(1)$ queries to the quantum function value oracle are needed.
    Then the theorem follows.

\end{proof}

\subsection{Proof of Lemma \ref{lemma:QFWLG}}
\label{proof:lemma:qfwlg}
\lemmaQFWLG*

\begin{proof}
Assume that all $\norm{\cdot}_{\*g}$ are $\ell_p$-norms, i.e. $\norm{\cdot}_{\*g_i} = \norm{\cdot}_{p_i}$ for some  ($p_i\in[1,\infty]$), and have quantum access to each $\*g_i = \{\*g_{i,1}, \*g_{i,2},\ldots, \*g_{i,\abs{\*g_i}}\}\subseteq[d]$ that load $\*g_i$ into quantum registers via
\eq{
U_{\+G}\ket{i}_A \ket{0}&\ra \ket{i}_A\ket{\*g_{i,1}}\ket{\*g_{i,2}}\ldots\ket{\*g_{i,\abs{\*g_i}}}
}
where $A$ is a $\log \abs{\+G}$ qubit register. 
For each $\ket{\*g_{i,j}}$ one can compute an approximation $\ket{g_{{\*g_{i,j}}}(\bx)}$ to the ${\*g_{i,j}}$-th component of the gradient at $\bx$ by the method in Sec.~\ref{subsec:QFWSC}. 

Noting that $\max_{\hat{\bs}\in\norm{\cdot}_p\text{-Ball}} \bs^\top \by := \norm{\by}_p^*$ and that
\eq{
\bs^* &:= \argmax_{\hat{\bs}\in \norm{\cdot}_p\text{-ball} } \bs^\top \by
}
has components
\eq{
 \bs^*_i &\propto \mathsf{sgn}(\by_i)\abs{\by_i}^{q-1}
}
where $\frac{1}{p}+\frac{1}{q}=1$, one can compute
\begin{align}
& \ket{i}_A\bigotimes_{j=1}^{\abs{\*g_i}}\ket{\*g_{i,j}} \ket{0}\ket{0}\ket{0}\ket{0} \nonumber \\
\ra & \ket{i}_A\bigotimes_{j=1}^{\abs{\*g_i}}\ket{\*g_{i,j}}\ket{g_{\*g_{i,j}}(\bx)} \ket{0}\ket{0}\ket{0} \nonumber\\
\ra &\ket{i}_A\bigotimes_{j=1}^{\abs{\*g_i}}\ket{\*g_{i,j}}\ket{g_{\*g_{i,j}}(\bx)}\ket{\mathsf{sgn}(g_{\*g_{i,j}}(\bx))\abs{g_{\*g_{i,j}}(\bx)}^{q_i-1}}\ket{0}\ket{0}\nonumber\\
\ra  & \ket{i}_A \lp \bigotimes_{j=1}^{\abs{\*g_i}}\ket{\*g_{i,j}}\ket{g_{\*g_{i,j}}(\bx)}\ket{\mathsf{sgn}(g_{\*g_{i,j}}(\bx))\abs{g_{\*g_{i,j}}(\bx)}^{q_i-1}}\rp \ket{\norm{g(\bx)_{(\*g_i)}}_{p_i}} \ket{0}\nonumber\\
\ra & \ket{i}_A \lp \bigotimes_{j=1}^{\abs{\*g_i}}\ket{\*g_{i,j}}\ket{g_{\*g_{i,j}}(\bx)}\ket{\mathsf{sgn}(g_{\*g_{i,j}}(\bx))\abs{g_{\*g_{i,j}}(\bx)}^{q_i-1}}\rp \ket{\norm{g(\bx)_{(\*g_i)}}_{p_i}} \ket{\norm{g(\bx)_{(\*g_i)}}_{p_i}^*}
\end{align}
Apply quantum maximum finding to the last register can then be used to find $s^*$ in $O(\sqrt{\abs{\+G}})$ iterations.
Each $g_{\*g_{i,j}}(\bx)$ requires $2$ queries to $U_f$, totally $O(\abs{\*g_i})$ queries for a fixed $i$.
In the above the index $i$ ranges over $i=1,2,\ldots, \abs{\+G}$. The query complexity is therefore $O(\sqrt{\abs{\+G}}\abs{\*g}_{\max})$, compared with the classical $\sum_{\.g\in \+G}\abs{\.g}$.
Then the lemma follows.
\end{proof}

\subsection{Proof of Theorem \ref{Theorem:QFWLG}}
\label{proof:theorem:qfwlg}
\TheoremQFWLG*

\begin{proof}

Let the true gradient component be $ g_{\*g_{i,j}}(\bx) $, and its estimated value be $ \tilde{g}_{\*g_{i,j}}(\bx)$ such that $\abs{\tilde{g}_{\*g_{i,j}}(\bx) - g_{\*g_{i,j}}(\bx)} \leq  \frac{\sqrt{d}L\sigma}{2} $.  
According to Step 7 of the algorithm, the dual norm computation involves:
\begin{align}
\left\| g(\bx)_{(\*g_i)} \right\|_{p_i}^* = \max_{\bs \in \mathbb{R}^{|\*g_i|}} \left\{ \sum_{j=1}^{|\*g_i|} \bs_j g_{\*g_{i,j}}(\bx) \ \bigg| \ \|\bs\|_{q_i} \leq 1 \right\},
\end{align}
where \( \frac{1}{p_i} + \frac{1}{q_i} = 1 \). The estimated dual norm is:
\begin{align}
\left\| \tilde{g}(\bx)_{(\*g_i)} \right\|_{p_i}^* = \max_{\bs \in \mathbb{R}^{|\*g_i|}} \left\{ \sum_{j=1}^{|\*g_i|} \bs_j \tilde{g}_{\*g_{i,j}}(\bx) \ \bigg| \ \|\bs\|_{q_i} \leq 1 \right\}.
\end{align}

The dual norm error can be decomposed as
\begin{align}
\left| \left\| \tilde{g}(\bx)_{(\*g_i)} \right\|_{p_i}^* - \left\| g(\bx)_{(\*g_i)} \right\|_{p_i}^* \right| \leq \max_{\|\bs\|_{q_i} \leq 1} \left| \sum_{j=1}^{|\*g_i|}  \frac{s_j\sqrt{d}L\sigma}{2} \right|.
\end{align}
By H{\"o}lder’s inequality, for any $\bs$ satisfying $\|\bs\|_{q_i} \leq 1 $, let $\delta_{(\*g_i)}$ be the vector in $\mb{R}^{\*g_i}$ with all the component being $\frac{\sqrt{d}L\sigma}{2}$ we have
\begin{align}
\left| \sum_{j=1}^{|\*g_i|}  \frac{s_j\sqrt{d}L\sigma}{2} \right| \leq \|\bs\|_{q_i} \cdot \|\delta_{(\*g_i)}\|_{p_i} \leq \|\delta_{(\*g_i)}\|_{p_i}.
\end{align}
Since $ \|\delta_{(\*g_i)}\|_{p_i} \leq \frac{\sqrt{d}L\sigma |\*g_i|^{1/p_i}}{2}$, it follows that
\begin{align}
\left| \left\| \tilde{g}(\bx)_{(\*g_i)} \right\|_{p_i}^* - \left\| g(\bx)_{(\*g_i)} \right\|_{p_i}^* \right| \leq \|\delta_{(\*g_i)}\|_{p_i} \leq \frac{\sqrt{d}L\sigma |\*g_i|^{1/p_i}}{2}.
\end{align}

Then, by Lemma \ref{Lemma:QMF} and \ref{lemma:QFWLG}, after Step 8, we have
\begin{align}
    \left\| \tilde{g}(\bx)_{(\*g_{i_t})} \right\|_{p_{i_t}}^* \geq \max_{i\in \abs{\+G}} \left\| g(\bx)_{(\*g_i)} \right\|_{p_i}^* - \sqrt{d}L\sigma \max_{i\in [\abs{\+G}]} |\*g_i|^{1/p_i},
\end{align}
succeed with probability at least $1-\delta$, with query complexity of $O\left(\sqrt{\abs{\+G}}\abs{\*g}_{\max}\log{\frac{1}{\delta}}\right)$.
Set $\delta=\frac{p}{T}$ to ensure that this procedure succeeds for all $T$ iterations.

The rest parallels the proof of Theorem \ref{Theorem:QFWSC}. 
Set $\sigma_t = \frac{C_f}{\sqrt{d}L (t+2) \max_{i\in [\abs{\+G}]}|\*g_i|^{1/p_i}}$ for all $t \in [T]$, after $T=\frac{4C_f}{\varepsilon}-2$ rounds, we have
    \begin{equation}
        f(\bx^{(T)})-f(x^*) \le  \varepsilon,
    \end{equation}
    for any $\varepsilon>0$.
    Then the theorem follows.
\end{proof}

\subsection{Proof of Lemma \ref{Lemma:QGSVE}}
\label{proof:Lemma:QGSVE}
\LemmaQGSVE*

\begin{proof}
    Initialize the quantum registers to the uniform superposition state by using Hadamard gates, we have
    \begin{align}
        H^{\otimes d}\ket{0}\ket{0}\ket{0} \to \sum_{i}^d \ket{i}\ket{0}\ket{0}.
    \end{align}
    By Assumption \ref{asmp:QMA}, we can perform the mapping
    \begin{align}
        \sum_{i}^d \ket{i}\ket{0}\ket{0} \to \frac{1}{\norm{M}_F}\sum_i^d\sum_j^d M_{ij} \ket{i}\ket{j}\ket{0},
    \end{align}
    in time $\tilde{O}(1)$. Note that
    \begin{align}
        \frac{1}{\norm{M}_F}\sum_i^d\sum_j^d M_{ij} \ket{i}\ket{j}\ket{0}=\frac{1}{\norm{M}_F}\sum_i^k \sigma_i \ket{\bu_i} \ket{\bv_i}\ket{0}.
    \end{align}
    Then by the quantum singular estimation algorithm (QSVE, Lemma \ref{Lemma:QSVE}), we have
    \begin{align}
        \frac{1}{\norm{M}_F}\sum_i^d\sum_j^d M_{ij} \ket{i}\ket{j}\ket{0} \to \frac{1}{\norm{\nabla}_F}\sum_i^k \sigma_i \ket{\bu_i} \ket{\bv_i}\ket{\overline{\sigma}_i},
    \end{align}
    with the cost of $O\left(\frac{\norm{M}_F \text{poly}\log{d}}{\epsilon}\right)$. 
    This process of generating such a state is treated as an oracle which will be invoked multiple times in the quantum maximum finding. 
    This requires that the errors in the estimates of the singular values should be consistent across multiple runs. 
    Note that the randomness of QSVE comes from the quantum phase estimation algorithm, and the QSVE algorithm of Lemma \ref{Lemma:QSVE} uses a consistent version of phase estimation.
    This consistency in phase estimation guarantees that the error patterns are reproducible, thereby maintaining uniform errors over repeated oracle calls.
    
    Set $\epsilon\leq (\sigma_1-\sigma_2)/2$ to ensure that even with the error of singular value estimation, the estimated largest singular value is still larger than the estimated second largest singular value, which can ensure that when we use the quantum maximum finding algorithm, if succeed, we will always get the superposition state corresponding to the largest singular value.  
    By Lemma \ref{Lemma:QMF}, the cost of finding the largest singular value is $O\left(\frac{1}{\sqrt{p}}\right)$. 
    By Lemma \ref{Lemma:L2T}, $O(\frac{d \log{d}}{\delta^2})$ repeats are needed to tomography the corresponding singular vectors of the largest singular value. 
    
    Therefore, the overall complexity is $O\left( \frac{\norm{M}_F d \text{poly}\log{d}}{\sqrt{p} \epsilon \delta^2} \right)$.
\end{proof}

\subsection{Proof of Theorem \ref{theorem:qfwqsvd}}
\label{proof:Theorem:QFWQSVD}
\theoremqfwqsvd*
\begin{proof}
    By Lemma \ref{Lemma:QGSVE}, set $\epsilon_t\leq (\sigma_1(M)-\sigma_2(M))/2$ to ensure that the quantum maximum finding algorithm, if succeed, will always get the superposition state of the largest singular value. 
    As the QSVE algorithm from Lemma \ref{Lemma:QSVE} use a consistent version of phase estimation, the estimated error of the singular value will keep unchanged.
    Thus, we can measure the register of singular value in the computational basic, to check whether the quantum maximum finding succeed, to boost up the success probability.
    By Lemma \ref{Lemma:QGSVE}, we obtain the estimated singular vectors $\bu, \bv$, which satisfy
    $\norm{\bu-\bu_{top}}\leq\delta_t$, $\norm{\bv-\bv_{top}}\leq\delta_t$, with time complexity $O\left( \frac{\norm{M}_F d \text{poly}\log{d}}{\sqrt{p} \epsilon \delta^2} \right)$.

    Note that in the matrix case, the linear optimization subproblem of the Frank-Wolfe framework
    \begin{equation}
        \min_{\hat{S} \in \cD}\langle{\hat{S}, M_t\rangle} \quad \text{s.t.} \quad \tr{\hat{S}}\le 1
    \end{equation}
    is equivalent to the following problem
    \begin{equation}
        \min_{\bx, \by\in \mathbb{R}^d} \ \bx^\top M_t \by \quad \text{s.t.} \quad \|\bx\|, \|\by\| \leq 1.
    \end{equation}
    Therefore, since the update direction $S=\bu^{\top}\bv $, the solution quality of the linear subproblem can be bounded with the solution quality of the equivalent problem, that is 
    \begin{align}
        \label{equ:MT0}
        \langle{S,M_t\rangle}-\min_{\hat{S} \in \cD}\langle{\hat{S}, M_t\rangle}&=\langle{\bu^{\top}\bv,M_t\rangle}-\min_{\hat{S} \in \cD}\langle{\hat{S}, M_t\rangle} \nonumber \\
        &=\bu^{\top}M_t\bv-\min_{\bx, \by\in \mathbb{R}^d} \ \bx^\top M_t \by \nonumber \\
        &= \bu^{\top}M_t\bv-\bu_{top}^{\top}M_t\bv_{top}.
    \end{align}
    
    Then by Lemma \ref{Lemma:QGSVE} and \ref{lemma:technical1}, we have
    \begin{equation}
        \label{equ:MT1}
        \abs{\bu^{\top}M_t\bv-\bu_{top}^{\top}M_t\bv_{top}}\le 2\sigma_1(M_t)\delta_t.
    \end{equation}

     By the update rule and the definition of the curvature, for each round $t$, we have
    \begin{equation}
        \label{equ:MT2}
        f(X^{(t+1)})=f((1-\gamma_t) X^{(t)}+\gamma_t S) \le f(X^{(t)})+\gamma_t \langle{S-X^{(t)}, M_t}\rangle + \frac{\gamma_t^2}{2}C_f
    \end{equation}

    Combining Inequality \ref{equ:MT0}, \ref{equ:MT1} and \ref{equ:MT2}, we have
    \begin{equation}
        f(X^{(t+1)}) \le f(X^{(t)})+\gamma_t (\min_{\hat{S} \in \cD}\langle{\hat{S}, M_t\rangle}- \langle{X^{(t)}, M_t)}\rangle) + 2\gamma_t\sigma_1(M_t)\delta_t + \frac{\gamma_t^2}{2}C_f.
    \end{equation}

    Let $h(X^{(t)}):=f(X^{(t)})-f(X^*)$, we have
    \begin{align}
        h(X^{(t+1)}) & \le h(X^{(t)})+\gamma_t (\min_{\hat{S} \in \cD}\langle{\hat{S}, M_t\rangle}- \langle{X^{(t)}, M_t)}\rangle) + 2\gamma_t\sigma_1(M_t)\delta_t + \frac{\gamma_t^2}{2}C_f \nonumber \\ 
        & \le h(X^{(t)})-\gamma_t h(X^{(t)}) +2\gamma_t\sigma_1(M_t)\delta_t + \frac{\gamma_t^2}{2}C_f \nonumber \\ & =  (1-\gamma_t) h(X^{(t)}) +2\gamma_t\sigma_1(M_t)\delta_t + \frac{\gamma_t^2}{2}C_f.
    \end{align}

    Set $\gamma_{t}=\frac{2}{t+2},\delta_t=\frac{\gamma_t C_f}{4\sigma_1(M_t)}$, we have

    \begin{equation}
        h(X^{(t+1)}) \le  \left(1-\frac{2}{t+2}\right) h(X^{(t)}) + \left(\frac{2}{t+2}\right)^2C_f.
    \end{equation}

    Using a similar induction as shown in \cite{jaggi2013revisiting} over $t$, we have
    \begin{equation}
        h(x^{(t)}) \le  \frac{4C_f}{t+2}.
    \end{equation}

    In summary, set $\gamma_{t}=\frac{2}{t+2},\delta_t=\frac{C_f}{2(t+2)\sigma_1(M_t)}$, after $T=\frac{4C_f}{\varepsilon}-2$ rounds, we have
    \begin{equation}
        f(\bx^{(T)})-f(x^*) \le  \varepsilon,
    \end{equation}
    for any $\varepsilon>0$.
    Since $\delta_t=\frac{C_f}{2(t+2)\sigma_1(M_t)}\ge \frac{C_f}{2(T+2)\sigma_1(M_t)}=\frac{\varepsilon}{2\sigma_1(M)}$, in each round, the time complexity of update computing is $O\left( \frac{\norm{M}_F \sigma_1^2(M) d \cdot \text{poly}\log{d}}{\sqrt{p} (\sigma_1(M)-\sigma_2(M)) \varepsilon^2} \right)$. Since $\norm{M}_F\le\sqrt{r}\sigma_1(M), p \ge \frac{1}{r}$, the time complexity is upper bounded by $O\left( \frac{r \sigma_1^3(M) d \cdot \text{poly}\log{d}}{(\sigma_1(M)-\sigma_2(M)) \varepsilon^2} \right)$, where $r$ is the rank of the gradient matrix.
\end{proof}

\begin{lemma} 
    \label{lemma:technical1}
    For any $\norm{\bx-\bx'}_2,\norm{\by-\by'}_2\le \delta<1$, $\norm{\bx},\norm{\by} \le 1$, we have
    \begin{equation}
        \abs{\bx^{\top}M\by-{\bx'}^{\top}M\by'}\le 2\sigma_1(M)\delta.
    \end{equation}
\end{lemma}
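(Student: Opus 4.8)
The plan is to prove this perturbation estimate for the bilinear form $\bx^{\top}M\by$ by a telescoping (add-and-subtract) argument, reducing everything to the operator-norm bound $\norm{M\bz}_2 \le \sigma_1(M)\norm{\bz}_2$, which holds because the spectral norm of $M$ equals its largest singular value $\sigma_1(M)$. This is a one-argument-at-a-time perturbation, so no spectral decomposition of $M$ is needed beyond this single inequality.

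First I would insert a cross term so that each of the two resulting pieces perturbs only one of the two arguments:
\begin{align}
\bx^{\top}M\by - {\bx'}^{\top}M\by' = \bx^{\top}M(\by - \by') + (\bx - \bx')^{\top}M\by'.
\end{align}
The triangle inequality then bounds the left-hand side by the sum of the absolute values of the two pieces, so it suffices to control each piece separately.

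Next, I would bound each piece by Cauchy--Schwarz followed by the operator-norm inequality:
\begin{align}
\abs{\bx^{\top}M(\by - \by')} &\le \norm{\bx}_2\,\norm{M(\by - \by')}_2 \le \sigma_1(M)\,\norm{\bx}_2\,\norm{\by - \by'}_2, \\
\abs{(\bx - \bx')^{\top}M\by'} &\le \norm{\bx - \bx'}_2\,\norm{M\by'}_2 \le \sigma_1(M)\,\norm{\bx - \bx'}_2\,\norm{\by'}_2.
\end{align}
Substituting the hypotheses $\norm{\bx}_2 \le 1$ and $\norm{\by'}_2 \le 1$ together with $\norm{\bx - \bx'}_2, \norm{\by - \by'}_2 \le \delta$ makes each term at most $\sigma_1(M)\delta$, and summing yields the claimed bound $2\sigma_1(M)\delta$.

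The only real bookkeeping issue — and the step I would be most careful about — is the choice of which cross term to insert: the decomposition must be arranged so that the argument whose norm is controlled is exactly the one appearing in each piece (here $\bx$ in the first term and $\by'$ in the second). In the application to \cref{equ:MT1} all four vectors $\bu,\bv,\bu_{top},\bv_{top}$ are unit vectors ($\bu,\bv$ arise from $\ell_2$ state-vector tomography and $\bu_{top},\bv_{top}$ are exact singular vectors), so every required norm bound holds automatically. The identification of the spectral norm with $\sigma_1(M)$ is the only other ingredient, and everything else is routine.
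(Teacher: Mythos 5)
Your proof is correct and is essentially the paper's own argument: both insert a cross term to split the difference into two single-argument perturbations and then apply Cauchy--Schwarz plus the spectral-norm bound $\norm{M\bz}_2\le\sigma_1(M)\norm{\bz}_2$, the only difference being the mirror-image choice of cross term (the paper uses $(\bx-\bx')^{\top}M\by+{\bx'}^{\top}M(\by-\by')$ and so needs $\norm{\by},\norm{\bx'}\le 1$, while you need $\norm{\bx},\norm{\by'}\le 1$). In both versions one of the required unit-norm bounds on a primed vector is not literally among the stated hypotheses, a point you correctly flag and which is harmless in the application since all four vectors there are unit vectors.
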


\begin{proof}
    Since $\bx^{\top}M\by-{\bx'}^{\top}M\by'=(x-x')^{\top}M\by+{\bx'}M(\by-\by')$, we have
    \begin{equation}
        \abs{\bx^{\top}M\by-{\bx'}^{\top}M\by'}\le \sigma_1(M)\norm{\bx-\bx'}_2\norm{\by}_2 + \sigma_1(M)\norm{\bx'}_2\norm{\by-\by'}_2.
    \end{equation}
    Thus, for any $\norm{\bx-\bx'}_2,\norm{\by-\by'}_2\le \delta<1$, $\norm{\bx},\norm{\by} \le 1$, we have
    \begin{equation}
        \abs{\bx^{\top}M\by-{\bx'}^{\top}M\by'}\le 2\sigma_1(M)\delta.
    \end{equation}
\end{proof}

\subsection{Proof of Lemma \ref{lemma:qpm}}
\label{proof:lemma:qpm}
\lemmaqpm*
\begin{proof}
Suppose $\norm{z_l-Mz_{l-1}} \le  \epsilon$  with $z_l=Mz_{l-1}$ for $l \in [L]$, and $z_0=x$, we have
\begin{align}
    \norm{z_1-Mx} &\le  \epsilon \nonumber\\
    \norm{z_2-M^2x} &\le \norm{z_2-Mz_1+Mz_1 - M^2x}\nonumber\\
    &\le\norm{z_2-Mz_1} + \norm{Mz_1 - M^2x}\nonumber\\
    &\le \epsilon + \norm{M(z_1-Mx)}\nonumber\\
    &\le \epsilon + \sigma_{\max}\norm{z_1-Mx}\nonumber\\
    &\le (1+\sigma_{\max})\epsilon\nonumber\\
    \norm{z_3-M^3x} &\le \norm{z_3-Mz_2+Mz_2 - M^3x}\nonumber\\
    &\le\norm{z_3-Mz_2} + \norm{Mz_2 - M^3x}\nonumber\\
    &\le \epsilon + \norm{M(z_2-M^2x)}\nonumber\\
    &\le \epsilon + \sigma_{\max}\norm{z_2-M^2x}\nonumber\\
    &\le \epsilon + \sigma_{\max}(1+\sigma_{\max})\epsilon\nonumber\\
    &\le (1+\sigma_{\max}+\sigma_{\max}^2)\epsilon.
\end{align}

We use $\sigma_{\min}\norm{x}\le\norm{Mx} \le \sigma_{\max}\norm{x}$, where $\sigma_{\max}=\max_{x\neq 0}{x^{\top}Mx}/\norm{x}^2$.
By induction, we have 
\begin{equation}
    \norm{z_L-M^Lx}\le \sum_{i \in [L]}\sigma_{\max}^{i-1}\epsilon =\frac{\sigma_{\max}^L-1}{\sigma_{\max}-1}\epsilon.
\end{equation}

Let ${\gamma'}_{\min}$ be the lower bound of $\norm{(M^{\top}M)^i \bz)}$ for all $i\in [k]$. As each multiplication requires time complexity of $\tilde{O}(\frac{1}{\gamma} \norm{M}_F \log(1/\epsilon))$ (Lemma \ref{lem:q_multiply}), $k$ steps of multiplication require time complexity of $\tilde{O}\left(\frac{k}{{\gamma'}_{\min}} \norm{M}_F \log(1/\epsilon)\right)$.
Furthermore, since
\begin{equation}
    \log{\frac{1-\sigma^k_1(M)}{1-\sigma_1(M)}}\le -\log{(1-\sigma_1(M))} \le \frac{1}{1-\sigma_1(M)},
\end{equation}
if we want $\norm{z_k-M^kx}\le\delta$, the time complexity will be $\tilde{O}\left(\frac{k\norm{M}_F}{(1-\sigma_1(M)){\gamma'}_{\min}} \log(1/\delta)\right)$.

\end{proof}

\subsection{Proof of Theorem \ref{theorem:qfwqpm}}
\label{proof:theoremqfwqpm}
\theoremqfwqpm*
\begin{proof}
Denote $(MM^{\top})^k\bb$ as $\bz_u$, $(M^{\top}M)^k\bb$ as $\bz_v$. For the quantum power method, we first use the Lemma \ref{lem:q_multiply} to construct a unitary $U_1$ which computes $k$ steps of multiplication: $U_1:\ket{\bb}\ket{\bb}\rightarrow \ket{\overline{\bz}_u}\ket{\overline{\bz}_v}$ with $\norm{\overline{\bz}_u-\bz_u}_2\le \delta$ and $\norm{\overline{\bz}_v-\bz_v}_2\le \delta$ (Lemma \ref{lemma:qpm}).
Then we tomography $\ket{\overline{\bz}_u}\ket{\overline{\bz}_v}$ to get $\bu,\bv$. 
Simalar to the proof of Theorem \ref{theorem:qfwqsvd}, our goal is to ensure $\abs{\frac{\bu^{\top}M\bv}{\norm{\bu}\norm{\bv}}-\sigma_1(M)}\le \varepsilon$.

First, we settle down $k=\frac{2C_0\sigma_1(M)\ln d}{\varepsilon}$ so that we have
\begin{equation}
    \abs{\frac{\bz_u^{\top}M\bz_v}{\norm{\bz_u}\norm{\bz_v}}-\sigma_1(M)}\le \varepsilon/2.
\end{equation}

Suppose $\sigma_1(M)<1$ and $\norm{(M^{\top}M)^i \bb)}\in [{\gamma'}_{\min},1]$ for $i=1,...,k$, after applying $k$ times of quantum matrix-vector multiplication ($U_1$) as described by \cref{lem:q_multiply}, we obtain $\ket{\overline{\bz}_u}\ket{\overline{\bz}_v}$ with $\norm{\overline{\bz}_u-\bz_u}_2\le \delta$ and $\norm{\overline{\bz}_v-\bz_v}_2\le \delta$ in time $T(U_1)=\tilde{O}\left(\frac{k\norm{M}_F}{(1-\sigma_1(M))\gamma'_{\min}}\log (1/\delta)\right)$. 
Using $U_1$, we can tomography $\ket{\overline{\bz}_u}\ket{\overline{\bz}_v}$ and obtain $\bu,\bv$ with $\norm{\bu-\overline{\bz}_u} \le \delta', \norm{\bv-\overline{\bz}_v} \le \delta'$ in time $O\left(\frac{T(U_1)d\log d}{(\delta')^2}\right)$. 
By the triangle inequality, we have 
\begin{equation}
    \norm{\bu-\bz_u}_2\le \delta+\delta'\le 1, \norm{\bv-\bz_v}_2\le \delta+\delta'\le 1
\end{equation}
Notice that 
\begin{align}
    \norm{\frac{\bu}{\norm{\bu}}-\frac{\bz_u}{\norm{\bz_u}}} & =\norm{\frac{\bu}{\norm{\bu}}-\frac{\bu}{\norm{\bz_u}}+\frac{\bu}{\norm{\bz_u}}-\frac{\bz_u}{\norm{\bz_u}}} \nonumber \\
    & \le \norm{\frac{\bu}{\norm{\bu}}-\frac{\bu}{\norm{\bz_u}}}+\norm{\frac{\bu}{\norm{\bz_u}}-\frac{\bz_u}{\norm{\bz_u}}} \nonumber \\
    & \le 2\frac{\norm{\bu-\bz_u}}{\norm{\bz_u}},
\end{align}
we have 
\begin{equation}
    \norm{\frac{\bu}{\norm{\bu}}-\frac{\bz_u}{\norm{\bz_u}}}\le 2\frac{\delta+\delta'}{{\gamma'}_{\min}}.
\end{equation} 
Similarly, we have
\begin{equation}
    \norm{\frac{\bv}{\norm{\bv}}-\frac{\bz_v}{\norm{\bz_v}}}\le 2\frac{\delta+\delta'}{{\gamma'}_{\min}}.
\end{equation}
Thus, we have 
\begin{align}
    \abs{\frac{\bu^{\top}M\bv}{\norm{\bu}\norm{\bv}}-\frac{\bz_u^{\top}M\bz_v}{\norm{\bz_u}\norm{\bz_v}}} & \le \abs{\frac{\bu^{\top}M\bv}{\norm{\bu}\norm{\bv}}-\frac{\bu^{\top}M\bz_v}{\norm{\bu}\norm{\bz_v}}} + \abs{\frac{\bu^{\top}M\bz_v}{\norm{\bu}\norm{\bz_v}}-\frac{\bz_u^{\top}M\bz_v}{\norm{\bz_u}\norm{\bz_v}}}
    \nonumber \\
    & \le \frac{\norm{M}\norm{\bv-\bz_v}}{\norm{\bz_v}} + \frac{\norm{M}\norm{\bu-\bz_u}}{\norm{\bz_u}} \nonumber \\
    & \le 4\frac{(\delta+\delta')\sigma_1(M)}{{\gamma'}_{\min}}.
\end{align}
The remaining proof is similar to that of Theorem \ref{theorem:qfwqsvd}.
Now we set $\delta=\delta'=\frac{\varepsilon {\gamma'}_{\min}}{16 \sigma_1(M)}$, $\abs{\frac{\bu^{\top}M\bv}{\norm{\bu}\norm{\bv}}-\frac{\bz_u^{\top}M\bz_v}{\norm{\bz_u}\norm{\bz_v}}}\le \varepsilon/2$. Therefore, $\abs{\frac{\bu^{\top}M\bv}{\norm{\bu}\norm{\bv}}-\sigma_1(M)}\le \varepsilon$. The time complexity is $\tilde{O}(T(U_1)d/(\delta')^2)=\tilde{O}(\frac{\sqrt{r}\sigma_1^4(M)d}{(1-\sigma_1(M)){\gamma'}_{\min}^3\varepsilon^3})$, where $r$ is the rank of the gradient matrix.
\end{proof}

\end{document}